\definecolor{memo}{RGB}{128,0,255}
\definecolor{gray}{RGB}{128,128,128}
\newtheorem{thm}{Theorem}
\newtheorem{remark}[thm]{Remark}
\newtheorem{lemma}[thm]{Lemma}
\newtheorem{proof}{Proof}
\newcommand{\hA}{\hat{A}}
\newcommand{\hB}{\hat{B}}
\newcommand{\hE}{\hat{E}}
\newcommand{\hQ}{\hat{Q}}
\newcommand{\hR}{\hat{R}}
\newcommand{\hU}{\hat{U}}
\newcommand{\hV}{\hat{V}}
\newcommand{\hW}{\hat{W}}
\newcommand{\hX}{\hat{X}}
\newcommand{\hY}{\hat{Y}}
\newcommand{\ha}{\hat{a}}
\newcommand{\hc}{\hat{c}}
\newcommand{\hx}{\hat{x}}
\newcommand{\hy}{\hat{y}}
\newcommand{\hz}{\hat{z}}
\newcommand{\hPi}{\hat{\Pi}}
\newcommand{\hPhi}{\hat{\Phi}}
\newcommand{\hrho}{\hat{\rho}}
\newcommand{\hsigma}{\hat{\sigma}}
\newcommand{\mD}{\mathcal{D}}
\newcommand{\mE}{\mathcal{E}}
\newcommand{\mG}{\mathcal{G}}
\newcommand{\mH}{\mathcal{H}}
\newcommand{\mI}{\mathcal{I}}
\newcommand{\mM}{\mathcal{M}}
\newcommand{\mP}{\mathcal{P}}
\newcommand{\mS}{\mathcal{S}}
\newcommand{\mT}{\mathcal{T}}
\newcommand{\mW}{\mathcal{W}}
\newcommand{\mX}{\mathcal{X}}
\newcommand{\mZ}{\mathcal{Z}}
\newcommand{\trho}{\tilde{\rho}}
\newcommand{\PS}{P_{\rm S}}
\newcommand{\ident}{\hat{1}}
\newcommand{\Real}{\mathbf{R}}
\newcommand{\POVM}{\mM}
\newcommand{\QED}{\hspace*{0pt}\hfill $\blacksquare$}
\newcommand{\argmax}{\mathop{\rm argmax}}
\newcommand{\Tr}{{\rm Tr}}
\def\gauss_sym#1{{\lfloor #1 \rfloor}}
\renewcommand{\c}{\circ}
\renewcommand{\ol}{\overline}
\newcommand{\opt}{\star}
\newcommand{\Realp}{\Real_+}
\newcommand{\g}{{(g)}}
\newcommand{\A}{{\rm A}}
\newcommand{\B}{{\rm B}}
\newcommand{\Pm}{{\rm P_m}}
\newcommand{\w}{{(\omega)}}
\newcommand{\tA}{\tilde{A}}
\newcommand{\PiA}{\Pi^{(\hA)}}
\newcommand{\hPiA}{\hPi^{(\hA)}}
\newcommand{\tAw}{\tA_\omega}
\newcommand{\hBw}{\hB^\w}
\newcommand{\hsw}{\hsigma_\omega}
\newcommand{\hxw}{\hx_\omega}
\newcommand{\dA}{d_\A}
\newcommand{\summ}{\sum_{m=0}^{M-1}}
\newcommand{\sumj}{\sum_{j=0}^{J-1}}
\newcommand{\sumg}{\sum_{g \in \mG}}
\newcommand{\sumr}{\sum_{r=0}^{R-1}}
\newcommand{\ts}{\tilde{s}}
\newcommand{\POVMA}{\POVM_\A}
\newcommand{\POVMB}{\POVM_\B}
\newcommand{\oPOVMA}{\ol{\POVMA}}
\newcommand{\dw}{d\omega}
\newcommand{\hAw}{\hA(\omega)}
\newcommand{\hAdw}{\hA(\dw)}
\newcommand{\pinc}{p_{\rm I}}
\newcommand{\TrB}{\Tr_\B}
\newcommand{\mHA}{\mH_\A}
\newcommand{\mHB}{\mH_\B}
\newcommand{\mSA}{\mS_\A}
\newcommand{\mSB}{\mS_\B}
\newcommand{\identA}{\ident_\A}
\newcommand{\identB}{\ident_\B}
\newcommand{\Prob}{\mP}
\newcommand{\inv}[1]{\overline{#1}}
\begin{document}

\preprint{APS/123-QED}

\title{Generalized bipartite quantum state discrimination problems with sequential measurements}

\affiliation{%
 Quantum Information Science Research Center, Quantum ICT Research Institute, Tamagawa University,
 Machida, Tokyo 194-8610, Japan
}%
\affiliation{%
 School of Information Science and Technology,
 Aichi Prefectural University,
 Nagakute, Aichi 480-1198, Japan
}%

\author{Kenji Nakahira}
\affiliation{%
 Quantum Information Science Research Center, Quantum ICT Research Institute, Tamagawa University,
 Machida, Tokyo 194-8610, Japan
}%

\author{Kentaro Kato}
\affiliation{%
 Quantum Information Science Research Center, Quantum ICT Research Institute, Tamagawa University,
 Machida, Tokyo 194-8610, Japan
}%

\author{Tsuyoshi \surname{Sasaki Usuda}}
\affiliation{%
 School of Information Science and Technology,
 Aichi Prefectural University,
 Nagakute, Aichi 480-1198, Japan
}%
\affiliation{%
 Quantum Information Science Research Center, Quantum ICT Research Institute, Tamagawa University,
 Machida, Tokyo 194-8610, Japan
}%

\date{\today}

\begin{abstract}
 We investigate an optimization problem of finding quantum sequential measurements,
 which forms a wide class of state discrimination problems
 with the restriction that only sequential measurements are allowed.
 Sequential measurements from Alice to Bob on a bipartite system are considered.
 Using the fact that the optimization problem can be formulated as
 a problem with only Alice's measurement and is convex programming,
 we derive its dual problem and necessary and sufficient conditions for an optimal solution.
 In the problem we address, the output of Alice's measurement can be infinite or continuous,
 while sequential measurements with a finite number of outcomes are considered.
 It is shown that there exists an optimal sequential measurement in which Alice's measurement
 with a finite number of outcomes as long as a solution exists.
 We also show that if the problem has a certain symmetry,
 then there exists an optimal solution with the same type of symmetry.
 A minimax version of the problem is considered,
 and necessary and sufficient conditions for a minimax solution are derived.
 An example in which our results can be used to obtain an analytical expression
 for an optimal sequential measurement is finally provided.
\end{abstract}

\pacs{03.67.Hk}
\maketitle

\section{Introduction} \label{sec:intro}

The study of the power and limitations of local discrimination of quantum states
has attracted considerable interest in quantum information theory in recent years.
In particular, sequential measurements,
which can be implemented using local measurements and one-way classical communication
(one-way LOCC), have been widely investigated.
Sequential measurements are relatively easy to implement with current technology;
for example, when two or more parties receive quantum states at different times,
measurements in which individual measurements are performed sequentially
would be desirable in practical implementations of quantum measurements.
However, it is well known that orthogonal quantum states shared by separated parties
may not be perfectly distinguished when only sequential measurements are allowed,
while they can be perfectly distinguished by a global measurement.
This implies that sequential measurements are less powerful than global measurements
for quantum state discrimination.
An important question that arises in studies of this kind is
how well one can distinguish between given quantum states by a sequential measurement.

Many studies have been developed to tackle the problem of which sets of orthogonal states
are distinguishable when only sequential measurements are allowed
(e.g., \cite{Wal-Sho-Har-Ved-2000,Gro-Vai-2001,Wal-Har-2002,Fan-2004,Nat-2005,Ban-Gho-Kar-2011,Zha-Wen-Gao-Tia-2014}).
There have also been several investigations of a sequential measurement
realizing a measurement that maximizes the average success probability
(called a minimum-error measurement)
\cite{Vir-Sac-Ple-Mar-2001,Bro-Mei-1996,Aci-Bag-Bai-Mas-Mun-2004,Ass-Poz-Pie-2011}.
It has also been reported that a measurement that maximizes the average success probability
with no error at the expense of allowing for a certain fraction of inconclusive (failure) results
(called an optimal unambiguous measurement)
can be realized by a sequential measurement for binary pure states
\cite{Che-Yan-2001,Che-Yan-2002,Ji-Cao-Yin-2005}.
However, these results are only applicable to a special class of quantum states.
Investigations applicable to a broad class of quantum states would be required.

In the scenario in which all quantum measurements are allowed,
optimal measurement strategies have been investigated under various criteria,
such as the Bayes criterion \cite{Hol-1973,Yue-Ken-Lax-1975,Hel-1976}
and the minimax criterion \cite{Hir-Ike-1982,Dar-Sac-Kah-2005,Kat-2012}.
A measurement strategy that allows for inconclusive results
has also been well studied.
The most well-known example along this line is an optimal unambiguous measurement
\cite{Iva-1987,Die-1988,Per-1988}.
Other examples are a measurement that maximizes the average success probability
with a fixed average inconclusive probability,
denoted as an optimal inconclusive measurement
\cite{Che-Bar-1998-inc,Eld-2003-inc,Fiu-Jez-2003},
and a measurement that maximizes the average success probability
under the condition that the average error probability should not exceed a certain error,
denoted as an optimal error margin measurement
\cite{Tou-Ada-Ste-2007,Hay-Has-Hor-2008,Sug-Has-Hor-Hay-2009}.
Recently, a generalized state discrimination problem,
which is applicable to the above mentioned criteria, has also been presented \cite{Nak-Kat-Usu-2015-general}.
From these studies, some properties of optimal measurements in the above criteria,
such as necessary and sufficient conditions for optimality,
have been derived.
By contrast, in the case of a sequential measurement,
very few studies of an optimal sequential measurement
for a strategy other than
the minimum error strategy and the unambiguous strategy have been reported
(e.g, \cite{Ban-Yam-Hir-1997,Owa-Hay-2008,Nak-Usu-2012-receiver,Nak-Usu-2016-LOCC,Ros-Mar-Gio-2017-capacity}).

More recently, Croke {\it et al.} have derived a necessary and sufficient condition
for a sequential measurement to maximize the average success probability
(we call such a measurement a minimum-error sequential measurement)
and used it to prove optimality of a candidate solution \cite{Cro-Bar-Wei-2017}.
Also, the authors have derived
the dual problem of the problem of finding a minimum-error sequential measurement
and utilized it to compute numerical solutions \cite{Nak-Kat-Usu-2017-Dolinar}.
These results are applicable to arbitrary bipartite quantum states;
however, only a few properties of a minimum-error sequential measurement have ever been reported.
In addition, these methods cannot directly be applied to other criteria.

In this paper, we address a sequential-measurement version of the generalized
state discrimination problem described in Ref.~\cite{Nak-Kat-Usu-2015-general}.
Similarly as in this reference, this problem includes problems with various criteria.
We consider sequential measurements from Alice to Bob on a bipartite system.
Since the problem of finding an optimal sequential measurement is
much more complex than that of finding an optimal global measurement,
the results proposed in Ref.~\cite{Nak-Kat-Usu-2015-general}
cannot readily be applied to this problem.
However, we can see that the entire set of sequential measurements is convex;
thus, the generalized state discrimination problem with sequential measurements
can be formulated as a convex programming problem.
Useful results available in convex programming help us to further understand
an optimal sequential measurement.
In the problem we address, sequential measurements with a finite number of outcomes are considered,
whereas the output of Alice's measurement can be infinite or continuous.
We show that there always exists an optimal sequential measurement in which Alice's measurement
with a finite number of outcomes as long as a solution exists.
We also derive the dual problem of the original problem
and necessary and sufficient conditions for an optimal solution.
These properties would be useful to obtain analytical and numerical expressions for
an optimal sequential measurement.

In Sec.~\ref{sec:gen}, we discuss the formulation of sequential measurements
and provide a sequential-measurement version of the generalized state discrimination problem.
In Sec.~\ref{sec:gen_opt}, its dual problem is derived.
Then, we show that the optimal values of the primal and dual problems are the same.
Necessary and sufficient conditions for an optimal solution is also obtained.
In Sec.~\ref{sec:sym}, we show that if a problem has a certain symmetry,
then there exists an optimal solution with the same type of symmetry.
In Sec.~\ref{sec:minimax}, we discuss a sequential-measurement version of
the generalized minimax problem described in Ref.~\cite{Nak-Kat-Usu-2015-general}.
We also derive necessary and sufficient conditions for a minimax solution.
In Sec.~\ref{sec:example}, as an example, our results are applied to the problem of
finding an optimal inconclusive sequential measurement.
An analytical expression of an optimal inconclusive sequential measurement for double trine states
is also derived.
This example illustrates that our results can be used to obtain an analytical solution
to at least an easy problem.

\section{Generalized optimal sequential measurement} \label{sec:gen}

\subsection{Sequential measurement}

We consider a composite system, $\mH = \mHA \otimes \mHB$, of two subsystems, Alice and Bob.
Let $\mS$ and $\mS^+$ be, respectively, the entire sets of
Hermitian operators and positive semidefinite operators on $\mH$.
$\mS_k$ and $\mS_k^+$ $(k \in \{ \A, B \})$ are defined in the same way
with $\mH$ replaced by $\mH_k$.
Also, let $\Real$ and $\Real_+$ be, respectively, the entire sets of real numbers
and nonnegative real numbers,
and $\mI_N \equiv \{ 0, 1, \cdots, N-1 \}$.
Let $\ident$, $\identA$, and $\identB$ be, respectively,
the identity operators on $\mH$, $\mHA$, and $\mHB$.
We denote $\{ tb_n \}$ and $\{ b_n + b'_n \}$ with $t \in \Real$ and
$b, b' \in \Real^N$ (or $b, b' \in \Realp^N$) as $tb$ and $b + b'$, respectively.
$\hx \ge \hy$ with Hermitian operators $\hx$ and $\hy$ denotes that $\hx - \hy$ is positive semidefinite.

Let us consider a sequential measurement on $\mH$.
Alice first performs a measurement, which is represented by
a positive operator valued measure (POVM) $\{ \hA_j \in \mSA^+ \}_j$,
the output of which can be infinite (or continuous).
The measurement result $j$ is sent to Bob.
Then, Bob chooses a measurement $\{ \hB^{(j)}_m \in \mSB^+ \}_{m=0}^{M-1}$ depending on $j$,
and obtains the outcome $m \in \mI_M$,
which represents the final measurement result.
The measurement on the joint system is given by the POVM
$\{ \hPi_m = \sum_j \hA_j \otimes \hB^{(j)}_m \}_{m=0}^{M-1}$.

We can consider this sequential measurement from a different viewpoint
\cite{Nak-Kat-Usu-2017-Dolinar}.
Let $\POVMB$ be the entire set of allowed Bob's measurements
and $\Omega$ be an isomorphic set of $\POVMB$.
Each element of $\POVMB$ is uniquely labeled by an index $\omega \in \Omega$;
we denote Bob's measurement corresponding to $\omega \in \Omega$ as $\hBw \equiv \{ \hBw_m \}_{m=0}^{M-1}$.
Alice first performs a measurement, $\hA$, with continuous outcomes in $\Omega$,
She sends the result $\omega \in \Omega$ to Bob.
He performs the corresponding measurement $\hBw$.
Alice's POVM $\hA$ uniquely determines this sequential measurement,
which is denoted as $\PiA \equiv \{ \hPiA_m \}_{m=0}^{M-1}$ with
\begin{eqnarray}
 \hPiA_m &\equiv& \int_{\Omega} \hAdw \otimes \hBw_m. \label{eq:PiA}
\end{eqnarray}
We can interpret that Alice's POVM, $\hA$, includes all the information regarding
the measurements Bob should perform.
Let $\POVMA$ be the entire set of Alice's POVMs.
Any sequential measurement can be denoted as $\PiA$ with $\hA \in \POVMA$.
In this formulation, the problem of finding an optimal sequential measurement
can be formulated as an optimization problem with only $\hA$.

Let $\sigma(\Omega)$ be the sigma algebra of all measurable subsets of $\Omega$.
$\hA \in \POVMA$ is a mapping of $\sigma(\Omega)$ into $\mSA^+$,
which satisfies
\begin{enumerate}[(1)]
 \setlength{\itemsep}{0pt}
 \setlength{\parskip}{0pt}
 \item positivity: $\hPhi(E) \ge 0, ~ \forall E \in \sigma(\Omega)$,
 \item countable additivity: $\hPhi(\cup_k E_k) = \sum_k \hPhi(E_k)$ with mutually disjoint
       $\{ E_k \} \subset \sigma(\Omega)$,
 \item normalization: $\hPhi(\Omega) = \identA$.
\end{enumerate}
Let $\oPOVMA$ be the entire set of (not necessarily normalized) mappings
$\hA: \sigma(\Omega) \to \mSA^+$ satisfying
the conditions (1) and (2).
Obviously, $\oPOVMA \supset \POVMA$ holds.

It should be noted that $\POVMB$ is not necessarily the entire set
of POVMs on $\mHB$; for example, $\mHB$ can be a composite system of $n$ subsystems,
and $\POVMB$ can be the entire set of sequential measurements (or two-way LOCC measurements) on $\mHB$.

\subsection{State discrimination problem}

Here, we consider a sequential-measurement version of the optimization problem
described in Ref.~\cite{Nak-Kat-Usu-2015-general},
which is expressed as
\begin{eqnarray}
 \begin{array}{lll}
  {\rm P:} & {\rm maximize} & \displaystyle f(\hA) \equiv \summ \Tr \left[ \hc_m\hPiA_m \right] \\
  & {\rm subject~to} & \hA \in \POVMA^\c, \\
 \end{array} \label{eq:primal}
\end{eqnarray}
where
\begin{eqnarray}
 \POVMA^\c &\equiv& \left\{ \hA \in \POVMA : \eta_j(\hA) \le 0, ~ \forall j \in \mI_J \right\}, \nonumber \\
 \eta_j(\hA) &\equiv& \summ \Tr \left[ \ha_{j,m} \hPiA_m \right] - b_j,
  \label{eq:POVMc}
\end{eqnarray}
$\hc_m \in \mS$, $\ha_{j,m} \in \mS$, and $b_j \in \Real$.
$J$ is a nonnegative integer that represents the number of constraints.

As an example, let us consider the problem of obtaining a minimum-error sequential measurement
for the states $\{ \trho_m \}_{m=0}^{M-1}$ with equal prior probabilities $\{ \xi_m \}_{m=0}^{M-1}$,
which is expressed as
\begin{eqnarray}
 \begin{array}{ll}
  {\rm maximize} & \displaystyle \summ \Tr \left[ \hrho_m \hPiA_m \right] \\
  {\rm subject~to} & \hA \in \POVMA, \\
 \end{array} \label{eq:primal_MEM}
\end{eqnarray}
where $\hrho_m = \xi_m \trho_m$.
This problem is obtained by substituting $\hc_m = \hrho_m$ and $J = 0$ into Problem~P.
Problem~P can express a large class of problems;
one can find some examples in Subsec. II.B of Ref.~\cite{Nak-Kat-Usu-2015-general}
(also, see Sec.~\ref{sec:example} of this paper).

We can easily verify that $\POVMA^\c$ is convex,
and thus Problem~P is a convex programming.
Let $f^\opt$ be the optimal value of Problem~P.
$f^\opt$ is regarded as $-\infty$ if the feasible set, $\POVMA^\c$, is empty.
Note that an equality constraint, $\eta_j(\hA) = 0$, can be replaced
by two inequality constraints, $\eta_j(\hA) \le 0$ and $-\eta_j(\hA) \le 0$.

\section{Optimal solution to generalized problem} \label{sec:gen_opt}

\subsection{Dual problem}

We will derive the dual problem of Problem~P, which is formulated as follows:
\begin{eqnarray}
 \begin{array}{lll}
  {\rm DP:} & {\rm minimize} & \displaystyle s(\hX, \lambda) \equiv \Tr~\hX + \sumj \lambda_j b_j \\
  & {\rm subject~to} & (\hX, \lambda) \in \mX^\c \\
 \end{array}
 \label{eq:dual}
\end{eqnarray}
with variables $\hX$ and $\lambda$, where
\begin{eqnarray}
 \mX^\c &\equiv& \left\{ (\hX, \lambda) \in \mX : \hX \ge \hsw(\lambda),
               ~ \forall \omega \in \Omega \right\}, \nonumber \\
 \mX &\equiv& \mSA \otimes \Realp^J, \nonumber \\
 \hsw(\lambda) &\equiv& \TrB \summ \hz_m(\lambda) \hBw_m, \nonumber \\
 \hz_m(\lambda) &\equiv& \hc_m - \sumj \lambda_j \ha_{j,m}. \label{eq:z}
\end{eqnarray}
$\TrB$ is the partial trace with respect to the system $\mHB$.
Let $s^\opt$ be the optimal value of Problem~DP.

We define the following Lagrangian for Problem~P as:
\begin{eqnarray}
 L(\hA, \hX, \lambda) &\equiv& f(\hA) + \Tr[ \hX [\identA - \hA(\Omega)]] \nonumber \\
 & & \mbox{} - \sumj \lambda_j \eta_j(\hA), \label{eq:L}
\end{eqnarray}
where $L(\hA, \hX, \lambda)$ is a function of $\hA \in \oPOVMA$ and $(\hX, \lambda) \in \mX$.
If $\hA(\Omega) \neq \identA$ holds, then
there exists a vector $\ket{x}$ satisfying $\braket{x | [\identA - \hA(\Omega)] | x} \neq 0$;
taking the limit $t \to \infty$ or $t \to -\infty$
yields $L(\hA, t \ket{x}\bra{x}, \lambda) \to - \infty$.
Similarly, if there exists $j \in \mI_J$ such that $\eta_j(\hA) > 0$,
then $L(\hA, \hX, \lambda) \to -\infty$ when $\lambda_j \to \infty$.
Thus, if $\hA \not\in \POVMA^\c$ holds, then
there exists $(\hX, \lambda) \in \mX$ such that $L(\hA, \hX, \lambda) \to - \infty$.
On the other hand, if $\hA \in \POVMA^\c$ holds, then
$L(\hA, \hX, \lambda) \ge f(\hA)$ holds and the equality holds if $\lambda = 0$ holds.
Therefore, we obtain
\begin{eqnarray}
 \max_{\hA \in \oPOVMA} \min_{(\hX,\lambda) \in \mX} L(\hA, \hX, \lambda)
  &=& \max_{\hA \in \POVMA^\c} \min_{(\hX,\lambda) \in \mX} L(\hA, \hX, \lambda)
  \nonumber \\
 &=& \max_{\hA \in \POVMA^\c} f(\hA) = f^\opt.
  \label{eq:dual_max_min}
\end{eqnarray}

Let
\begin{eqnarray}
 \ts(\hX, \lambda) &\equiv& \max_{\hA \in \oPOVMA} L(\hA, \hX, \lambda). \label{eq:ts}
\end{eqnarray}
Substituting $F = f$, $x = \hA$, and $y = (\hX, \lambda)$ into the following formula:
\begin{eqnarray}
 \min_y \max_x F(x,y) \ge \max_x \min_y F(x,y)
\end{eqnarray}
and using Eqs.~\eqref{eq:dual_max_min} and \eqref{eq:ts} yields
\begin{eqnarray}
 \min_{(\hX,\lambda) \in \mX} \ts(\hX, \lambda) &\ge& f^\opt. \label{eq:s_P}
\end{eqnarray}
Let us consider the problem of finding $(\hX, \lambda) \in \mX$ that
minimizes $\ts(\hX, \lambda)$, which can be regarded as a dual problem of Problem~P.
From Eqs.~\eqref{eq:dual}--\eqref{eq:L}, $L(\hA, \hX, \lambda)$ is rewritten as
\begin{eqnarray}
 L(\hA, \hX, \lambda) &=& s(\hX, \lambda) + \int_\Omega \Tr [\hsw(\lambda) - \hX] \hAdw.
  \label{eq:L2}
\end{eqnarray}
If $(\hX, \lambda) \not\in \mX^\c$ holds (i.e., there exists $\omega$
such that $\hX \not\ge \hsw(\lambda)$), then
there exists a vector $\ket{x} \in \mHA$ such that $\braket{x | [\hX - \hsw(\lambda)] | x} < 0$;
substituting $\hAw = t \ket{x} \bra{x}$ into Eq.~\eqref{eq:L2}
and taking the limit $t \to \infty$ gives $L(\hA, \hX, \lambda) = \infty$.
Thus, from Eq.~\eqref{eq:ts}, $\ts(\hX, \lambda) = \infty$ holds.
On the other hand, if $(\hX, \lambda) \in \mX^\c$ holds, then
$L(\hA, \hX, \lambda)$ reaches its maximum value of $s(\hX, \lambda)$
when $\hA(E) = 0$ for any $E \subseteq \Omega$,
and thus $\ts(\hX, \lambda) = s(\hX, \lambda)$ holds.
Therefore, we obtain
\begin{eqnarray}
 \min_{(\hX,\lambda) \in \mX} \ts(\hX, \lambda)
  &=& \min_{(\hX, \lambda) \in \mX^\c} s(\hX, \lambda),
\end{eqnarray}
which indicates that the dual problem can be rewritten as Problem~DP.
From Eq.~\eqref{eq:s_P}, $s^\opt \ge f^\opt$ holds.

In a convex optimization problem,
the optimal values of the primal and dual problems are generally not the same.
However, as stated in the following theorem,
the optimal values of Problems~P and DP are always the same
(proof in Appendix~\ref{append:zero_gap}).
\begin{thm} \label{thm:zero_gap}
 $s^\opt = f^\opt$ always holds.
\end{thm}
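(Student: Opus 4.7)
The weak direction $s^\opt \ge f^\opt$ is already in hand from Eq.~\eqref{eq:s_P}, so the task is to establish $s^\opt \le f^\opt$. My plan is a finite-dimensional separating-hyperplane argument. Although $\hA$ ranges over the infinite-dimensional cone $\oPOVMA$, the three linear maps $\hA \mapsto \hA(\Omega)$, $\hA \mapsto \tilde\eta_j(\hA) \equiv \summ \Tr[\ha_{j,m}\hPiA_m]$, and $\hA \mapsto f(\hA)$ all take values in the finite-dimensional space $\mSA \oplus \Real^J \oplus \Real$, where a standard separation theorem will apply.

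Concretely, I would introduce the convex cone
\begin{eqnarray*}
 K &\equiv& \{(\hA(\Omega), \tilde\eta_0(\hA), \ldots, \tilde\eta_{J-1}(\hA), f(\hA)) : \hA \in \oPOVMA\}
\end{eqnarray*}
together with the target set $T_\alpha \equiv \{\identA\} \times \prod_{j=0}^{J-1}(-\infty, b_j] \times [\alpha, \infty)$ for some $\alpha > f^\opt$. By the definition of $f^\opt$, these two convex sets are disjoint, so a separating hyperplane in $\mSA \oplus \Real^J \oplus \Real$ supplies a nonzero triple $(\hX_0, \lambda_0, \mu_0)$. The unbounded directions within $T_\alpha$ force $\lambda_0 \le 0$ and $\mu_0 \ge 0$, and the cone structure of $K$ (together with $0 \in K$) forces the separating constant to vanish. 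After relabeling $\hX = -\hX_0$, $\lambda = -\lambda_0 \ge 0$, and normalizing $\mu_0 = 1$, the separation inequality on $K$ reads
\begin{eqnarray*}
 f(\hA) &\le& \Tr[\hX \hA(\Omega)] + \sumj \lambda_j \tilde\eta_j(\hA), \quad \forall \hA \in \oPOVMA,
\end{eqnarray*}
which, on plugging in point-mass measures concentrated at a single $\omega_0 \in \Omega$ with operator weight $\ket{x}\bra{x}$, is equivalent to dual feasibility $\hX \ge \hsw(\lambda)$ for every $\omega$. The opposite side of the separator, evaluated at $(\identA, b_0, \ldots, b_{J-1}, \alpha) \in T_\alpha$, then yields $s(\hX, \lambda) \le \alpha$, so $s^\opt \le \alpha$; letting $\alpha \downarrow f^\opt$ closes the gap.

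The main obstacle will be ruling out the degenerate case $\mu_0 = 0$, in which the separator ignores $f$ and the bound above becomes vacuous. Since no explicit Slater-type hypothesis is stated for Problem~P, I would address this by a perturbation-and-limit device: apply the separation to a regularized problem (for instance, with constraints tightened to $\eta_j(\hA) \le -\delta$), argue that a suitable normalization of the separator excludes $\mu_0 = 0$ in the perturbed problem, and then extract a subsequential limit of the dual variables as $\delta \to 0^+$, using boundedness of the relevant sublevel sets of $s$. The infeasible case $f^\opt = -\infty$ requires parallel treatment: disjointness of $K$ from $T_\alpha$ for arbitrarily negative $\alpha$ produces a recession direction of the dual cone along which $s$ diverges to $-\infty$, forcing $s^\opt = -\infty$ as well.
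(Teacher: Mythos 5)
Your overall plan---map the primal cone into the finite-dimensional space $\mSA \oplus \Real^J \oplus \Real$ via the constraint and objective functionals, separate its image $K$ from a target set, and read off the dual variables from the separating functional---is a genuinely different route from the paper's. The paper instead separates the origin from a set built out of the \emph{dual} data, namely $\mZ$ consisting of the pairs $(\{\hsw(\lambda)+\hxw-\hX\}_{\omega\in\Omega},\, s(\hX,\lambda)-u)$ with $\hxw\ge 0$ and $u<s^\opt$, applies the geometric Hahn--Banach theorem in the (infinite-dimensional) space of $\mSA$-valued families indexed by $\Omega$, and reads off a \emph{primal} feasible $\hA$ with $f(\hA)\ge s^\opt$. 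Crucially, in that construction the non-degeneracy of the separator (the coefficient $\alpha>0$ there) is forced by the structure of $\mZ$ itself: the variable $\hX$ enters both the operator components (as $-\hX$) and the scalar component (through $\Tr~\hX$ in $s(\hX,\lambda)$), so a separator with $\alpha=0$ is shown to vanish identically. No constraint qualification is ever invoked.

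That is exactly where your proposal has a genuine gap. You correctly flag that ruling out $\mu_0=0$ is the main obstacle, but the perturbation device you sketch does not close it. Problem~P carries no Slater hypothesis, and the paper explicitly allows an equality constraint $\eta_j(\hA)=0$ to be encoded as the pair $\eta_j(\hA)\le 0$ and $-\eta_j(\hA)\le 0$; for such a pair the tightened constraints $\eta_j(\hA)\le-\delta$ and $-\eta_j(\hA)\le-\delta$ are simultaneously infeasible for every $\delta>0$, so your regularized problem has no feasible point at all and produces no separator to pass to the limit. Even in cases where the tightened problem is feasible, the final step needs a convergent subsequence of dual variables as $\delta\to 0^+$, and the ``boundedness of the relevant sublevel sets of $s$'' you invoke is precisely what fails without a constraint qualification: the multipliers of the perturbed problems can diverge. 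The infeasible case is likewise only gestured at, whereas the paper must strictly separate the compact set of constraint-value vectors from the nonpositive orthant to obtain a uniform gap $\epsilon>0$ and then exhibit an explicit ray $[\hY(t,q),tq]\in\mX^\c$ along which $s\to-\infty$. To make your route rigorous you would need either to prove closedness of the image cone $K$ (equivalently, that the value function $f^\bullet(\beta)$ of Appendix~B is closed at $\beta=b$) or to find a structural argument for $\mu_0>0$ of the kind the paper builds into its choice of $\mZ$; as written, the decisive step is asserted rather than proved.
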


\subsection{Conditions for an optimal solution}

In generalized state discrimination problems with no restriction on measurements,
necessary and sufficient conditions for an optimal solution have been derived
\cite{Nak-Kat-Usu-2015-general}.
In a similar manner, we can derive necessary and sufficient conditions
for an optimal solution to Problem~P using its dual problem.
\begin{thm} \label{thm:condition}
 Let $\hA$ be a POVM satisfying $\hA \in \POVMA^\c$.
 The following statements are all equivalent.
 \begin{enumerate}[(1)]
  \setlength{\parskip}{0cm}
  \setlength{\itemsep}{0cm}
  \item $\hA$ is an optimal solution to Problem~P.
  \item There exists $(\hX, \lambda) \in \mX^\c$ such that
		\begin{eqnarray}
		 [\hX - \hsw(\lambda)] \hAw &=& 0, ~ \forall \omega \in \Omega, \label{eq:cond_XCA} \\
		 \lambda_j \eta_j(\hA) &=& 0, ~ \forall j \in \mI_J. \label{eq:cond_b}
		\end{eqnarray}
  \item There exists $\lambda \in \Realp^J$ such that
		\begin{eqnarray}
		 \int_\Omega \hsigma_{\omega'}(\lambda) \hA(\dw') &\ge& \hsw(\lambda),
          ~ \forall \omega \in \Omega, \label{eq:cond_Xlambda} \\
		 \lambda_j \eta_j(\hA) &=& 0, ~ \forall j \in \mI_J. \label{eq:cond_b2}
		\end{eqnarray}
 \end{enumerate}
 Moreover, if Condition~(2) holds, then $(\hX, \lambda)$ is an optimal solution to Problem~DP.
\end{thm}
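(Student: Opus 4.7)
The plan is to organize the proof around a single Lagrangian identity that converts the strong-duality result of Theorem~\ref{thm:zero_gap} into complementary-slackness conditions. For any $\hA \in \POVMA^\c$ and any $(\hX, \lambda) \in \mX^\c$, combining the definition of $L$ in Eq.~\eqref{eq:L} under $\hA(\Omega) = \identA$ with the rewriting in Eq.~\eqref{eq:L2} gives the key identity
\begin{eqnarray*}
 s(\hX, \lambda) - f(\hA)
 &=& \int_\Omega \Tr\bigl\{[\hX - \hsw(\lambda)]\,\hAdw\bigr\} \\
 & & \mbox{} - \sumj \lambda_j \eta_j(\hA),
\end{eqnarray*}
in which both right-hand-side terms are nonnegative: the integrand is the trace of a product of two positive semidefinite operators, and each summand $-\lambda_j \eta_j(\hA) \ge 0$ because $\lambda_j \ge 0$ and $\eta_j(\hA) \le 0$. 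This identity is the workhorse of the argument.

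For $(2) \Rightarrow (1)$ and the ``moreover'' claim, Eqs.~\eqref{eq:cond_XCA}--\eqref{eq:cond_b} force both nonnegative terms in the identity to vanish, so $s(\hX, \lambda) = f(\hA)$. Sandwiching with $f(\hA) \le f^\opt = s^\opt \le s(\hX, \lambda)$, where the middle equality is Theorem~\ref{thm:zero_gap}, forces equality throughout, giving simultaneously the optimality of $\hA$ in Problem~P and of $(\hX, \lambda)$ in Problem~DP. For the converse $(1) \Rightarrow (2)$, I would invoke Theorem~\ref{thm:zero_gap} together with the existence of a dual optimizer $(\hX, \lambda) \in \mX^\c$ attaining $s^\opt$; applied to this pair, the identity yields $s(\hX, \lambda) - f(\hA) = s^\opt - f^\opt = 0$, so both nonnegative terms must vanish. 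Termwise nonnegativity then gives Eq.~\eqref{eq:cond_b}, while pointwise nonnegativity of $\Tr\{[\hX - \hsw(\lambda)]\hAw\}$ combined with its vanishing integral forces $[\hX - \hsw(\lambda)]^{1/2} \hAw [\hX - \hsw(\lambda)]^{1/2} = 0$ for essentially every $\omega$, which is equivalent to Eq.~\eqref{eq:cond_XCA}.

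For $(2) \Leftrightarrow (3)$, the direction $(2) \Rightarrow (3)$ uses the same $\lambda$: complementarity $[\hX - \hsw(\lambda)] \hAw = 0$ gives $\hX \hAw = \hsw(\lambda) \hAw$; integrating over $\omega$ and using $\hA(\Omega) = \identA$ yields $\hX = \int_\Omega \hsigma_{\omega'}(\lambda) \hA(\dw')$, so the constraint $\hX \ge \hsw(\lambda)$ is exactly Eq.~\eqref{eq:cond_Xlambda}. For $(3) \Rightarrow (2)$, I would define $\hX \equiv \int_\Omega \hsigma_{\omega'}(\lambda) \hA(\dw')$; Hermiticity follows because Eq.~\eqref{eq:cond_Xlambda} lower-bounds it by a Hermitian operator, and $(\hX, \lambda) \in \mX^\c$ is immediate. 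Tracing the defining expression gives $\int_\Omega \Tr\{[\hX - \hsw(\lambda)]\hAdw\} = \Tr\,\hX - \Tr\,\hX = 0$, and the positive-semidefinite argument above then recovers Eq.~\eqref{eq:cond_XCA}, while $\lambda_j \eta_j(\hA) = 0$ is given.

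The main obstacle is the direction $(1) \Rightarrow (2)$, which hinges on the existence of a dual optimizer, not supplied by Theorem~\ref{thm:zero_gap} alone. I would handle this either by extracting attainment from the compactness and lower-semicontinuity structure underlying the proof of that theorem, or by first proving $(1) \Rightarrow (3)$ directly---constructing $\hX$ from $\hA$ as in the $(3) \Rightarrow (2)$ step and verifying the required operator bound via strong duality---and then deriving $(2)$. A secondary technical point is that the pointwise-in-$\omega$ operator equations must be extracted from integrated identities in the infinite/continuous-outcome setting, which is a routine but not vacuous measure-theoretic argument.
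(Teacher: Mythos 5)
Your proposal is correct and follows essentially the same route as the paper: the identity you build from Eqs.~\eqref{eq:L} and \eqref{eq:L2}, the appeal to Theorem~\ref{thm:zero_gap}, and the extraction of $[\hX - \hsw(\lambda)]\hAw = 0$ from the vanishing trace of a product of positive semidefinite operators are exactly the paper's steps, the only (immaterial) difference being that you close the cycle via $(3) \Rightarrow (2) \Rightarrow (1)$ where the paper proves $(3) \Rightarrow (1)$ by comparing $f(\hA)$ against $f(\hA')$ for an arbitrary feasible $\hA'$. The dual-attainment point you flag in $(1) \Rightarrow (2)$ is genuine but is equally unaddressed in the paper, which simply takes an optimal $(\hX, \lambda)$ for Problem~DP without arguing that the infimum is attained.
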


From Eq.~\eqref{eq:cond_XCA}, for any $\omega \in \Omega$,
the kernel of $\hX - \hsw(\lambda)$ includes the support of $\hAw$.
Note that Condition~(3) in the case of the problem of obtaining a minimum-error sequential measurement
is given in Ref.~\cite{Cro-Bar-Wei-2017}.

\begin{proof}
 We will show (1) $\Rightarrow$ (2), (2) $\Rightarrow$ (3), and (3) $\Rightarrow$ (1) in this order.
 After that, we will show that $(\hX, \lambda)$ is an optimal solution to Problem~DP if Condition~(2) holds.

 First, we show (1) $\Rightarrow$ (2).
 Let $(\hX, \lambda)$ be an optimal solution to Problem~DP.
 Since $\hA(\Omega) = \identA$ and $\eta_j(\hA) \le 0$ hold,
 the second and third terms of the right-hand side of Eq.~\eqref{eq:L} are
 zero and nonnegative, respectively,
 which gives $L(\hA, \hX, \lambda) \ge f(\hA) = f^\opt$.
 Also, since $\hX \ge \hsw(\lambda)$ and $\hA(\omega) \ge 0$ hold,
 the second term of the right-hand side of Eq.~\eqref{eq:L2} is
 nonpositive, which gives $L(\hA, \hX, \lambda) \le s(\hX, \lambda) = s^\opt$ holds.
 Since $f^\opt = s^\opt$ holds from Theorem~\ref{thm:zero_gap}, we obtain
 \begin{eqnarray}
  f^\opt &=& L(\hA, \hX, \lambda) = s^\opt,
 \end{eqnarray}
 i.e., the third term of the right-hand side of Eq.~\eqref{eq:L}
 and the second term of the right-hand side of Eq.~\eqref{eq:L2}
 must be zero.
 Thus, Eqs.~\eqref{eq:cond_XCA} and \eqref{eq:cond_b} hold.
 Note that Eq.~\eqref{eq:cond_XCA} follows from the fact that $\hx\hy = 0$ holds for
 any $\hx, \hy \in \mSA^+$ satisfying $\Tr(\hx\hy) = 0$.

 Next, we show (2) $\Rightarrow$ (3).
 Integrating both sides of Eq.~\eqref{eq:cond_XCA} and using $\hA(\Omega) = \identA$ gives
 \begin{eqnarray}
  \hX &=& \int_\Omega \hsw(\lambda) \hAdw. \label{eq:cond_X}
 \end{eqnarray}
 $\hX \ge \hsw(\lambda)$ gives Eq.~\eqref{eq:cond_Xlambda}.
 Equation~\eqref{eq:cond_b2} is equivalent to Eq.~\eqref{eq:cond_b}.

 We show (3) $\Rightarrow$ (1).
 We define $\hX$ as in Eq.~\eqref{eq:cond_X}.
 We have that for any POVM $\hA' \in \POVMA^\c$,
 \begin{eqnarray}
  \lefteqn{ f(\hA) - f(\hA') } \nonumber \\
  &\ge& f(\hA) - \sumj \lambda_j \eta_j(\hA) - f(\hA') + \sumj \lambda_j \eta_j(\hA') \nonumber \\
  &=& \summ \Tr \left[ \hz_m(\lambda) \hPiA_m - \hz_m(\lambda) \hPi^{(\hA')}_m \right] \nonumber \\
  &=& \Tr~\hX - \Tr~\int_\Omega \hsw(\lambda) \hA'(\dw) \nonumber \\
  &=& \Tr \int_\Omega [\hX - \hsw(\lambda)] \hA'(\dw) \ge 0. \label{eq:cond_fA_fA_}
 \end{eqnarray}
 The second line follows from Eq.~\eqref{eq:cond_b2} and $\eta_j(\hA') \le 0$.
 The third line follows from Eqs.~\eqref{eq:primal}, \eqref{eq:POVMc}, and \eqref{eq:z}.
 The fourth line follows from the fact that, from Eqs.~\eqref{eq:PiA} and \eqref{eq:z},
 we have that for any $\hPhi$,
 \begin{eqnarray}
  \summ \Tr \left[ \hz_m(\lambda) \hPi^{(\hPhi)}_m \right] &=& \Tr~\int_\Omega \hsw(\lambda) \hPhi(\dw).
 \end{eqnarray}
 The last inequality follows from Eq.~\eqref{eq:cond_Xlambda} (i.e., $\hX \ge \hsw(\lambda)$).
 From Eq.~\eqref{eq:cond_fA_fA_}, $\hA$ is an optimal solution to Problem~P.

 Finally, we will show that $(\hX, \lambda)$ is an optimal solution to Problem~DP if Condition~(2) holds.
 From Eqs.~\eqref{eq:L2} and \eqref{eq:cond_XCA}, $L(\hA, \hX, \lambda) = s(\hX, \lambda)$ holds.
 Also, from Eqs.~~\eqref{eq:L} and \eqref{eq:cond_b}, $L(\hA, \hX, \lambda) = f(\hA) = f^\opt$ holds.
 Thus, $s(\hX, \lambda) = f^\opt$ holds, which means that $(\hX, \lambda)$
 is an optimal solution to Problem~DP.
 \QED
\end{proof}

We should mention that obtaining an optimal solution to Problem~P
is much more difficult than obtaining an optimal solution to
the problem described in Ref.~\cite{Nak-Kat-Usu-2015-general},
i.e., the state discrimination problem with no restriction on measurements.
The reason is that, in the former case, we have to optimize over all of Alice's measurements,
which include all the information regarding the measurements Bob should perform.
Problem~DP is generally difficult to solve as well as Problem~P.
However, we can obtain an analytical solution by solving Problem~DP in some cases
(see Subsec.~\ref{subsec:dtrine}).

\subsection{Number of outcomes of Alice's POVM}

So far in this paper, we have considered Alice's POVM $\hA$ to be continuous.
We find that an optimal solution to Problem~P with finite outcomes
always exists as long as a feasible solution exists,
as shown in the following theorem
(proof in Appendix~\ref{append:finite}):

\begin{thm} \label{thm:finite}
 Let $\dA = \dim~\mHA$.
 If $\POVMA^\c$ is not empty, then
 an optimal solution to Problem~P with at most $(J+1) \dA^2$ outcomes exists.
\end{thm}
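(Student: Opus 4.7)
My plan is threefold: (i) establish existence of an optimal POVM via a finite-dimensional compactness argument, (ii) reduce the optimum via a sharpened Carathéodory argument in the $(J{+}1)$-dimensional image space spanned by the objective and the constraints, and (iii) lift the reduction to extreme POVMs, which carry at most $d_\A^2$ outcomes by a classical Parthasarathy-type bound.

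For (i), note from Eqs.~\eqref{eq:primal} and \eqref{eq:POVMc} that $f(\hA)$ and each $\eta_j(\hA)$ depend on $\hA$ only through the operators $\hPiA_m$ defined in Eq.~\eqref{eq:PiA}, which lie in the finite-dimensional real vector space $\mS$. The image set $\{(\hPiA_0,\ldots,\hPiA_{M-1}) : \hA \in \POVMA\}$ is bounded, convex, and (by a standard weak-$\ast$/closedness argument) compact. The $J$ linear inequalities $\eta_j \le 0$ cut out a compact, nonempty-by-hypothesis feasible subset, on which the continuous objective $f$ attains its maximum; pulling back to $\POVMA$ yields an optimal $\hA^\opt \in \POVMA^\c$.

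For (ii), consider the affine map
\[
 \Psi : \POVMA \to \Real^{J+1},\qquad \Psi(\hA) \equiv (f(\hA),~\eta_0(\hA), \ldots, \eta_{J-1}(\hA)).
\]
If $\Psi(\hA^\opt)$ lay in the relative interior of the convex image $\Psi(\POVMA)$, a perturbation in the $f$-direction alone would produce a point of $\Psi(\POVMA)$ with unchanged (nonpositive) constraint values and a strictly larger objective, contradicting optimality. Hence $\Psi(\hA^\opt)$ lies on the relative boundary of $\Psi(\POVMA)$. The sharpened Carathéodory theorem for boundary points of a convex set in $\Real^{J+1}$ then expresses $\Psi(\hA^\opt) = \sum_{k=1}^{J+1} \beta_k q_k$ as a convex combination of at most $J{+}1$ extreme points $q_k$ of $\Psi(\POVMA)$.

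For (iii), each preimage $\Psi^{-1}(q_k) \cap \POVMA$ is a face of the compact convex set $\POVMA$ and, by Krein--Milman, contains an extreme point $\hA_k$ of $\POVMA$ with $\Psi(\hA_k)=q_k$. The Parthasarathy extremality criterion then forces $\hA_k$ to have at most $d_\A^2$ outcomes, because otherwise its effects would be linearly dependent in the $d_\A^2$-dimensional real space $\mSA$, giving a nontrivial signed perturbation that preserves both normalization and positivity and contradicts extremality. Setting $\hA' \equiv \sum_k \beta_k \hA_k$ produces $\hA' \in \POVMA$ with $\Psi(\hA') = \Psi(\hA^\opt)$, so $\hA' \in \POVMA^\c$ is itself optimal and uses at most $(J{+}1)d_\A^2$ outcomes. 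The main obstacle I anticipate is the careful handling of compactness and of Krein--Milman applied to each preimage face when $\Omega$ is an abstract measurable space; the combinatorial core (boundary Carathéodory combined with the Parthasarathy bound) is standard once that topological setup is fixed.
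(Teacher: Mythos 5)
Your argument is essentially sound and reaches the stated bound, but it is organized quite differently from the proof in Appendix~\ref{append:finite}. The paper never touches extreme points of $\POVMA$ directly: it invokes Theorem~1 of Ref.~\cite{Chi-Dar-Sch-2007} to write an arbitrary optimal $\hA$ as a barycenter $\int_\Gamma \hE^{(\gamma)}\,p(d\gamma)$ of POVMs with at most $\dA^2$ outcomes, and then does all of its convex analysis in the finite-dimensional space of constraint levels. Concretely, it shows the value function $f^\bullet(\beta)$ (the optimum as a function of the right-hand sides $\beta\in\Real^J$) is concave, takes a supporting affine function at $\beta=b$, and splits into two cases according to whether $b$ is an extreme point of the contact set $\mD$: in the strictly concave case the mixing measure $p$ must concentrate on feasible components (Lemma~\ref{lemma:pGamma}), so a single $\dA^2$-outcome component is optimal; otherwise Krein--Milman plus Carath\'eodory in $\Real^J$ write $b$ as a combination of at most $J+1$ extremal constraint levels, each handled by the first case. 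You instead map everything into $\Real^{J+1}$ via $\Psi=(f,\eta_0,\dots,\eta_{J-1})$, place the optimum on the relative boundary, apply the sharpened Carath\'eodory bound there (correctly noting that the ordinary version would only give $(J+2)\dA^2$), and lift the $J+1$ extreme points of the image back to extreme points of $\POVMA$ through faces. Your route is more direct and avoids the paper's case analysis, but it shifts the burden onto infinite-dimensional convexity: you need $\POVMA$ and each face $\Psi^{-1}(q_k)\cap\POVMA$ to be compact in a topology making $\Psi$ continuous, Krein--Milman to supply extreme points of those faces, and the fact that every extreme point of $\POVMA$ (as a measure-valued POVM on an abstract $\Omega$) has at most $\dA^2$ outcomes. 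All three are precisely the content of Ref.~\cite{Chi-Dar-Sch-2007}, so you should cite that result rather than re-derive the Parthasarathy-type criterion in the continuous-outcome setting; the paper's decomposition-first strategy is exactly a device for importing that machinery once, as a black box, and then remaining in finite dimensions.

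Two points to tighten. First, your relative-interior argument tacitly assumes the direction $(1,0,\dots,0)$ lies in the affine hull of $\Psi(\POVMA)$; if it does not (e.g., if $f$ restricted to $\POVMA$ is an affine function of the $\eta_j$), the perturbation leaves the image and the optimum may sit in the relative interior. In that degenerate case $\dim\Psi(\POVMA)\le J$, so ordinary Minkowski--Carath\'eodory already gives at most $J+1$ extreme points and the conclusion survives, but the dichotomy should be made explicit. Second, your lifted $\hA_k$ need not individually lie in $\POVMA^\c$; your final step correctly only uses $\Psi(\hA')=\Psi(\hA^\opt)$ to recover feasibility of the mixture, which parallels the paper's use of components feasible only for the perturbed levels $b^{(j)}$, so no change is needed there --- just be sure not to claim feasibility of the individual $\hA_k$.
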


\subsection{Comparison with the problem with no restriction on measurements}

Table~\ref{tab:summary} summarizes the formulation
of the state discrimination problems (a) when arbitrary measurements are allowed
and (b) when only sequential measurements are allowed.
The dual problem in the case (b) (i.e., Problem~DP) has
an infinite (continuous) number of constraints,
while that in the case (a) has a finite number $M$ of constraints.
This makes it difficult to obtain an optimal sequential measurement.

\begin{table*}[tbp]
 \caption{Formulation of the generalized state discrimination problems.}
 \label{tab:summary}
 \begin{center}
  \tabcolsep = 0.4em
  \begin{tabular}{p{0.5em}ll}
   \hline \hline
   & \multicolumn{1}{c}{(a) Arbitrary measurements \cite{Nak-Kat-Usu-2015-general}}
       & \multicolumn{1}{c}{(b) Sequential measurements} \\ \hline
   \multicolumn{3}{l}{Primal problems} \\
   & \raisebox{-0.0em}{\shortstack[l]{
     ${\rm maximize} ~ \displaystyle \summ \Tr(\hc_m \hPi_m)$ \\
     ${\rm subject~to} ~ \Pi {\rm ~ : POVM}$, \\
     \hspace{1em} $\displaystyle \summ \Tr(\ha_{j,m} \hPi_m) \le b_j ~ (\forall j \in \mI_J)$ }}
   & \raisebox{-2.5em}{\shortstack[l]{
     ${\rm maximize} ~ \displaystyle \summ \Tr \left[ \hc_m\hPiA_m \right]$ \\
     ${\rm subject~to} ~ \hA \in \POVMA$, \\
     \hspace{1em} $\displaystyle \summ \Tr \left[ \ha_{j,m} \hPiA_m \right] \le b_j ~ (\forall j \in \mI_J)$ \\
     \hfill \eqref{eq:primal},\eqref{eq:POVMc}}} \\
   \multicolumn{3}{l}{Dual problems} \\
   & \raisebox{-0.0em}{\shortstack[l]{
     ${\rm minimize} ~~ \displaystyle \Tr~\hX + \sumj \lambda_j b_j$ \\
     ${\rm subject~to} ~ \hX \ge \hz_m(\lambda) ~ (\forall m \in \mI_M), \lambda \in \Realp^J$ \\
     where ~ $\displaystyle \hz_m(\lambda) = \hc_m - \sumj \lambda_j \ha_{j,m}$ }}
   & \raisebox{-5.6em}{\shortstack[l]{
     ${\rm minimize} ~~ \displaystyle \Tr~\hX + \sumj \lambda_j b_j$ \\
     ${\rm subject~to} ~ \hX \ge \hsw(\lambda) ~ (\forall \omega \in \Omega), \lambda \in \Realp^J$ \\
     where ~ $\displaystyle \hsw(\lambda) = \TrB \summ \hz_m(\lambda) \hBw_m$, \\
     ~~~~~~  $\displaystyle \hz_m(\lambda) = \hc_m - \sumj \lambda_j \ha_{j,m}$ \\
     \hfill \eqref{eq:dual},\eqref{eq:z}}} \\
   \multicolumn{3}{l}{Necessary and sufficient conditions for optimality (Condition~(3))} \\
   & \raisebox{-0.0em}{\shortstack[l]{
     $\lambda \in \Realp^J$ exists such that \\
     \hspace{0.5em} $\displaystyle \summ \hz_m(\lambda) \hPi_m \ge \hz_m(\lambda), ~ \forall m \in \mI_M$, \\
     \hspace{0.5em} $\displaystyle \lambda_j \left[ b_j - \summ \Tr(\ha_{j,m} \hPi_m) \right] = 0,
                    ~ \forall j \in \mI_J$ }}
   & \raisebox{-1.5em}{\shortstack[l]{
     $\lambda \in \Realp^J$ exists such that \\
   \hspace{0.5em} $\displaystyle \int_\Omega \hsigma_{\omega'}(\lambda) \hA(\dw') \ge \hsw(\lambda),
                  ~ \forall \omega \in \Omega$, \\
     \hspace{0.5em} $\displaystyle \lambda_j \left[ b_j - \summ \Tr \left[ \ha_{j,m} \hPiA_m \right] \right] = 0,
                    ~ \forall j \in \mI_J$ \\
     \hfill \eqref{eq:cond_Xlambda},\eqref{eq:cond_b2}}} \\
   \hline \hline
  \end{tabular}
 \end{center}
\end{table*}

\section{Group covariant problem} \label{sec:sym}

In this section, we discuss the case in which Problem~P has a certain symmetry.
State discrimination problems with symmetries have been well studied,
and it is known that, in some cases, there exists an optimal solution with the same type of symmetry
\cite{Bel-1975,Ban-Kur-Mom-Hir-1997,Usu-Tak-Hat-Hir-1999,Eld-For-2001,Eld-2003-inc,Eld-Meg-Ver-2004,Eld-Sto-Has-2004,Nak-Usu-2013-group,Nak-Kat-Usu-2013-minimax}.
The existence of a symmetric solution helps us to obtain analytical or numerical optimal solutions
(e.g., \cite{And-Bar-Gil-Hun-2002,Kat-Hir-2003,Qiu-2008,Ass-Car-Pie-2010,Nak-Kat-Usu-2015-numerical}).

\subsection{Group action}

First, we briefly introduce a group action.
Let $\mG$ be a group and $e \in \mG$ be its identity element.
Also, let $\inv{g} \in \mG$ be the inverse element of $g \in \mG$.
We assume that $\mG$ has at least two elements.
Let $|\mG|$ be the number of elements in $\mG$.
A group action of $\mG$ on a set $T$ is a set of mappings on $T$,
$\{ \pi_g : T \to T \}_{g \in \mG}$, such that
\begin{eqnarray}
 \pi_{gh}(x) &=& \pi_g[\pi_h(x)], ~~ \forall g, h \in \mG, ~ x \in T, \nonumber \\
 \pi_e(x) &=& x, ~~ \forall x \in T. \label{eq:group_action0}
\end{eqnarray}
In what follows, we denote $\pi_g(x)$ as $g \c x$.
Equation~\eqref{eq:group_action0} can be rewritten by
\begin{eqnarray}
 (gh) \c x &=& g \c (h \c x), ~~ \forall g, h \in \mG, ~ x \in T, \nonumber \\
 e \c x &=& x, ~~ \forall x \in T. \label{eq:group_action}
\end{eqnarray}
The action is called faithful if, for any distinct $g,h \in \mG$,
there exists $x \in T$ such that $g \c x \neq h \c x$.

Next, we set actions of $\mG$ on the sets $\mI_N$, $\mS$, $\mSA$, $\mSB$, and $\Omega$
as follows.
An action of $\mG$ on $\mI_N$,
$\{ g \c n ~(n \in \mI_N) \}_{g \in \mG}$,
is given by a set of permutations of $\{ 0, \cdots, N-1 \}$,
which is not necessarily faithful.
We choose them such that they meet the conditions of Theorem~\ref{thm:sym} described below.

We also consider the action of $\mG$ on $\mS$ expressed by
\begin{eqnarray}
 g \c \hQ &\equiv& \hU_g \hQ \hU_g^\dagger, ~~ \forall g \in \mG, \hQ \in \mS, \label{eq:group_action_S}
\end{eqnarray}
where $\hU_g$ is a unitary or anti-unitary operator and $^\dagger$ is the conjugate transpose operator.
Note that if $\hU_g$ is an anti-unitary operator, then
$\hU_g^\dagger$ is also anti-unitary such that $\hU_g^\dagger \hU_g = \hU_g \hU_g^\dagger = \ident$.
From Eq.\eqref{eq:group_action}, $\hU_{gh}$ equals $\hU_g \hU_h$ up to a global phase for any $g, h \in \mG$,
and $\hU_e = \ident$ holds.
Assume that the action of $\mG$ on $\mS$ is faithful,
i.e., $\hU_g$ and $\hU_h$ are not equivalent up to a global phase for any distinct $g, h \in \mG$.
Also, assume that $\hU_g$ can be expressed by
\begin{eqnarray}
 \hU_g &=& \hV_g \otimes \hW_g, \label{eq:UVW}
\end{eqnarray}
where $\hV_g$ and $\hW_g$ are, respectively, unitary or anti-unitary operators on $\mHA$ and $\mHB$.
We can easily verify that $\hV_{gh}$ and $\hW_{gh}$, respectively, equal
$\hV_g \hV_h$ and $\hW_g \hW_h$ up to global phases for any $g, h \in \mG$,
and $\hV_e = \identA$ and $\hW_e = \identB$ hold.

We set actions of $\mG$ on $\mSA$ and $\mSB$ as follows:
\begin{eqnarray}
 g \c \hQ^{(\A)} &\equiv& \hV_g \hQ^{(\A)} \hV_g^\dagger,
  ~~ \forall g \in \mG, \hQ^{(\A)} \in \mSA, \nonumber \\
 g \c \hQ^{(\B)} &\equiv& \hW_g \hQ^{(\B)} \hW_g^\dagger,
  ~~ \forall g \in \mG, \hQ^{(\B)} \in \mSB.
  \label{eq:group_action_SA}
\end{eqnarray}
These actions are not necessarily faithful.

We stress that actions of $\mG$ are different among different sets.
For example, $g \c \hQ$ with $\hQ \in \mS$ and $g \c \hQ^{(\A)}$ with $\hQ^{(\A)} \in \mSA$
mean $\hU_g \hQ \hU_g^\dagger$ and $\hV_g \hQ^{(\A)} \hV_g^\dagger$, respectively.

Assume that $\{ g \c \hBw_m \}_m$ is in $\POVMB$ for any $g \in \mG$ and $\omega \in \Omega$
\footnote{
This assumption always holds if $\POVMB$ is the entire set of POVMs on $\mHB$;
otherwise, it does not hold in general.
For example, if $\mHB$ is a composite system and $\POVMB$ is the entire set of
sequential measurements on $\mHB$, then $\{ g \c \hBw_m \}_m$ might not be in $\POVMB$
in spite of $\{ \hBw_m \}_m \in \POVMB$.
In such cases, we need to appropriately set the action of $\mG$ on $\mSB$.}.
We set an action of $\mG$ on $\Omega$, $\{ g \c \omega ~(\omega \in \Omega) \}_{g \in \mG}$, such that
\begin{eqnarray}
 g \c \hBw_m &=& \hB^{(g \c \omega)}_{g \c m}, ~~ \forall g \in \mG, m \in \mI_M, \omega \in \Omega.
  \label{eq:sym_B}
\end{eqnarray}

The above model can handle various symmetries.
For example, in the case in which only Bob's system has a certain symmetry,
we can consider a group $\mG$ with $\hV_g = \identA$ for any $g \in \mG$.
As another example,
if Alice's and Bob's systems independently have different symmetries,
represented by groups $\mG_\A$ and $\mG_\B$ respectively,
then we can consider the direct product of the groups, $\mG = \mG_\A \times \mG_\B$;
we can define the actions of $\mG$ on $\mSA$ and $\mSB$ as
$g \c \hQ^{(\A)} \equiv \hV_{g_\A} \hQ^{(\A)} \hV_{g_\A}^\dagger$ and
$g \c \hQ^{(\B)} \equiv \hW_{g_\B} \hQ^{(\B)} \hW_{g_\B}^\dagger$
for any $g = (g_\A, g_\B) \in \mG$.
A more complex example is given in Subsec.~\ref{subsec:sym_example}.

\subsection{Group covariant optimal solution}

We show that if Problem~P has a certain symmetry, then
there exists an optimal solution with the same type of symmetry
(proof in Appendix~\ref{append:sym}).

\begin{thm} \label{thm:sym}
 Suppose that, in Problem~P, there exist a group $\mG$ and its actions on $\mI_M$, $\mI_J$, and $\mS$ such that
 \begin{eqnarray}
  g \c \ha_{j,m} &=& \ha_{g \c j, g \c m}, ~ \forall g \in \mG, j \in \mI_J, m \in \mI_M, \nonumber \\
  b_j &=& b_{g \c j}, ~~~~~~ \forall g \in \mG, j \in \mI_J, \nonumber \\
  g \c \hc_m &=& \hc_{g \c m}, ~~ \forall g \in \mG, m \in \mI_M. \label{eq:sym_eval}
 \end{eqnarray}
 Then, as long as $\POVMA^\c$ is not empty, for any $\hPhi \in \POVMA^\c$,
 there exists $\hA \in \POVMA^\c$ such that $f(\hA) = f(\hPhi)$ and
 \begin{eqnarray}
  g \c \hAw &=& \hA(g \c \omega), ~~ \forall g \in \mG, \omega \in \Omega. \label{eq:sym_A}
 \end{eqnarray}
 Moreover, for any $(\hY, \nu) \in \mX^\c$,
 there exists $(\hX, \lambda) \in \mX^\c$ such that $s(\hX, \lambda) = s(\hY, \nu)$ and
 \begin{eqnarray}
  g \c \hX &=& \hX, ~~~~ \forall g \in \mG, \nonumber \\
  \lambda_j &=& \lambda_{g \c j}, ~~ \forall g \in \mG, j \in \mI_J. \label{eq:sym_Xlambda}
 \end{eqnarray}
 In particular, there exist an optimal solution $\hA$ to Problem~P satisfying Eq.~\eqref{eq:sym_A}
 and an optimal solution $(\hX, \lambda)$ to Problem~DP satisfying Eq.~\eqref{eq:sym_Xlambda}.
\end{thm}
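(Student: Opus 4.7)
My plan is to use the standard group-averaging (twirling) construction. Given any $\hPhi \in \POVMA^\c$, define
\[
 \hA(E) \equiv \frac{1}{|\mG|} \sum_{g \in \mG} \inv{g} \c \hPhi(g \c E), \qquad E \in \sigma(\Omega),
\]
and given any $(\hY,\nu) \in \mX^\c$, define
\[
 \hX \equiv \frac{1}{|\mG|} \sum_{g \in \mG} \inv{g} \c \hY, \qquad \lambda_j \equiv \frac{1}{|\mG|} \sum_{g \in \mG} \nu_{g \c j}.
\]
The symmetries in Eqs.~\eqref{eq:sym_A} and \eqref{eq:sym_Xlambda} are then immediate by a change of summation variable $h = g'g$, using the group law and $e\c = \mathrm{id}$.

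For the primal side, I would first verify $\hA \in \oPOVMA$: positivity and countable additivity are inherited term-by-term, and $\hA(\Omega) = \identA$ uses $g\c\Omega = \Omega$ together with $\inv{g}\c\identA = \identA$. Next, substituting the definition of $\hA$ into Eq.~\eqref{eq:PiA}, factoring out the tensor action $\hU_{\inv{g}}\cdot\hU_{\inv{g}}^\dagger$, rewriting the $\B$-factor via Eq.~\eqref{eq:sym_B}, and changing variables $\omega' = g\c\omega$ in the integral should yield
\[
 \hPiA_m \;=\; \frac{1}{|\mG|}\sum_{g\in\mG} \inv{g}\c \hPi^{(\hPhi)}_{g\c m}.
\]
Taking $\Tr[\hc_m\,\cdot\,]$, relabeling $n = g\c m$, and using $g\c\hc_m = \hc_{g\c m}$ together with invariance of $\Tr(\hP\hQ)$ for Hermitian $\hP,\hQ$ under the group action (which holds even when $\hU_g$ is anti-unitary, because the trace of a Hermitian product is real) then gives $f(\hA) = f(\hPhi)$. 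The identical calculation with $\hc_m$ replaced by $\ha_{j,m}$, combined with $b_{g\c j} = b_j$, yields $\eta_j(\hA) = \frac{1}{|\mG|}\sum_g \eta_{g\c j}(\hPhi) \le 0$, so $\hA \in \POVMA^\c$.

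For the dual side, $\Tr\hX = \Tr\hY$ and $\sumj\lambda_j b_j = \sumj\nu_j b_j$ are routine relabeling arguments that use only invariance of the trace and $b_{g\c j} = b_j$. The substantive step is the semidefinite constraint $\hX \ge \hsw(\lambda)$ for all $\omega$. I would first establish the transformation rule
\[
 \inv{g}\c\hsw(\nu) \;=\; \hs_{\inv{g}\c\omega}\bigl(\nu^{\inv{g}}\bigr), \qquad \nu^g_j \equiv \nu_{\inv{g}\c j},
\]
by combining $g\c\hz_m(\nu) = \hz_{g\c m}(\nu^g)$ (immediate from Eq.~\eqref{eq:sym_eval}), Eq.~\eqref{eq:sym_B}, and the tensor-product identity $\TrB(g\c T) = g\c\TrB T$. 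Applying $\inv{g}\c$ to the feasibility inequality $\hY \ge \hs_{\omega'}(\nu)$, choosing $\omega' = g\c\omega$, and averaging over $g$ then gives $\hX \ge \frac{1}{|\mG|}\sum_g \hsw(\nu^{\inv{g}})$; the affine dependence of $\hsw$ on $\nu$ lets me move the average inside to obtain $\hX \ge \hsw(\lambda)$. The ``in particular'' statement then follows at once by applying the construction to any optimizer of Problem~P or Problem~DP.

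The main obstacle I anticipate is careful bookkeeping of the five distinct $\mG$-actions (on $\mI_M$, $\mI_J$, $\mS$, $\mSA\oplus\mSB$, and $\Omega$) and their coherent behaviour under the change-of-variable inside the $\Omega$-integral; the affine (rather than linear) dependence of $\hsw$ on $\lambda$ is what forces the use of uniform weights $1/|\mG|$ in the averaging. A secondary concern is the anti-unitary case, but preservation of operator positivity and of the trace under $\hU_g\cdot\hU_g^\dagger$ both follow directly from Hermiticity of the operators involved.
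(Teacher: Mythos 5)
Your proposal is correct and follows essentially the same group-averaging (twirling) construction as the paper's proof in Appendix~\ref{append:sym}: the definitions of $\hA$, $\hX$, and $\lambda$ coincide with the paper's $\frac{1}{|\mG|}\sumg \hPhi^\g$, $\frac{1}{|\mG|}\sumg \hY^\g$, and $\frac{1}{|\mG|}\sumg \nu_j^\g$ up to the trivial relabeling $g \leftrightarrow \inv{g}$ inside the sums. The verification steps you outline---POVM properties of the average, the transformation rule for $\hPi^{(\hPhi)}_m$ and $\hsw$, trace invariance under (anti-)unitary conjugation, and moving the average inside the affine dual constraint---all match the paper's argument.
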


If Eq.~\eqref{eq:sym_A} holds, then $\hPiA$ has the following symmetry:
\begin{eqnarray}
 g \c \hPiA_m &=& \hPiA_{g \c m}. \label{eq:sym_Pi}
\end{eqnarray}
Indeed, from Eqs.~\eqref{eq:sym_B}, \eqref{eq:sym_form1}, and \eqref{eq:sym_form2},
we obtain
\begin{eqnarray}
 g \c \hPiA_m &=& g \c \left[ \int_{\Omega} \hAdw \otimes \hBw_m \right] \nonumber \\
 &=& \int_{\Omega} [g \c \hAdw] \otimes \left[ g \c \hBw_m \right] \nonumber \\
 &=& \int_{\Omega} \hA[d(g \c \omega)] \otimes \hB^{g \c \omega}_{g \c m}
  = \hPiA_{g \c m}.
\end{eqnarray}

Let $\POVM_{\A;\mG}^\c$ be the entire set of $\hA \in \POVMA^\c$ satisfying Eq.~\eqref{eq:sym_A}
and $\mX_\mG^\c$ be the entire set of $(\hX, \lambda) \in \mX^\c$ satisfying Eq.~\eqref{eq:sym_Xlambda}.
We can easily verify that $\POVM_{\A;\mG}^\c$ and $\mX_\mG^\c$ are convex.
Thus, Problems~P and DP remain in convex programming
even if we restrict the feasible sets to $\POVM_{\A;\mG}^\c$ and $\mX_\mG^\c$, respectively.

\subsection{Example} \label{subsec:sym_example}

As an example of a symmetric problem,
let us consider the problem of finding a minimum-error sequential measurement
for ternary quantum states $\{ \hrho_m = \frac{1}{3} \trho^{(\A)}_m \otimes \trho^{(\B)}_m \}_{m=0}^2$
with $\Tr~\trho^{(\A)}_m = \Tr~\trho^{(\B)}_m = 1$,
where $\{ \hrho_m \}$ have the following symmetry.
Let $\mG_\A \equiv \{ p_\A^k, p_\A^k q_\A \}_{k \in \mI_3}$
and $\mG_\B \equiv \{ p_\B^k, p_\B^k q_\B \}_{k \in \mI_2}$
be dihedral groups with $|\mG_\A| = 6$ and $|\mG_\B| = 4$.
$\mG_k$ $(k \in \{ \A, \B \})$ is generated by a rotation $p_k$ and a reflection $q_k$,
which have $p_kq_kp_k = q_k$.
We have $p_\A^3 = q_\A^2 = e_\A$ and $p_\B^2 = q_\B^2 = e_\B$,
where $e_\A$ and $e_\B$ are, respectively, the identity elements of $\mG_\A$ and $\mG_\B$.
We define actions of $\mG_\A$ on $\mSA$ and $\mG_\B$ on $\mSB$ as
\begin{eqnarray}
 g_\A \c \hQ^{(\A)} &\equiv& \hV_{g_\A} \hQ^{(\A)} \hV_{g_\A}^\dagger,
  ~ \forall g_\A \in \mG_\A, \hQ^{(\A)} \in \mSA, \nonumber \\
 g_\B \c \hQ^{(\B)} &\equiv& \hW_{g_\B} \hQ^{(\B)} \hW_{g_\B}^\dagger,
  ~ \forall g_\B \in \mG_\B, \hQ^{(\B)} \in \mSB,
\end{eqnarray}
where $\hV_{g_\A}$ and $\hW_{g_\B}$ are, respectively,
unitary (or anti-unitary) operators on $\mHA$ and $\mHB$,
satisfying $\hV_{p_\A}^3 = \hV_{q_\A}^2 = \identA$ and $\hW_{p_\B}^2 = \hW_{q_\B}^2 = \identB$.
Assume
\begin{eqnarray}
 p_\A \c \trho^{(\A)}_m &=& \trho^{(\A)}_{m \oplus 1}, ~~
  q_\A \c \trho^{(\A)}_m = \trho^{(\A)}_{\kappa(m)}, \nonumber \\
 p_\B \c \trho^{(\B)}_m &=& \trho^{(\B)}_{\kappa(m)}, ~~
  q_\B \c \trho^{(\B)}_m = \trho^{(\B)}_m,
\end{eqnarray}
where $\kappa(0) = 0$, $\kappa(1) = 2$, and $\kappa(2) = 1$,
and $m \oplus 1$ is $m + 1$ if $m < 2$ holds; otherwise, 0.
For example, if $\{ \trho^{(\A)}_m \}_{m=0}^2$ are phase-shift keyed (PSK) coherent states
and $\{ \trho^{(\B)}_m \}_{m=0}^2$ are amplitude-shift keyed (ASK) coherent states,
then they have the above symmetries.
The phase space representation of such states is shown in Fig.~\ref{fig:symmetry}.
$p_\A$ and $p_\B$, respectively, correspond to the rotation of $2\pi/3$ and $\pi$.
$q_\A$ and $q_\B$ correspond to the reflection about the $x_{\rm c}$ axis.
\begin{figure}[tb]
 \centering
 \includegraphics[scale=0.8]{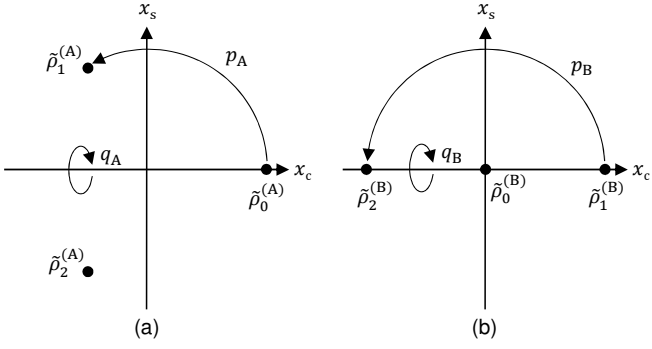}
 \caption{Phase-space representation of (a) PSK coherent states $\{ \trho^{(\A)}_m \}$
 and (b) ASK coherent states $\{ \trho^{(\B)}_m \}$.}
 \label{fig:symmetry}
\end{figure}

To use Theorem~\ref{thm:sym}, we obtain a group $\mG$ satisfying Eq.~\eqref{eq:sym_eval}.
Since $J = 0$ and $\hc_m = \hrho_m$ hold for the problem of finding a minimum-error sequential measurement,
Eq.~\eqref{eq:sym_eval} can be reduced to
\begin{eqnarray}
 g \c \hrho_m &=& \hrho_{g \c m}, ~~ \forall g \in \mG, m \in \mI_M. \label{eq:sym_ex}
\end{eqnarray}
Let $e \equiv (e_\A, e_\B)$ and $p \equiv (q_\A, p_\B)$;
then, $\mG_1 \equiv \{ e, p \}$ is a group such that
\begin{eqnarray}
 p \c \hrho_m &=& \frac{1}{3} \left[ q_\A \c \trho^{(\A)}_m \right] \otimes \left[ p_\B \c \trho^{(\B)}_m \right]
  \nonumber \\
 &=& \frac{1}{3} \trho^{(\A)}_{\kappa(m)} \otimes \trho^{(\B)}_{\kappa(m)}
  = \hrho_{\kappa(m)}.
\end{eqnarray}
Note that we redefine $\hV_g \equiv \hV_{g_\A}$ and $\hW_g \equiv \hW_{g_\B}$
for $g = (g_\A, g_\B) \in \mG_\A \times \mG_\B$.
Also, let $q \equiv (e_\A, q_\B)$;
then, $\mG_2 \equiv \{ e, q \}$ is a group such that
\begin{eqnarray}
 q \c \hrho_m &=& \frac{1}{3} \left[ e_\A \c \trho^{(\A)}_m \right] \otimes \left[ q_\B \c \trho^{(\B)}_m \right]
  \nonumber \\
 &=& \frac{1}{3} \trho^{(\A)}_m \otimes \trho^{(\B)}_m
  = \hrho_m.
\end{eqnarray}
$\mG_2$ expresses a symmetry of only $\mHB$.
We can consider the group $\mG = \{ e, p, q, pq \}$,
which is the direct product of $\mG_1$ and $\mG_2$.
Note that the action of $\mG$ on $\mSA$ is not faithful;
indeed, both $\hV_e$ and $\hV_q$ are identical to $\hV_{e_\A}$.
Let us define an action of $\mG$ on $\mI_M$ such that
$p \c m = \kappa(m)$ and $q \c m = m$;
then, Eq.~\eqref{eq:sym_ex} holds.
Thus, there exists $\hA \in \POVMA$ satisfying Eq.~\eqref{eq:sym_A}.
From Eq.~\eqref{eq:sym_Pi}, $\PiA$ with such $\hA$ has the following symmetry:
\begin{eqnarray}
 \hU_p \hPiA_m \hU_p^\dagger &=& \hPiA_{\kappa(m)}, \nonumber \\
 \hU_q \hPiA_m \hU_q^\dagger &=& \hPiA_m,
\end{eqnarray}
where $\hU_p = \hV_{q_\A} \otimes \hW_{p_\B}$ and
$\hU_q = \identA \otimes \hW_{q_\B}$ from Eq.~\eqref{eq:UVW}.
Moreover, from Eq.~\eqref{eq:sym_Xlambda},
there exists $\hX \in \mX^\c$ commuting with $\hV_{q_\A}$.

Note that, in this example, neither $\hA$ nor $\hX$ has the symmetry expressed by
$\{ p_\A^k \}_{k \in \mI_3}$,
while the states $\{ \trho^{(\A)}_m \}$ have this symmetry.
The reason is that the states $\{ \trho^{(\B)}_m \}$ do not have this symmetry.

\section{Generalized minimax solution} \label{sec:minimax}

In the minimax strategy for a quantum state discrimination problem,
prior probabilities is unknown and the task is to maximize the worst case of
the objective function (such as the average success probability)
over all prior probabilities.
This strategy has been investigated in several studies
\cite{Hir-Ike-1982,Osa-Ban-Hir-1996,Dar-Sac-Kah-2005,Kat-2012,Nak-Kat-Usu-2013-minimax},
whose generalized version is appeared in Ref.~\cite{Nak-Kat-Usu-2015-general}.
In this section, we consider a sequential-measurement version of
the generalized minimax problem.
In a similar manner to the method reported by Ref.~\cite{Nak-Kat-Usu-2015-general},
we can provide necessary and sufficient conditions for a minimax solution
to the sequential-measurement version of the problem.
In what follows, we discuss properties that a minimax solution has.

\subsection{Formulation}

Let us consider $K \ge 1$ objective functions $f_0(\hA), \cdots, f_{K-1}(\hA)$
expressed as:
\begin{eqnarray}
 f_k(\hA) &\equiv& \summ \Tr \left[ \hc_{k,m} \hPiA_m \right] + d_k, \label{eq:fmm_k}
\end{eqnarray}
where $\hc_{k,m} \in \mS$ and $d_k \in \Real$.
Also, let $\Prob$ be the entire set of collections of $K$ nonnegative real numbers,
$\mu \equiv \{ \mu_k \}_{k=0}^{K-1} \in \Realp^K$, satisfying $\sum_{k=0}^{K-1} \mu_k = 1$.
$\mu \in \Prob$ can be interpreted as a probability distribution.
Let $F(\mu, \hA)$ be the objective function defined by
\begin{eqnarray}
 F(\mu, \hA) &\equiv& \sum_{k=0}^{K-1} \mu_k f_k(\hA) \label{eq:fmm}
\end{eqnarray}
and $\POVMA^\c$ be the set defined by Eq.~\eqref{eq:POVMc}.
We investigate the problem of finding a POVM $\hA \in \POVMA^\c$ that maximizes
the worst-case value of $F(\mu, \hA)$ over $\mu \in \Prob$.
This problem can be formulated as follows:
\begin{eqnarray}
 \begin{array}{lll}
  {\rm P_m:} & {\rm maximize} & \displaystyle \min_{\mu \in \Prob} F(\mu, \hA) \\
  & {\rm subject~to} & \hA \in \POVMA^\c. \\
 \end{array} \label{eq:primal_mm}
\end{eqnarray}
Let $F^\opt$ be the optimal value of Problem~$\Pm$.
We call $(\mu^\opt, \hA^\opt) \in \Prob \times \POVMA^\c$ and $\hA^\opt \in \POVMA^\c$
satisfying $F(\mu^\opt, \hA^\opt) = F^\opt$
a minimax solution and a minimax POVM, respectively.

\subsection{Properties of a minimax solution}

We first show the following remark.
\begin{remark}[Minimax theorem] \label{remark:minimax}
 If $\POVMA^\c$ is not empty, then
 there exsists a minimax solution $(\mu^\opt, \hA^\opt)$ to Problem~$\Pm$,
 and it satisfies
 \begin{eqnarray}
  \min_{\mu \in \Prob} \max_{\hA \in \POVMA^\c} F(\mu, \hA) &=& F(\mu^\opt, \hA^\opt) \nonumber \\
  &=& \max_{\hA \in \POVMA^\c} \min_{\mu \in \Prob} F(\mu, \hA). \label{eq:minimax}
 \end{eqnarray}
\end{remark}

\begin{proof}
 $\POVMA^\c$ and $\Prob$ are closed convex sets.
 $F(\mu, \hA)$ is a continuous convex function of $\mu$ for fixed $\hA$
 and a continuous concave function of $\hA$ for fixed $\mu$.
 Thus, the minimax theorem holds (e.g., \cite{Eke-Tem-1999});
 that is to say, there exists a minimax solution $(\mu^\opt, \hA^\opt)$ to Problem~$\Pm$,
 which satisfies Eq.~\eqref{eq:minimax}.
 \QED
\end{proof}

A minimax solution to Problem~$\Pm$ can be characterized by a saddle point;
i.e., $(\mu^\opt, \hA^\opt)$ is a minimax solution if and only if,
for any $\mu \in \Prob$ and $\hA \in \POVMA^\c$, $(\mu^\opt, \hA^\opt)$ satisfies \cite{Eke-Tem-1999}
\begin{eqnarray}
 F(\mu^\opt, \hA) \le F(\mu^\opt, \hA^\opt) \le F(\mu, \hA^\opt). \label{eq:minimax_saddle}
\end{eqnarray}

Let
\begin{eqnarray}
 F^\opt(\mu) &\equiv& \max_{\hA \in \POVMA^\c} F(\mu, \hA). \label{eq:Fstar}
\end{eqnarray}
From Eq.~\eqref{eq:minimax_saddle}, $F^\opt(\mu^\opt) = F(\mu^\opt, \hA^\opt)$ holds.

Let $\ol{c_m}(\mu) \equiv \sum_{k=0}^{K-1} \mu_k \hc_{k,m}$ and
$\ol{d}(\mu) \equiv \sum_{k=0}^{K-1} \mu_k d_k$;
then, we find that the problem of finding $F^\opt(\mu)$ for a fixed $\mu \in \Prob$
is reduced to Problem~P, as shown in the following remark:
\begin{remark}
 Let $\ol{f^\opt}(\mu)$ be the optimal value of Problem~P with $\hc_m = \ol{c_m}(\mu)$;
 then, $F^\opt(\mu) = \ol{f^\opt}(\mu) + \ol{d}(\mu)$ holds.
\end{remark}

\begin{proof}
 \begin{eqnarray} \hspace{-1.5em}
  F^\opt(\mu) &=& \max_{\hA \in \POVMA^\c} F(\mu, \hA) \nonumber \\
  &=& \max_{\hA \in \POVMA^\c} \sum_{k=0}^{K-1} \mu_k \left[ \summ \Tr \left[ \hc_{k,m} \hPiA_m \right] + d_k \right]
   \nonumber \\
  &=& \max_{\hA \in \POVMA^\c} \summ \Tr\left[ \ol{c_m}(\mu) \hPiA_m \right] + \ol{d}(\mu)
   \nonumber \\
  &=& \ol{f^\opt}(\mu) + \ol{d}(\mu).
 \end{eqnarray}
 \QED
\end{proof}

\begin{thm} \label{thm:minimax}
 Assume $\mu^\opt \in \Prob$ and $\hA^\opt \in \POVMA^\c$.
 The following statements are all equivalent.
 \begin{enumerate}[(1)]
  \setlength{\parskip}{0cm}
  \setlength{\itemsep}{0cm}
  \item $(\mu^\opt, \hA^\opt)$ is a minimax solution to Problem~$\Pm$.
  \item The following equation holds:
		\begin{eqnarray}
		 f_k(\hA^\opt) &\ge& F^\opt(\mu^\opt), ~~ \forall k \in \mI_K. \label{eq:thm_minimax2}
		\end{eqnarray}
  \item The following equations hold:
		\begin{eqnarray}
         F^\opt(\mu^\opt) &=& F(\mu^\opt, \hA^\opt), \nonumber \\
		 f_k(\hA^\opt) &\ge& f_{k'}(\hA^\opt), ~~ \forall k,k' \in \mI_K ~{\rm s.t.}~\mu^\opt_{k'} > 0.
		  \label{eq:thm_minimax3}
		\end{eqnarray}
 \end{enumerate}
\end{thm}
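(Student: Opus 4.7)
The plan is to establish the cycle (1) $\Rightarrow$ (2) $\Rightarrow$ (3) $\Rightarrow$ (1), leveraging the saddle-point characterization in Eq.~\eqref{eq:minimax_saddle} together with the convexity/linearity of $F(\mu,\hA)$ in $\mu$ for fixed $\hA$. The principal tool throughout is that $F(\mu,\hA) = \sum_k \mu_k f_k(\hA)$ is linear in $\mu$, so its minimum over the simplex $\Prob$ is attained at an extreme point, i.e., $\min_{\mu \in \Prob} F(\mu,\hA) = \min_{k \in \mI_K} f_k(\hA)$.

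For (1) $\Rightarrow$ (2), I would start from the saddle-point inequality. The left inequality $F(\mu^\opt,\hA) \le F(\mu^\opt,\hA^\opt)$ for all $\hA \in \POVMA^\c$ together with the definition \eqref{eq:Fstar} of $F^\opt(\mu^\opt)$ gives $F(\mu^\opt,\hA^\opt) = F^\opt(\mu^\opt)$. Then plugging the extreme point $\mu = e_k$ (the distribution concentrated on $k$) into the right inequality $F(\mu^\opt,\hA^\opt) \le F(\mu,\hA^\opt)$ yields $F^\opt(\mu^\opt) \le f_k(\hA^\opt)$, which is exactly Eq.~\eqref{eq:thm_minimax2}.

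For (2) $\Rightarrow$ (3), I would average Eq.~\eqref{eq:thm_minimax2} against $\mu^\opt$ to obtain $F(\mu^\opt,\hA^\opt) = \sum_k \mu^\opt_k f_k(\hA^\opt) \ge F^\opt(\mu^\opt)$. Since the reverse inequality is automatic from the definition \eqref{eq:Fstar}, equality holds, and the averaging argument forces $f_{k'}(\hA^\opt) = F^\opt(\mu^\opt)$ for every $k'$ with $\mu^\opt_{k'} > 0$. Combining this equality with Eq.~\eqref{eq:thm_minimax2} gives the inequality chain $f_k(\hA^\opt) \ge F^\opt(\mu^\opt) = f_{k'}(\hA^\opt)$ in Eq.~\eqref{eq:thm_minimax3}.

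For (3) $\Rightarrow$ (1), the goal is to verify the two saddle-point inequalities in Eq.~\eqref{eq:minimax_saddle}. The left inequality is immediate from $F(\mu^\opt,\hA^\opt) = F^\opt(\mu^\opt)$ and the definition of $F^\opt(\mu^\opt)$ as a maximum. For the right inequality I would set $\alpha \equiv \min_k f_k(\hA^\opt)$; the second condition in Eq.~\eqref{eq:thm_minimax3} implies $f_{k'}(\hA^\opt) = \alpha$ whenever $\mu^\opt_{k'} > 0$, so $F(\mu^\opt,\hA^\opt) = \alpha$, while for arbitrary $\mu \in \Prob$, $F(\mu,\hA^\opt) \ge \sum_k \mu_k \alpha = \alpha = F(\mu^\opt,\hA^\opt)$. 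Remark~\ref{remark:minimax} then identifies the saddle point with a minimax solution. The main (mild) obstacle is the bookkeeping in step (2) $\Rightarrow$ (3): one must be careful to use that $\mu^\opt$ lies in the simplex so that the averaged inequality saturates only when each active component of $\mu^\opt$ sees equality, which is what pins down the second line of Eq.~\eqref{eq:thm_minimax3}.
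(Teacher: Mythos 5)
Your proof is correct and takes essentially the approach the paper intends: the paper itself only defers to ``the same as Theorem~3 of Ref.~\cite{Nak-Kat-Usu-2015-general}'', and that argument is precisely the saddle-point characterization of Eq.~\eqref{eq:minimax_saddle} combined with linearity of $F(\mu,\hA)$ in $\mu$ that you carry out in the cycle (1) $\Rightarrow$ (2) $\Rightarrow$ (3) $\Rightarrow$ (1). In particular, your key observations --- that evaluating the right saddle inequality at the vertices $e_k$ of $\Prob$ yields Eq.~\eqref{eq:thm_minimax2}, and that a $\mu^\opt$-weighted average of terms each at least $F^\opt(\mu^\opt)$ can equal $F^\opt(\mu^\opt)$ only if every active component attains equality --- are exactly the steps needed, so the argument is complete.
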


\begin{proof}
 The same as Theorem~3 of Ref.~\cite{Nak-Kat-Usu-2015-general}.
 \QED
\end{proof}

\begin{thm} \label{thm:minimax_convex}
 Let us consider the following optimization problem
 \begin{eqnarray}
  \begin{array}{ll}
   {\rm maximize} & \displaystyle f_{\rm min}(\hA) \equiv \min_{k \in \mI_K} f_k(\hA) \\
   {\rm subject~to} & \hA \in \POVMA^\c \\
  \end{array} \label{eq:minimax_convex}
 \end{eqnarray}
 with $\hA$.
 An optimal solution to the problem given by Eq.~\eqref{eq:minimax_convex}
 is equivalent to a minimax POVM of Problem~$\Pm$.
\end{thm}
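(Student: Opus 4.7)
The plan is to observe that the minimax objective $\min_{\mu \in \Prob} F(\mu, \hA)$ coincides identically with $f_{\min}(\hA)$, so the two optimization problems are literally the same problem. The key point is that for any fixed $\hA$, the map
\begin{eqnarray}
 \mu \mapsto F(\mu, \hA) = \sum_{k=0}^{K-1} \mu_k f_k(\hA) \nonumber
\end{eqnarray}
is affine (in fact, linear) in $\mu$, and we are minimizing it over the probability simplex $\Prob = \{ \mu \in \Realp^K : \sum_k \mu_k = 1 \}$.

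First I would invoke the standard fact that the minimum of an affine function over a simplex is attained at one of its vertices, namely at the point masses $\mu = \delta_{k^\opt}$ where $k^\opt \in \argmin_{k \in \mI_K} f_k(\hA)$. Plugging such a $\mu$ into $F(\mu, \hA)$ yields $f_{k^\opt}(\hA) = \min_{k \in \mI_K} f_k(\hA)$, and this lower bound is also an upper bound since every convex combination of the values $\{f_k(\hA)\}$ is at least $\min_k f_k(\hA)$. Hence
\begin{eqnarray}
 \min_{\mu \in \Prob} F(\mu, \hA) &=& \min_{k \in \mI_K} f_k(\hA) = f_{\min}(\hA) \nonumber
\end{eqnarray}
for every $\hA \in \POVMA^\c$.

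Next I would note that the two problems share the same feasible set $\POVMA^\c$ and, by the identity above, the same objective function. Therefore $\hA^\opt \in \POVMA^\c$ maximizes $f_{\min}(\hA)$ if and only if it maximizes $\min_{\mu \in \Prob} F(\mu, \hA)$, which is exactly the defining condition for $\hA^\opt$ to be a minimax POVM of Problem~$\Pm$ (one may take any $\mu^\opt$ supported on $\argmin_k f_k(\hA^\opt)$ to form a minimax solution, and Remark~\ref{remark:minimax} guarantees existence when $\POVMA^\c \neq \emptyset$).

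There is really no substantive obstacle here; the statement is a direct consequence of linearity of $F$ in $\mu$ together with the vertex-attainment property of linear functions on the simplex. The only thing to be careful about is to state precisely that the equivalence is at the level of the outer optimization over $\hA$, since the inner $\mu$ need not be unique (any $\mu^\opt$ with support contained in $\argmin_k f_k(\hA^\opt)$ works, in agreement with the complementary-slackness condition \eqref{eq:thm_minimax3} of Theorem~\ref{thm:minimax}).
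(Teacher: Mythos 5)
Your proof is correct and is essentially the argument the paper itself relies on: the paper gives no proof here, deferring to Theorem 4 of Ref.~\cite{Nak-Kat-Usu-2015-general}, whose content is precisely the identity $\min_{\mu \in \Prob} F(\mu, \hA) = \min_{k \in \mI_K} f_k(\hA) = f_{\rm min}(\hA)$ obtained from linearity of $F$ in $\mu$ and attainment of the minimum at a vertex of the simplex, after which the two problems share the same feasible set $\POVMA^\c$ and the same objective. One caution about your parenthetical remark: an arbitrary $\mu^\opt$ supported on $\argmin_{k} f_k(\hA^\opt)$ does give $F(\mu^\opt,\hA^\opt)=F^\opt$, but to satisfy the saddle-point characterization \eqref{eq:minimax_saddle} (equivalently, Condition (2) of Theorem~\ref{thm:minimax}, which requires $f_k(\hA^\opt)\ge F^\opt(\mu^\opt)$ for all $k$) the component $\mu^\opt$ must additionally attain $\min_{\mu\in\Prob} F^\opt(\mu)$, so not every such $\mu$ works; this does not affect the theorem, which concerns only the POVM component $\hA^\opt$.
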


\begin{proof}
 The same as Theorem~4 of Ref.~\cite{Nak-Kat-Usu-2015-general}.
 \QED
\end{proof}

\subsection{Group covariant minimax problem}

Similar to Theorem~\ref{thm:sym},
if Problem~$\Pm$ has a certain symmetry, then
there exists a minimax solution with the same type of symmetry,
as stated in the following theorem
(proof in Appendix~\ref{append:sym_minimax}).

\begin{thm} \label{thm:sym_minimax}
 Suppose that, in Problem~$\Pm$,
 there exist a group $\mG$ and its actions on $\mI_M$, $\mI_J$, $\mI_K$, and $\mS$ such that
 \begin{eqnarray}
  g \c \ha_{j,m} &=& \ha_{g \c j, g \c m}, ~ \forall g \in \mG, j \in \mI_J, m \in \mI_M, \nonumber \\
  b_j &=& b_{g \c j}, ~~~~~~ \forall g \in \mG, j \in \mI_J, \nonumber \\
  g \c \hc_{k,m} &=& \hc_{g \c k, g \c m}, ~~ \forall g \in \mG, k \in \mI_K, m \in \mI_M, \nonumber \\
  d_k &=& d_{g \c k}, ~~ \forall g \in \mG, k \in \mI_K. \label{eq:sym_eval_mm}
 \end{eqnarray}
 Then, as long as $\POVMA^\c$ is not empty,
 there exists a minimax solution $(\mu^\opt, \hA^\opt)$ such that
 \begin{eqnarray}
  \mu^\opt_k &=& \mu^\opt_{g \c k}, ~~~ \forall g \in \mG, k \in \mI_K, \nonumber \\
  g \c \hA^\opt(\omega) &=& \hA^\opt(g \c \omega), ~~ \forall g \in \mG, \omega \in \Omega. \label{eq:sym_minimax}
 \end{eqnarray}
\end{thm}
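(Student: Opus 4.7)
The plan is to adapt the symmetrization argument of Theorem~\ref{thm:sym} to the minimax setting. By Remark~\ref{remark:minimax}, whenever $\POVMA^\c$ is nonempty some minimax solution $(\nu, \hPhi)$ exists. For each $g \in \mG$ I would set $\nu^g_k \equiv \nu_{g \c k}$ and $\hPhi_g(\omega) \equiv \inv{g} \c \hPhi(g \c \omega)$, and then define the averages $\mu^\opt_k \equiv |\mG|^{-1} \sumg \nu^g_k$ and $\hA^\opt(\omega) \equiv |\mG|^{-1} \sumg \hPhi_g(\omega)$. The desired symmetries Eq.~\eqref{eq:sym_minimax} then follow from appropriate re-indexings (of the form $g' = gh$ and $g' = g\inv{h}$ respectively) inside each averaged sum, exactly as in the proof of Theorem~\ref{thm:sym}.

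The substantive content is to show that $(\mu^\opt, \hA^\opt)$ is itself a minimax solution. The first main step is to establish that each transformed pair $(\nu^g, \hPhi_g)$ is a minimax solution. A change of variables $\omega' = g \c \omega$ in Eq.~\eqref{eq:PiA}, combined with Eqs.~\eqref{eq:UVW} and~\eqref{eq:sym_B}, yields the operator-valued transformation rule $\hPi^{(\hPhi_g)}_m = \inv{g} \c \hPi^{(\hPhi)}_{g \c m}$, valid for any $\hPhi \in \oPOVMA$. Feeding this into the definitions of $\eta_j$ and $f_k$ and invoking the hypotheses Eq.~\eqref{eq:sym_eval_mm} together with the cyclicity of the trace, one obtains the identities $\eta_j(\hPhi_g) = \eta_{g \c j}(\hPhi)$ and $f_k(\hPhi_g) = f_{g \c k}(\hPhi)$. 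The first, combined with $\hPhi_g(\Omega) = \identA$ (since the $\mG$-action permutes $\Omega$ and $\hV_g$ is unitary or anti-unitary), gives $\hPhi_g \in \POVMA^\c$; the second yields $F(\nu^g, \hPhi_g) = \sum_k \nu_{g \c k} f_{g \c k}(\hPhi) = F(\nu, \hPhi) = F^\opt$, so $(\nu^g, \hPhi_g)$ is again a minimax solution.

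The second main step is to average. From the saddle-point characterization Eq.~\eqref{eq:minimax_saddle} one obtains the standard rectangular structure of saddle points of a concave--convex game: if $(\mu_1, \hA_1)$ and $(\mu_2, \hA_2)$ are both minimax solutions, then so are the cross pairs $(\mu_1, \hA_2)$ and $(\mu_2, \hA_1)$, and consequently the set of minimax priors and the set of minimax POVMs are each convex, with any pair drawn from them again being a minimax solution. Applying this to the family $\{(\nu^g, \hPhi_g)\}_{g \in \mG}$ and convex-combining separately in each coordinate, $(\mu^\opt, \hA^\opt)$ emerges as a minimax solution.

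I expect the principal obstacle to lie in rigorously establishing the operator-valued identity $\hPi^{(\hPhi_g)}_m = \inv{g} \c \hPi^{(\hPhi)}_{g \c m}$ for a general, non-symmetric $\hPhi$, since it requires care in pushing the operator-valued measure $\hPhi$ forward along the permutation $\omega \mapsto g \c \omega$ of $\Omega$ inside the integral defining $\hPiA_m$, and in correctly distributing $\inv{g} \c (\cdot)$ across the tensor product via Eq.~\eqref{eq:UVW}. Once this identity is in hand, the remaining computations are essentially those already appearing in the proof of Theorem~\ref{thm:sym}, while the rectangular-structure argument for saddle points is a well-known game-theoretic fact.
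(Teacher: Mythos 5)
Your construction of the symmetrized pair $(\mu^\opt,\hA^\opt)$ is exactly the paper's: the same relabelled prior $\nu^g_k=\nu_{g\c k}$, the same $\hPhi_g(\omega)=\inv{g}\c\hPhi(g\c\omega)$, the same group averages, and the same covariance identity $\hPi^{(\hPhi_g)}_m=\inv{g}\c\hPi^{(\hPhi)}_{g\c m}$ --- which is precisely Eq.~\eqref{eq:sym_g_Pi}, already proved in Appendix~\ref{append:sym} for arbitrary $\hPhi\in\POVMA^\c$, so the ``principal obstacle'' you anticipate is in fact already dispatched there and simply reused. Where you genuinely diverge is the final verification. The paper never passes through the structure of the saddle-point set: it invokes condition~(2) of Theorem~\ref{thm:minimax} and proves $f_k(\hA^\opt)\ge F^\opt(\mu^\opt)$ by chaining $f_k(\hA^\opt)=|\mG|^{-1}\sum_{g}f_{g\c k}(\hPhi)\ge F^\opt(\zeta)$ (each summand is $\ge F^\opt(\zeta)$ by Theorem~\ref{thm:minimax} applied to the original solution) with $F^\opt(\mu^\opt)\le|\mG|^{-1}\sum_{g}F^\opt[\zeta^\g]=F^\opt(\zeta)$ (convexity of $F^\opt(\cdot)$ as a pointwise maximum of affine functions, plus invariance of $F^\opt$ under the relabelling). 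You instead appeal to the rectangularity and coordinatewise convexity of the saddle-point set of a concave--convex game; both routes are valid, yours being slightly more generic game theory and the paper's being more self-contained given that Theorem~\ref{thm:minimax} is already in hand. One step of yours does need tightening: you conclude that each $(\nu^g,\hPhi_g)$ is a minimax solution from $F(\nu^g,\hPhi_g)=F^\opt$ alone, but equality of the value does not by itself yield the saddle-point property that your rectangularity argument then consumes (for $F(x,y)=xy$ on $[-1,1]^2$ the point $(1,0)$ attains the game value $0$ without being a saddle point). The fix is immediate from material you already state: $(\mu,\hA)\mapsto(\mu^g,\hA_g)$ is an $F$-preserving bijection of $\Prob\times\POVMA^\c$ acting separately on the two factors, hence carries saddle points to saddle points; alternatively, $f_k(\hPhi_g)=f_{g\c k}(\hPhi)\ge F^\opt(\zeta)=F^\opt[\zeta^\g]$ verifies condition~(2) of Theorem~\ref{thm:minimax} for $(\nu^g,\hPhi_g)$ directly.
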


\section{Examples} \label{sec:example}

In this section, we apply our results to the problem of finding an optimal inconclusive sequential measurement.
Also, we derive an analytical expression of an optimal inconclusive sequential measurement for double trine states.
Note that one can find other examples of generalized state discrimination problems in Subsec.~II.B
of Ref.~\cite{Nak-Kat-Usu-2015-general}.

\subsection{Optimal inconclusive sequential measurement} \label{subsec:inc}

An optimal inconclusive measurement is a measurement that maximizes the average success probability
with a fixed average inconclusive probability, $\pinc$.
We here consider its sequential-measurement version.

Let us consider the problem of obtaining an optimal inconclusive sequential measurement,
$\PiA = \{ \hPiA_r \}_{r=0}^R$ $(R \ge 2)$,
for the states $\{ \trho_r \}_{r=0}^{R-1}$ with prior probabilities $\{ \xi_r \}_{r=0}^{R-1}$.
The detection operator $\hPiA_r$ $~(r \in \mI_R)$ corresponds to identification of
the state $\hrho_r$, while $\hPiA_R$ corresponds to the inconclusive answer.
The problem can be formulated as follows:
\begin{eqnarray}
 \begin{array}{ll}
  {\rm maximize} & \displaystyle \PS(\hA) \equiv \sumr \Tr \left[ \hrho_r \hPiA_r \right] \\
  {\rm subject~to} & \displaystyle \hA \in \POVMA, ~ \sumr \Tr \left[ \hrho_r \hPiA_R \right] = \pinc, \\
 \end{array} \label{eq:inc_primal}
\end{eqnarray}
where $\hrho_r \equiv \xi_r \trho_r$.
This problem is equivalent to Problem~P with
\begin{eqnarray}
 M &=& R + 1, \nonumber \\
 J &=& 1, \nonumber \\
 \hc_m &=&
  \left\{
   \begin{array}{ll}
	\hrho_m, & ~ m < R, \\
	0, & ~ m = R,
   \end{array} \right. \nonumber \\
 \ha_{0,m} &=&
  \left\{
   \begin{array}{ll}
	0, & ~ m < R, \\
	\displaystyle - \sumr \hrho_r, & ~ m = R,
   \end{array} \right. \nonumber \\
 b_0 &=& - \pinc, \label{eq:inc_param}
\end{eqnarray}
where we use the fact that
the problem remains unchanged when the second constraint of Eq.~\eqref{eq:inc_primal}
is replaced with $\sumr \Tr[\hrho_r \hPiA_R] \ge \pinc$.
Substituting Eq.~\eqref{eq:inc_param} into Problem~DP yields
the following dual problem:
\begin{eqnarray}
 \begin{array}{ll}
  {\rm minimize} & \displaystyle s(\hX, \lambda) = \Tr~\hX - \lambda \pinc \\
  {\rm subject~to} & (\hX, \lambda) \in \mX^\c, \\
 \end{array} \label{eq:inc_dual}
\end{eqnarray}
where
\begin{eqnarray}
 \mX^\c &=& \left\{ (\hX, \lambda) \in \mSA \otimes \Realp : \hX \ge \hsw(\lambda),
              ~ \forall \omega \in \Omega \right\}, \nonumber \\
 \hsw(\lambda) &=& \TrB \sumr \hrho_r \left[ \hBw_r + \lambda \hBw_R \right].
  \label{eq:inc_hsw}
\end{eqnarray}

From Theorem~\ref{thm:condition},
$\hA \in \POVMA^\c$ is an optimal solution if and only if
the following equations hold:
\begin{eqnarray}
 [\hX^\opt - \hsw(\lambda^\opt)] \hAw &=& 0, ~ \forall \omega \in \Omega, \nonumber \\
 \lambda^\opt \left[ \sumr \Tr \left[ \hrho_r \hPiA_R \right] - \pinc \right] &=& 0,
  \label{eq:inc_cond}
\end{eqnarray}
where $(\hX^\opt, \lambda^\opt)$ is an optimal solution to Eq.~\eqref{eq:inc_dual}.

\subsection{Optimal inconclusive sequential measurement for double trine states} \label{subsec:dtrine}

We derive an optimal solution to the problem of Eq.~\eqref{eq:inc_primal}
for double trine states with equal probabilities.
Note that, in the cases of $\pinc = 0$ (corresponding to a minimum-error sequential measurement)
and $\pinc = 1/2$ (corresponding to an optimal unambiguous sequential measurement),
optimal solutions are given in Refs.~\cite{Chi-Hsi-2013} and \cite{Chi-Dua-Hsi-2014},
respectively.

Double trine states with equal probabilities can be expressed by
$\{ \hrho_m \equiv \frac{1}{3} \ket{\psi_m}\bra{\psi_m} \}_{m=0}^2$ with
\begin{eqnarray}
 \ket{\psi_m} &\equiv& \ket{\phi_m} \otimes \ket{\phi_m}, \nonumber \\
 \ket{\phi_m} &\equiv& \cos \frac{2\pi m}{3} \ket{0} + \sin \frac{2\pi m}{3} \ket{1}. \label{eq:dtrine_phi}
\end{eqnarray}
$\{ \ket{\phi_m} \}$ has the symmetry of $\ket{\phi_m} = \hV_{\rm rot}^k \ket{\phi_{m \ominus k}}$,
where
\begin{eqnarray}
 \hV_{\rm rot} &\equiv& - \frac{1}{2} \ident + \frac{\sqrt{3}}{2} (\ket{1}\bra{0} - \ket{0}\bra{1}),
\end{eqnarray}
which is a unitary operator corresponding to a rotation of $\frac{2\pi}{3}$,
and $m \ominus k$ is the remainder of the division of $m - k$ by 3.
Also, since $\braket{k | \phi_m}$ $~(k \in \{ 0, 1 \})$ is real,
$\hV_{\rm conj} \ket{\phi_m} = \ket{\phi_m}$ holds,
where $\hV_{\rm conj}$ is the anti-unitary operator of complex conjugation
in the basis $\{ \ket{0}, \ket{1} \}$
\footnote{
Our discussion in Sec.~\ref{sec:sym} can be used 
when considering a dihedral group with order 6, $\mG = \{ p^k, p^k q \}_{k \in \mI_3}$,
which is generated by a rotation $p$ and a reflection $q$ with $pqp = q$.
To be concrete, let $\hV_{p^kq^l} = \hV_{\rm rot}^k \hV_{\rm conj}^l$
for any $k \in \mI_3$ and $l \in \mI_2$,
and let $\hU_g = \hV_g \otimes \hV_g$;
then, we can consider group actions of $\mG$.
Note that double trine states also have the symmetry of
$(\ket{0}\bra{0} - \ket{1}\bra{1})\ket{\phi_m} = \ket{\phi_{\kappa(m)}}$
($\kappa(0) = 0$, $\kappa(1) = 2$, and $\kappa(2) = 1$);
however, we do not need this symmetry to obtain their optimal sequential measurement.}.

First, we derive an optimal solution $(\hX^\opt, \lambda^\opt)$ to the problem
of Eq.~\eqref{eq:inc_dual}.
Assume, without loss of generality, that $\hX^\opt$ commutes with $\hV_{\rm rot}$ and $\hV_{\rm conj}$
(see Theorem~\ref{thm:sym});
then, it follows that such $\hX^\opt$ must be proportional to $\identA$.
After some computations, we obtain an optimal solution $(\hX^\opt, \lambda^\opt)$ as follows
(see Appendix~\ref{append:dtrine}):
\begin{eqnarray}
 \hX^\opt &=& \left( \frac{1}{2} + \frac{3 - 2\pinc}{4 \sqrt{3 - 4 \pinc}} \right) \identA, \nonumber \\
 \lambda^\opt &=& \frac{1}{2} + \frac{1}{2\sqrt{3 - 4 \pinc}}.
  \label{eq:dtrine_lambda_opt}
\end{eqnarray}
Thus, the average success probability of an optimal inconclusive sequential measurement, $\PS^\opt$,
which is equivalent to the optimal value $s(\hX^\opt, \lambda^\opt)$, is given by
\begin{eqnarray}
 \PS^\opt &=& s(\hX^\opt, \lambda^\opt) = \Tr~\hX^\opt - \lambda^\opt \pinc \nonumber \\
  &=& \frac{1}{2}(1 - \pinc)  + \frac{1}{4} \sqrt{3 - 4 \pinc}.
\end{eqnarray}
When $\pinc = 1/2$, $\PS^\opt + \pinc = 1$ holds;
i.e., the average error probability, $1 - \PS^\opt - \pinc$, is zero.
This indicates that there exists an unambiguous sequential measurement
with the average inconclusive probability of $1/2$.
Since the case of $\pinc > 1/2$ is trivial, assume $0 \le \pinc \le 1/2$
(in this case, $\frac{1}{2} + \frac{1}{2\sqrt{3}} \le \lambda^\opt \le 1$ holds).

Next, we derive an optimal sequential measurement.
Let $\ket{\phi^\perp_m}$ be the vector expressed by
\begin{eqnarray}
 \ket{\phi^\perp_m} &\equiv& - \sin \frac{2\pi m}{3} \ket{0} + \cos \frac{2\pi m}{3} \ket{1},
\end{eqnarray}
which satisfies $\braket{\phi^\perp_m | \phi_m} = 0$ and
$\ket{\phi^\perp_m} = \hV_{\rm rot}^k \ket{\phi^\perp_{m \ominus k}}$.
From the discussion in Appendix~\ref{append:dtrine} and the symmetry of $\{ \ket{\phi_m} \}$,
$\hX^\opt - \hsw(\lambda^\opt)$ is rank one
(i.e., the largest eigenvalue of $\hsw(\lambda^\opt)$ is $\upsilon(\lambda^\opt)$,
 which is defined in Appendix~\ref{append:dtrine})
if and only if $\{ \hBw_m \}_{m=0}^3$ is expressed as
\begin{eqnarray}
 \hBw_m &=& \hB^{(\omega_k)}_m \equiv
  \left\{
   \begin{array}{ll}
    \hV_{\rm rot}^k \hB^\bullet_{m \ominus k} \hV_{\rm rot}^{-k}, & m < 3, \\
    \displaystyle \frac{4}{3} \pinc \ket{\phi^\perp_k}\bra{\phi^\perp_k}, & m = 3, \\
   \end{array} \right. \label{eq:dtrine_hBw}
\end{eqnarray}
where $\{ \hB^\bullet_m \}$ is given by Eq.~\eqref{eq:dtrine_B} with $\alpha = 4\pinc/3$,
and $\omega_k \in \Omega$ $~(k \in \mI_3)$ is an index
corresponding to the POVM $B^{(\omega_k)} \equiv \{ \hB^{(\omega_k)}_m \}$
defined by Eq.~\eqref{eq:dtrine_hBw}.
In Eq.~\eqref{eq:dtrine_hBw}, we use
\begin{eqnarray}
 \hV_{\rm rot}^k \hB^\bullet_3 \hV_{\rm rot}^{-k}
  &=& \alpha \hV_{\rm rot}^k \ket{\phi^\perp_0} \bra{\phi^\perp_0} \hV_{\rm rot}^{-k}
 = \frac{4}{3} \pinc \ket{\phi^\perp_k}\bra{\phi^\perp_k}. \nonumber \\
\end{eqnarray}
Using the fact that, from Eq.~\eqref{eq:inc_cond},
the support of $\hA(\omega)$ is included in the kernel of $\hX^\opt - \hsw(\lambda^\opt)$,
$\hA$ can be obtained in the following way.
When $\omega = \omega_k$,
since Eq.~\eqref{eq:dtrine_schatten} with $\theta = \frac{2\pi k}{3}$ holds,
$\hV_{\rm rot}^k \ket{1} = \ket{\phi^\perp_k}$ is in the kernel of $\hX^\opt - \hsw(\lambda^\opt)$.
Then, $\hA(\omega_k)$ must be proportional to $\ket{\phi^\perp_m} \bra{\phi^\perp_m}$.
When $\omega \neq \omega_k$, the kernel of $\hX^\opt - \hsw(\lambda^\opt)$ is $\{ 0 \}$,
this implies that $\hA(\omega) = 0$.
Thus, an optimal inconclusive sequential measurement, $\Pi^\opt$,
is expressed by $\Pi^\opt = \PiA$, where
\begin{eqnarray}
 \hA(\omega) &=&
  \left\{
   \begin{array}{ll}
    \displaystyle \frac{2}{3} \ket{\phi^\perp_k} \bra{\phi^\perp_k},
     & \omega = \omega_k ~ (k \in \mI_3), \\
    0, & {\rm otherwise} \\
   \end{array} \right. \label{eq:dtrine_A}
\end{eqnarray}
holds for any $\omega \in \Omega$.
It follows that $\hA$ is a POVM with three outcomes, $\{ \omega_k \}_{k=0}^2$.
From Eqs.~\eqref{eq:dtrine_hBw} and \eqref{eq:dtrine_A},
$\Pi^\opt$ can be rewritten as
\begin{eqnarray}
 \hPi^\opt_m &=& 
  \left\{
   \begin{array}{ll}
	\displaystyle \frac{2}{3} \sum_{\substack{k=0 \\ k \neq m}}^2
     \ket{\phi^\perp_k}\bra{\phi^\perp_k} \otimes
     \hV_{\rm rot}^k \hB^\bullet_{m \ominus k} \hV_{\rm rot}^{-k}, & ~ m < 3, \\
	\displaystyle \frac{8}{9} \pinc \sum_{k=0}^2
     \ket{\phi^\perp_k}\bra{\phi^\perp_k} \otimes
     \ket{\phi^\perp_k}\bra{\phi^\perp_k}, & ~ m = 3. \\
   \end{array} \right.
\end{eqnarray}

Figure~\ref{fig:sequential} shows the average success probabilities
of optimal measurements with and without the restriction that
only sequential measurements are allowed.
Note that the average success probability of an optimal inconclusive global measurement
can be computed by the method described in Ref.~\cite{Nak-Usu-Kat-2012-GUInc}.
The average error probability is zero when $\pinc \ge 1/2$ and $\pinc \ge 1/4$
in the cases of optimal inconclusive sequential and global measurements, respectively.
\begin{figure}[tb]
 \centering
 \includegraphics[scale=0.6]{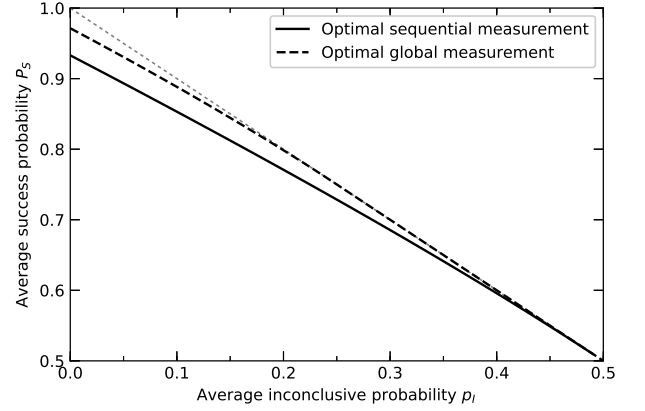}
 \caption{Average success probabilities of optimal sequential and global measurements for double trine states
 with equal prior probabilities.}
 \label{fig:sequential}
\end{figure}

\section{Conclusion}

We have studied a sequential-measurement version of the generalized state discrimination problem
discussed in Ref.~\cite{Nak-Kat-Usu-2015-general}.
Since the entire set of sequential measurements is convex, Problem~P is convex programming.
The corresponding dual problem and necessary and sufficient conditions
for an optimal sequential measurement were derived.
We also showed that for an optimization problem having a certain group symmetry,
there exists an optimal solution with the same type of symmetry.
Moreover, the minimax version of this problem was studied,
and necessary and sufficient conditions for a minimax solution were provided.
We expect that our results will be useful for the investigation
of a broad class of state discrimination problems with sequential measurements.

\begin{acknowledgments}
 We are grateful to O. Hirota of Tamagawa University for support.
 T. S. U. was supported (in part) by JSPS KAKENHI (Grant No.16H04367).
\end{acknowledgments}

\appendix

\section{Proof of Theorem~\ref{thm:zero_gap}} \label{append:zero_gap}

We will prove the cases of $f^\opt > -\infty$ and $f^\opt = -\infty$ separately.

\subsection{Case of $f^\opt > -\infty$}

From $s^\opt \ge f^\opt$, it is sufficient to show that there exists $\hA \in \POVMA^\c$
satisfying $f(\hA) \ge s^\opt$.
Indeed, in this case, $s^\opt = f^\opt$ holds from $s^\opt \le f(\hA) \le f^\opt$.

Let us consider the following set:
\begin{eqnarray}
 \mZ &\equiv& \Bigg\{ \left( \{ \hsw(\lambda) + \hxw - \hX \}_{\omega \in \Omega},
                       s(\hX, \lambda) - u \right) \nonumber \\
 & & ~~~ : (\hX, \lambda, u, \{\hxw\}_{\omega \in \Omega}) \in \mT \Bigg\},
\end{eqnarray}
where
\begin{eqnarray}
 \mT &\equiv& \Bigg\{ (\hX, \lambda, u, \{\hxw\}) : 
  (\hX, \lambda) \in \mX, s^\opt > u \in \Real, \nonumber \\
 & & ~~~~~ \hxw \in \mSA^+ \Bigg\}.
\end{eqnarray}
Since $\hxw$ is in $\mSA^+$, $\hsw(\lambda) + \hxw - \hX = 0$ holds only if $\hX \ge \hsw(\lambda)$ holds,
which implies that $\{ \hsw(\lambda) + \hxw - \hX \}_\omega = \{0\}$ holds
only if $(\hX, \lambda) \in \mX^\c$ holds.
Since $s(\hX, \lambda) \ge s^\opt > u$ holds when $(\hX, \lambda) \in \mX^\c$,
we have $(\{0\}, 0) \not\in \mZ$.
Also, we can easily see that $\mZ$ is a convex set having a nonempty interior.
Thus, from the geometric Hahn-Banach theorem (e.g., \cite{Lue-1969}),
for any $(\hX, \lambda, u, \{\hxw\}) \in \mT$,
there exists $(\{ \tAw \}_{\omega \in \Omega}, \alpha) \neq (\{0\},0)$ with $\tAw \in \mSA$ and $\alpha \in \Real$
satisfying
\begin{eqnarray}
 \Tr \int_\Omega \tAw [\hsw(\lambda) + \hxw - \hX] \mu(\dw)
  + \alpha [s(\hX, \lambda) - u] \ge 0, \nonumber \\
 \label{eq:DPL_separation0}
\end{eqnarray}
where $\mu$ is a strictly positive measure on a sigma algebra $\sigma(\Omega)$ satisfying $\mu(\Omega) = 1$.
Let $\delta_{\omega}(E)$ $~(E \subseteq \Omega)$ be the Dirac measure,
which is defined by $\delta_{\omega}(E) = 1$ if $\omega \in E$ holds,
$\delta_{\omega}(E) = 0$ otherwise.
By substituting $\hxw = t \hx \delta_{\omega'}(\omega)$ $~(\hx \ge 0, \omega' \in \Omega)$
into Eq.~\eqref{eq:DPL_separation0} and taking the limit $t \to \infty$,
we obtain $\Tr(\tA_{\omega'}\hx) \ge 0$.
Since this inequality holds for any $\hx \ge 0$ and $\omega' \in \Omega$,
$\tA_\omega \ge 0$ holds for any $\omega \in \Omega$.
Also, taking the limit $u \to -\infty$ in Eq.~\eqref{eq:DPL_separation0} gives
$\alpha \ge 0$.

To show $\alpha > 0$, assume by contradiction that $\alpha = 0$.
Substituting 
$\hX = t\hx$ $~(\hx \ge 0)$ and $\hxw = t[1 - \delta_{\omega'}(\omega)]\hx$
into Eq.~\eqref{eq:DPL_separation0} and taking the limit $t \to \infty$
gives
\begin{eqnarray}
 \Tr \int_\Omega \tAw \delta_{\omega'}(\omega) \hx \mu(\dw) &=& \Tr(\tA_{\omega'} \hx) \le 0,
\end{eqnarray}
which implies $\tA_\omega \le 0$ for any $\omega \in \Omega$.
Thus, $\tA_\omega = 0$ must hold, which contradicts $(\{ \tAw \}, \alpha) \neq (\{0\},0)$.
Therefore, $\alpha > 0$ holds.

Here, let $\hA \in \oPOVMA$ be a measure satisfying $\hAw = \tAw \mu(\omega) / \alpha$
for any $\omega \in \Omega$.
To complete the proof, we will show $\hA \in \POVMA^\c$ and $f(\hA) \ge s^\opt$.
Dividing both sides of Eq.~\eqref{eq:DPL_separation0} by $\alpha$ yields
\begin{eqnarray}
 \hspace{-1.5em}
 \Tr \int_\Omega [\hsw(\lambda) + \hxw - \hX] \hAdw
  + s(\hX, \lambda) - u \ge 0. \label{eq:DPL_separation}
\end{eqnarray}
Substituting $\hX = t\hx$ $~(\hx \in \mSA)$ into Eq.~\eqref{eq:DPL_separation}
and taking the limit $t \to \infty$ gives
\begin{eqnarray}
 \Tr~\hx &\ge& \Tr \left[ \hx \int_\Omega \hAdw \right]
  = \Tr [\hx \hA(\Omega)].
\end{eqnarray}
Since this inequality holds for any $\hx \in \mSA$,
$\hA(\Omega) = \identA$ holds.
Substituting $\lambda_j = t \delta_{j,j'}$ ($\delta_{j,j'}$ is Kronecker delta)
into Eq.~\eqref{eq:DPL_separation}
and taking the limit $t \to \infty$ gives $\eta_{j'}(\hA) \le 0$,
and thus $\hA \in \POVMA^\c$ holds.
Also, substituting $\hxw = \hX = 0$ and $\lambda = 0$ into Eq.~\eqref{eq:DPL_separation}
and taking the limit $u \to s^\opt$ gives $f(\hA) \ge s^\opt$.
Therefore, $s^\opt = f^\opt$ holds.

\subsection{Case of $f^\opt = -\infty$}

Let us consider the following set:
\begin{eqnarray}
 \mW &=& \left\{ \{ \eta_j(\hA) \}_{j=0}^{J-1} \in \Real^J : \hA \in \POVMA \right\}.
\end{eqnarray}
Since $f^\opt = -\infty$ implies that $\POVMA^\c$ is empty,
for any $\hA \in \POVMA$, there exists $j \in \mI_J$ such that $\eta_j(\hA) > 0$.
Therefore, the set $\mW' \equiv \{ \{ \beta_j \le 0 \}_{j=0}^{J-1} \in \Real^J \}$
has no intersecton with $\mW$.
We can easily verify that $\mW$ is compact and $\mW'$ is closed;
thus, by a separating hyperplane theorem (e.g., \cite{Dha-Dut-2011}),
there exist $q \equiv \{ q_j \}_{j=0}^{J-1} \in \Realp^J$ and $0 < \epsilon \in \Realp$
such that
\begin{eqnarray}
 \sumj q_j \eta_j(\hA) &>& \epsilon, ~~ \forall \hA \in \POVMA. \label{eq:zerogap_qeta}
\end{eqnarray}
Now, assume that $\sumj q_j = 1$, with no loss of generality.
Equations~\eqref{eq:POVMc} and \eqref{eq:zerogap_qeta} give
\begin{eqnarray}
 \sumj q_j \Tr \summ \ha_{j,m} \hPiA &\ge& \sumj q_j b_j + \epsilon.
\end{eqnarray}
Substituting Eq.~\eqref{eq:PiA} into this equation and doing some algebra gives
\begin{eqnarray}
 \Tr \int_\Omega \Xi(\omega) \hA(\dw) \ge 0, \label{eq:zerogap_XiA}
\end{eqnarray}
where
\begin{eqnarray}
 \Xi(\omega) &\equiv& \TrB \sumj q_j \summ \ha_{j,m} \hBw_m \nonumber \\
 & & \mbox{} - \left( \sumj q_j b_j + \epsilon \right) \frac{\identA}{\dA}
\end{eqnarray}
and $\dA = \dim~\mHA$.
Since Eq.~\eqref{eq:zerogap_XiA} holds for any $\hA \in \POVMA$,
$\Xi(\omega) \ge 0$ holds.

Let $\hX^\opt_0$ be the optimal solution to the following problem:
\begin{eqnarray}
 \begin{array}{ll}
  {\rm minimize} & \Tr~\hX_0 \\
  {\rm subject~to} & \displaystyle \hX_0 \ge \TrB \summ \hc_m \hBw_m, ~~ \forall \omega \in \Omega. \\
 \end{array} \label{eq:zerogap_X0}
\end{eqnarray}
Also, let
\begin{eqnarray}
 \hY(t, q) &\equiv& \hX^\opt_0 - t \left( \sumj q_j b_j + \epsilon \right) \frac{\identA}{\dA},
  \label{eq:zerogap_Y}
\end{eqnarray}
where $t \in \Realp$.
From Eqs.~\eqref{eq:zerogap_X0} and \eqref{eq:zerogap_Y}, we have
\begin{eqnarray}
 \hY(t, q) &\ge& \TrB \summ \hc_m \hBw_m - t \TrB \sumj q_j \summ \ha_{j,m} \hBw_m \nonumber \\
 &=& \TrB \summ \hz_m(tq) \hBw_m = \hsw(tq),
\end{eqnarray}
where the first and second lines follow from $\Xi(\omega) \ge 0$
and the definition of $\hz_m(\lambda)$ given by Eq.~\eqref{eq:z}, respectively.
Thus, $[\hY(t,q), tq] \in \mX^\c$ holds,
which gives $s[\hY(t,q), tq] \ge s^\opt$.
From Eq.~\eqref{eq:zerogap_Y}, we obtain
\begin{eqnarray}
 s^\opt &\le& s[\hY(t,q), tq] = \Tr~\hY(t,q) + t \sumj q_j b_j \nonumber \\
 &=& \Tr~\hX^\opt_0 - t \epsilon.
\end{eqnarray}
Since $\Tr~\hX^\opt_0$ is constant, $\Tr~\hX^\opt_0 - t \epsilon \to -\infty$ as $t \to \infty$.
Therefore, $s^\opt = -\infty$ holds.
\QED

\section{Proof of Theorem~\ref{thm:finite}} \label{append:finite}

\subsection{Outline}

Let $b \equiv \{ b_j \}_{j=0}^{J-1} \in \Real^J$.
Also, let $f^\bullet(\beta)$ be the optimal value of the optimization problem
obtained by replacing $b$ of Problem~P with $\beta \equiv \{ \beta_j \}_{j=0}^{J-1} \in \Real^J$.
We will first show that $f^\bullet(\beta)$ is a concave function.
We will also show that there exists an optimal solution to Problem~P with at most $\dA^2$ outcomes
if $f^\bullet(\beta)$ is strictly concave at $\beta = b$,
and with at most $(J + 1) \dA^2$ outcomes otherwise.

\subsection{Preparations}

Before proceeding to the proof, we make some preparations.
From Theorem~1 of Ref.~\cite{Chi-Dar-Sch-2007}, any $\hA \in \POVMA$ can be expressed as
\begin{eqnarray}
 \hA(\omega) &=& \int_\Gamma \hE^{(\gamma)}(\omega) p(d\gamma), \label{eq:A_E}
\end{eqnarray}
where $\hE^{(\gamma)} \in \POVMA$ is a POVM with at most $\dA^2$ outcomes,
$\Gamma$ is the entire set of indices $\gamma$
such that $\hE^{(\gamma)}$ is a POVM with at most $\dA^2$ outcomes,
and $p$ is a probability measure, which satisfies $p(\Gamma) = 1$.
From Eqs.~\eqref{eq:PiA}, \eqref{eq:primal}, and \eqref{eq:A_E}, we have
\begin{eqnarray}
 f(\hA) &=& \sum_m \Tr \left[ \hc_m \int_\Omega
                        \left[ \int_\Gamma \hE^{(\gamma)}(\dw) p(d\gamma) \right] \otimes \hBw_m \right]
 \nonumber \\
 &=& \int_\Gamma f[\hE^{(\gamma)}] p(d\gamma). \label{eq:A_E_f}
\end{eqnarray}
Let us define $\Gamma^\c$ as
\begin{eqnarray}
 \Gamma^\c &\equiv& \{ \gamma \in \Gamma : \hE^{(\gamma)} \in \POVMA^\c \},
\end{eqnarray}
which is the entire set of indices $\gamma$ such that $\hE^{(\gamma)}$ is
a feasible solution to Problem~P.
Let $\hA$ be an optimal solution to Problem~P.

We show the following lemma:
\begin{lemma} \label{lemma:pGamma}
 If $p(\Gamma^\c) = 1$ holds, then there exists an optimal solution to Problem~P
 with at most $\dA^2$ outcomes.
\end{lemma}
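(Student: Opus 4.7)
The plan is to use a standard averaging/integration argument built on the decomposition \eqref{eq:A_E} and its consequence \eqref{eq:A_E_f}. Since $\hA$ is an optimal solution to Problem~P, $f(\hA) = f^\opt$; and by hypothesis $p(\Gamma^\c) = 1$, so the integral in Eq.~\eqref{eq:A_E_f} is supported on $\Gamma^\c$. Thus
\begin{eqnarray}
f^\opt \;=\; f(\hA) \;=\; \int_{\Gamma^\c} f[\hE^{(\gamma)}] \, p(d\gamma).
\end{eqnarray}

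Next I would exploit that $\hE^{(\gamma)} \in \POVMA^\c$ for every $\gamma \in \Gamma^\c$, so $f[\hE^{(\gamma)}] \le f^\opt$ for all $\gamma$ in the domain of integration. Since the integrand is bounded above by $f^\opt$ and its average (with respect to the probability measure $p$ restricted to $\Gamma^\c$, which has total mass one) equals $f^\opt$, standard measure-theoretic reasoning forces $f[\hE^{(\gamma)}] = f^\opt$ for $p$-almost every $\gamma \in \Gamma^\c$. Because $p(\Gamma^\c) = 1 > 0$, this set of equality is nonempty, so one can pick some $\gamma^\opt \in \Gamma^\c$ with $f[\hE^{(\gamma^\opt)}] = f^\opt$.

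The POVM $\hE^{(\gamma^\opt)}$ is then feasible (by $\gamma^\opt \in \Gamma^\c$), optimal (by $f[\hE^{(\gamma^\opt)}] = f^\opt$), and has at most $\dA^2$ outcomes by construction of $\Gamma$ from Ref.~\cite{Chi-Dar-Sch-2007}. This is precisely the desired conclusion.

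The only nontrivial point is the measure-theoretic step of extracting a single good $\gamma$ from an almost-everywhere statement; this is routine but deserves a brief justification (e.g., noting that the set $\{ \gamma \in \Gamma^\c : f[\hE^{(\gamma)}] < f^\opt \}$ would otherwise carry positive $p$-measure and, together with $f[\hE^{(\gamma)}] \le f^\opt$ on the complement, would contradict the equality of the integral with $f^\opt$). I do not anticipate any genuine obstacle here; the real work in Theorem~\ref{thm:finite} lies in handling the case $p(\Gamma^\c) < 1$, which Lemma~\ref{lemma:pGamma} is designed to bypass.
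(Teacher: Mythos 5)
Your proposal is correct and follows essentially the same route as the paper: both use Eq.~\eqref{eq:A_E_f} to write $f^\opt$ as an integral over $\Gamma^\c$ (using $p(\Gamma^\c)=1$), bound the integrand by $f^\opt$ via feasibility of each $\hE^{(\gamma)}$, and conclude that some $\gamma^\opt\in\Gamma^\c$ attains $f^\opt$. The only (cosmetic) difference is that the paper extracts $\gamma^\opt$ as an $\argmax$ over $\Gamma^\c$ while you extract it from the almost-everywhere equality of the integrand with $f^\opt$, which if anything sidesteps the question of whether the supremum is attained.
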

\begin{proof}
 Let $\gamma^\opt$ be an index satisfying
 \begin{eqnarray}
  \gamma^\opt &\in& \argmax_{\gamma \in \Gamma^\c} f[\hE^{(\gamma)}].
 \end{eqnarray}
 From Eq.~\eqref{eq:A_E_f}, we have
 \begin{eqnarray}
  f^\opt &=& f(\hA) = \int_\Gamma f[\hE^{(\gamma)}] p(d\gamma) \nonumber \\
  &=& \int_{\Gamma^\c} f[\hE^{(\gamma)}] p(d\gamma)
   \le f[\hE^{(\gamma^\opt)}].
 \end{eqnarray}
 On the other hand, from $\gamma^\opt \in \Gamma^\c$ (i.e., $\hE^{(\gamma^\opt)} \in \POVMA^\c$),
 $f[\hE^{(\gamma^\opt)}] \le f^\opt$ must hold.
 Thus, $f[\hE^{(\gamma^\opt)}] = f^\opt$.
 Therefore, $\hE^{(\gamma^\opt)}$, which is a POVM with at most $\dA^2$ outcomes,
 is an optimal solution to Problem~P.
 \QED
\end{proof}

\subsection{Proof}

We first consider the case $J = 0$.
From $\Gamma^\c = \Gamma$, $p(\Gamma^\c) = 1$ holds.
Thus, from Lemma~\ref{lemma:pGamma}, there exists an optimal solution to Problem~P
with at most $\dA^2$ outcomes.
For the remainder of the proof, the case $J \ge 1$ is considered.

In the following, we will show that $f^\bullet(\beta)$ is a concave function.
It suffices to consider the range of $\beta$ such that $f^\bullet(\beta) > -\infty$.
Let $\POVMA^\bullet(\beta) \subseteq \POVMA$ be the feasible set of the optimization problem
obtained by replacing $b$ of Problem~P with $\beta$.
Now we consider distinct $\beta^{(1)}, \beta^{(2)} \in \Real^J$.
For each $k \in \{ 1, 2 \}$, there exists $\hA_k \in \POVMA^\bullet[\beta^{(k)}]$ satisfying
$f(\hA_k) = f^\bullet[\beta^{(k)}]$.
Since $t\hA_1 + (1-t)\hA_2 \in \POVMA^\bullet[t\beta^{(1)} + (1-t)\beta^{(2)}]$ with $0 \le t \le 1$ holds,
we obtain
\begin{eqnarray}
 \lefteqn{ f^\bullet[t \beta^{(1)} + (1-t) \beta^{(2)}] } \nonumber \\
 &\ge& f[t\hA_1 + (1-t)\hA_2] \nonumber \\
 &=& t f(\hA_1) + (1-t) f(\hA_2) \nonumber \\
 &=& t f^\bullet[\beta^{(1)}] + (1-t) f^\bullet[\beta^{(2)}].
\end{eqnarray}
Therefore, $f^\bullet(\beta)$ is concave.

Let us consider a linear function $f_{\rm L}(\beta)$ such that
\begin{eqnarray}
 f_{\rm L}(\beta) - f^\bullet(\beta) &\ge& f_{\rm L}(b) - f^\bullet(b) = 0.
\end{eqnarray}
Since $- f^\bullet(\beta)$ is convex and thus subdifferentiable at each point
\cite{Boy-2009},
there always exists such $f_{\rm L}(\beta)$.
Let
\begin{eqnarray}
 \mD &\equiv& \{ \beta \in \Real^J : f_{\rm L}(\beta) = f^\bullet(\beta) \}.
  \label{eq:D}
\end{eqnarray}
It follows that $\mD$ is a convex set including $b$.
Let $\mE_\mD$ be the entire set of extremal points of $\mD$.
Also, let $\mE$ be the entire set of $\beta' \in \Real^J$
such that $f^\bullet(\beta)$ is strictly concave at $\beta = \beta'$.
We can easily verify $\mE_\mD \subseteq \mE$.

First, we consider the case $b \in \mE_\mD$.
From $b \in \mE$, $f^\bullet(\beta)$ is strictly concave at $\beta = b$.
From Lemma~\ref{lemma:pGamma}, it suffices to show $p(\Gamma^\c) = 1$; 
assume by contradiction that $p(\Gamma^\c) < 1$.
Let, for each $j \in \mI_J$,
\begin{eqnarray}
 \hspace{-1em}
  \Gamma_j &\equiv& \{ \gamma \in \Gamma : \eta_k[\hE^{(\gamma)}] \le 0 ~(\forall k \in \mI_j),
  \eta_j[\hE^{(\gamma)}] > 0 \}. \nonumber \\
\end{eqnarray}
For simplicity, let $\Gamma_J \equiv \Gamma^\c$.
$\{ \Gamma_j \}_{j=0}^J$ are obviously disjoint sets satisfying $\bigcup_{j=0}^J \Gamma_j = \Gamma$.
Let $p_j \equiv p(\Gamma_j)$ and, for each $j \in \mI_{J+1}$,
\begin{eqnarray}
 \hA_j(\omega) &\equiv&
  \left\{
   \begin{array}{ll}
    \displaystyle   \int_{\Gamma_j} \hE^{(\gamma)}(\omega) \frac{p(d\gamma)}{p_j},
     & p_j > 0, \\
    0, & {\rm otherwise}; \\
   \end{array} \right. \label{eq:finite_Aj}
\end{eqnarray}
then, $\hA_j$ is in $\POVMA$ if $p_j > 0$ holds.
From Eqs.~\eqref{eq:A_E} and \eqref{eq:finite_Aj}, we have
\begin{eqnarray}
 \hA(\omega) &=& \sum_{j=0}^J \int_{\Gamma_j} \hE^{(\gamma)}(\omega) p(d\gamma) = \sum_{j=0}^J p_j \hA_j(\omega). \label{eq:A_Aj}
\end{eqnarray}
Thus, we obtain
\begin{eqnarray}
 f^\bullet(b) &=& f(\hA) = f\left( \sum_{j=0}^J p_j \hA_j \right) \nonumber \\
 &=& \sum_{j=0}^J p_j f(\hA_j) \le \sum_{j=0}^J p_j f^\bullet[\beta^{(j)}], \label{eq:finite_convex}
\end{eqnarray}
where $\beta^{(k)} \equiv \{ \eta_j(\hA_k) \}_{j=0}^{J-1}$.
The inequality follows from the fact that $\hA_j \in \POVMA^\bullet[\beta^{(j)}]$
(i.e., $f(\hA_j) \le f^\bullet[\beta^{(j)}]$) holds when $p_j > 0$.
On the other hand, it follows that $p_j < 1$ holds for any $j \in \mI_{J+1}$.
Indeed, $p_J = p(\Gamma^\c) < 1$ obviously holds.
Also, since $\eta_j[\hE^{(\gamma)}] > 0$ holds for any $j \in \mI_J$ and $\gamma \in \Gamma_j$,
if $p_j = 1$ holds for some $j \in \mI_J$, then, $\eta_j(\hA) = \eta_j(\hA_j) > 0$ holds from Eq.~\eqref{eq:finite_Aj},
which contradicts $\hA \in \POVMA^\c$.
Thus, there exist at least two distinct integers $k \in \mI_{J+1}$ satisfying $p_k > 0$.
This implies that, from Eq.~\eqref{eq:finite_convex},
$f^\bullet(\beta)$ is not strictly concave at $\beta = b$ (i.e., $b \not\in \mE$),
which contradicts $b \in \mE_\mD \subseteq \mE$.
Therefore, $p(\Gamma^\c) = 1$ must hold.
From Lemma~\ref{lemma:pGamma},
there exists an optimal solution to Problem~P with at most $\dA^2$ outcomes.

Next, we consider the case $b \not\in \mE_\mD$.
Since $\mD$ is convex, from the finite-dimensional version of
the Krein-Milman theorem \cite{Kre-Mil-1940},
$\mD$ is the convex hull of $\mE_\mD$.
Thus, from Carath\'{e}odory's theorem,
there exists a set of $J+1$ points $\{ b^{(j)} \}_{j=0}^J \subseteq \mE_\mD$
such that $b \in \mD$ lies in the convex hull of $\{ b^{(j)} \}$
(note that $b^{(j)}$ and $b^{(j')}$ $~(j \neq j')$ can be the same).
This indicates that there exists $\{ q_j \}_{j=0}^J \in \Realp^{J+1}$ with $\sum_{j=0}^J q_j = 1$
such that $b = \sum_{j=0}^J q_j b^{(j)}$.
From $b^{(j)} \in \mE_\mD$, similar to the above discussion,
it follows that, for each $j \in \mI_{J+1}$, there exists $\gamma_j \in \Gamma$ satisfying
$f[\hE^{(\gamma_j)}] = f^\bullet[b^{(j)}]$ and $\hE^{(\gamma_j)} \in \POVMA^\bullet[b^{(j)}]$.
Using such $\gamma_j$, let
\begin{eqnarray}
 \hA' &\equiv& \sum_{j=0}^J q_j \hE^{(\gamma_j)};
\end{eqnarray}
then, we have
\begin{eqnarray}
 f(\hA') &=& f \left[ \sum_{j=0}^J q_j \hE^{(\gamma_j)} \right]
  = \sum_{j=0}^J q_j f [\hE^{(\gamma_j)}] \nonumber \\
 &=& \sum_{j=0}^J q_j f^\bullet [b^{(j)}]
  = f^\bullet(b) = f^\opt.
\end{eqnarray}
The fourth equality follows from the fact that,
from $b, b^{(j)} \in \mD$ and Eq.~\eqref{eq:D},
$f^\bullet(b) = f_{\rm L}(b)$ and $f^\bullet[b^{(j)}] = f_{\rm L}[b^{(j)}]$ hold,
and that $f_{\rm L}(\beta)$ is linear.
Also, from $\hE^{(\gamma_j)} \in \POVMA^\bullet[b^{(j)}]$,
$\hA' \in \POVMA^\bullet(b) = \POVMA^\c$ holds.
Thus, $\hA'$, which is a POVM with at most $(J+1)\dA^2$ outcomes,
is an optimal solution to Problem~P.
\QED

\section{Proof of Theorem~\ref{thm:sym}} \label{append:sym}

Using Eq.~\eqref{eq:group_action_S}, we can easily verify that,
the following equations hold for any $g \in \mG$, $c \in \Real$, and $\hQ, \hR \in \mS$:
\begin{eqnarray}
 g \c (\hQ + \hR) &=& g \c \hQ + g \c \hR, \nonumber \\
 g \c (\hQ\hR) &=& (g \c \hQ)(g \c \hR), \nonumber \\
 g \c (c \hQ) &=& c (g \c \hQ), \label{eq:g_c} \nonumber \\
 g \c \hat{1} &=& \hat{1}, \label{eq:g_1} \nonumber \\
 \Tr(g \c \hQ) &=& \Tr~\hQ, \nonumber \\
 g \c \hQ &\ge& 0, ~~ \forall \hQ \ge 0, \nonumber \\
 g \c \TrB~\hQ &=& \TrB(g \c \hQ). \label{eq:sym_form1}
\end{eqnarray}
The similar equations (except for the last one) for $\mSA$ and $\mSB$ instead of $\mS$ also hold.
Also, from Eqs.~\eqref{eq:UVW} and \eqref{eq:group_action_SA},
we have that for any $\hQ^{(\A)} \in \mSA$ and $\hQ^{(\B)} \in \mSB$,
\begin{eqnarray}
 g \c [\hQ^{(\A)} \otimes \hQ^{(\B)}]
  &=& [g \c \hQ^{(\A)}] \otimes [g \c \hQ^{(\B)}]. \label{eq:sym_form2}
\end{eqnarray}
In what follows, we will often make use of these equations without mentioning it.

Let, for any $g \in \mG$ and $\hPhi \in \POVMA^\c$,
\begin{eqnarray}
 \hPhi^\g(\omega) &\equiv& \inv{g} \c \hPhi(g \c \omega). \label{eq:sym_Phi}
\end{eqnarray}
From Eq.~\eqref{eq:sym_B}, we obtain
\begin{eqnarray}
 g \c \hPi^{(\hPhi^\g)}_m &=& \int_\Omega [g\c \hPhi^\g(\dw)] \otimes \left[ g \c \hBw_m \right] \nonumber \\
 &=& \int_\Omega \hPhi[d(g \c \omega)] \otimes \hB^{(g \c \omega)}_{g \c m} \nonumber \\
 &=& \hPi^{(\hPhi)}_{g \c m}. \label{eq:sym_g_Pi}
\end{eqnarray}

We first show that the mapping $\kappa_g: \hPhi \mapsto \hPhi^\g$ is
bijective on $\POVMA^\c$.
We can easily verify that $\hPhi^\g$ is a POVM.
We have that for any $j \in \mI_J$,
\begin{eqnarray}
 \summ \Tr \left[ \ha_{j,m} \hPi^{(\hPhi^\g)}_m \right]
 &=& \summ \Tr \left[ (g \c \ha_{j,m}) \left[ g \c \hPi^{(\hPhi^\g)}_m \right] \right] \nonumber \\
 &=& \summ \Tr \left[ \ha_{g \c j,g \c m} \hPi^{(\hPhi)}_{g \c m} \right] \nonumber \\
 &=& \sum_{m'=0}^{M-1} \Tr \left[ \ha_{g \c j,m'} \hPi^{(\hPhi)}_{m'} \right] \nonumber \\
 &\le& b_{g \c j} = b_j, \label{eq:sym_A_Phig}
\end{eqnarray}
where $m' = g \c m$.
The second and fourth lines follow from Eq.~\eqref{eq:sym_g_Pi} and $\hPhi \in \POVMA^\c$, respectively.
Thus, $\hPhi^\g$ is in $\POVMA^\c$.
Also, $\kappa_{\inv{g}}$ is the inverse mapping of $\kappa_g$.
Therefore, $\kappa_g$ is bijective on $\POVMA^\c$.

We next define $\hA \in \POVMA^\c$ as
\begin{eqnarray}
 \hAw &\equiv& \frac{1}{|\mG|} \sumg \hPhi^\g(\omega), \label{eq:sym_A_Phi}
\end{eqnarray}
and show Eq.~\eqref{eq:sym_A}, $\hA \in \POVMA^\c$, and $f(\hA) = f(\hPhi)$.
We have that for any $g \in \mG$ and $m \in \mI_M$,
\begin{eqnarray}
 g \c \hAw &=& \frac{1}{|\mG|} \sum_{h \in \mG} g \c \hPhi^{(h)}(\omega) \nonumber \\
 &=& \frac{1}{|\mG|} \sum_{h' \in \mG} \inv{h'} \c \hPhi(h' \c g \c \omega) \nonumber \\
 &=& \frac{1}{|\mG|} \sum_{h' \in \mG} \hPhi^{(h')}(g \c \omega)
 = \hA(g \c \omega), \label{eq:sym_gA}
\end{eqnarray}
where $h' = h \c \inv{g}$.
This gives Eq.~\eqref{eq:sym_A}.
From Eq.~\eqref{eq:sym_A_Phig}, we have that for any $j \in \mI_J$,
\begin{eqnarray} \hspace{-1em}
 \summ \Tr \left[ \ha_{j,m} \hPiA_m \right]
  &=& \frac{1}{|\mG|} \sumg \summ \Tr \left[ \ha_{j,m} \hPi^{(\hPhi^\g)}_m \right] \le b_j.
  \nonumber \\
\end{eqnarray}
Thus, $\hA \in \POVMA^\c$ holds.
Moreover, we obtain
\begin{eqnarray}
 f(\hA) &=& \summ \Tr \left[ \hc_m \hPiA_m \right] \nonumber \\
 &=& \frac{1}{|\mG|} \sumg \summ \Tr \left[ \hc_m \hPi^{(\hPhi^\g)}_m \right] \nonumber \\
 &=& \frac{1}{|\mG|} \sumg \summ \Tr
  \left[ (g \c \hc_m) \left[ g \c \hPi^{(\hPhi^\g)}_m \right] \right] \nonumber \\
 &=& \frac{1}{|\mG|} \sumg \summ \Tr
  \left[ \hc_{g \c m} \hPi^{(\hPhi)}_{g \c m} \right] \nonumber \\
 &=& \frac{1}{|\mG|} \sumg f(\hPhi) \nonumber = f(\hPhi),
\end{eqnarray}
where the fourth line follows from Eq.~\eqref{eq:sym_g_Pi}.
In particular, if $\hPhi$ is an optimal solution to Problem~P,
then so is $\hA$.

We finally show that there exists $(\hX, \lambda) \in \mX^\c$ satisfying Eq.~\eqref{eq:sym_Xlambda}.
Let
\begin{eqnarray}
 \hY^\g &\equiv& g \c \hY, \nonumber \\
 \nu^\g &\equiv& \{ \nu_j^\g \equiv \nu_{\inv{g} \c j} \}_{j=0}^{J-1};
\end{eqnarray}
then, we have that for any $g \in \mG$ and $m \in \mI_M$,
\begin{eqnarray}
 \hspace{-1em}
 g \c \hz_m(\nu) &=& \hc_{g \c m} - \sumj \nu_j \ha_{g \c j,g \c m} \nonumber \\
 &=& \hc_{g \c m} - \sumj \nu_{g \c j}^\g \ha_{g \c j,g \c m} \nonumber \\
 &=& \hz_{g \c m}[\nu^\g].
\end{eqnarray}
Thus, we have that for any $\omega \in \Omega$,
\begin{eqnarray}
 \hY^\g &\ge& g \c \hsw(\nu) \nonumber \\
 &=& \TrB \summ \left[ g \c \hz_m(\nu) \right] \left[ g \c \hBw_m \right] \nonumber \\
 &=& \TrB \summ \hz_{g \c m}[\nu^\g] \hB^{(g \c \omega)}_{g \c m} \nonumber \\
 &=& \hsigma_{g \c \omega}[\nu^\g], \label{eq:sym_Xl1}
\end{eqnarray}
i.e., $[\hY^\g, \nu^\g] \in \mX^\c$.
Also, we obtain
\begin{eqnarray}
 s[\hY^\g, \nu^\g] &=& \Tr~\hY^\g + \sumj \nu^\g_j b_j \nonumber \\
 &=& \Tr~\hY + \sumj \nu_{\inv{g} \c j} b_{\inv{g} \c j} \nonumber \\
 &=& s(\hY, \nu). \label{eq:sym_Xl2}
\end{eqnarray}
Let us define $(\hX, \lambda)$ as
\begin{eqnarray}
 \hX &\equiv& \frac{1}{|\mG|} \sumg \hY^\g, ~~~
  \lambda_j \equiv \frac{1}{|\mG|} \sumg \nu_j^\g. \label{eq:sym_Xlambda2}
\end{eqnarray}
We can easily verify that Eq.~\eqref{eq:sym_Xlambda} holds.
We have that for any $\omega \in \Omega$,
\begin{eqnarray}
 \hsw(\lambda) &=& \frac{1}{|\mG|} \TrB \sumg \summ \hz_m[\nu^\g] \hBw_m \nonumber \\
 &=& \frac{1}{|\mG|} \sumg \hsw[\nu^\g],
\end{eqnarray}
which gives
\begin{eqnarray}
 \hX - \hsw(\lambda) &=& \frac{1}{|\mG|} \sumg [\hY^\g - \hsw[\nu^\g]] \ge 0,
\end{eqnarray}
i.e., $(\hX, \lambda) \in \mX^\c$.
Moreover, from Eqs.~\eqref{eq:sym_Xl2} and \eqref{eq:sym_Xlambda2},
we obtain
\begin{eqnarray}
 \hspace{-1em}
 s(\hX, \lambda) &=& \Tr~\hX + \sumj \lambda_j b_j \nonumber \\
 &=& \frac{1}{|\mG|} \sumg \left[ \Tr~\hY^{(b)} + \sumj \nu_j^\g b_j \right] \nonumber \\
 &=& \frac{1}{|\mG|} \sumg s[\hY^\g, \nu^\g] = s(\hY, \nu).
\end{eqnarray}
In particular, if $(\hY, \nu)$ is an optimal solution to Problem~DP,
then so is $(\hX, \lambda)$.
\QED

\section{Proof of Theorem~\ref{thm:sym_minimax}} \label{append:sym_minimax}

Let $(\zeta, \hPhi)$ be a minimax solution to Problem~$\Pm$.
Also, let $\mu^\opt \equiv \{ \mu^\opt_k \}_{k=0}^{K-1}$ with
\begin{eqnarray}
 \mu^\opt_k &\equiv& \frac{1}{|\mG|} \sum_{g \in \mG} \zeta_{g \c k}.
\end{eqnarray}
We can see that $\mu^\opt \in \Prob$ and the first line of Eq.~\eqref{eq:sym_minimax} hold.
Moreover, similar to Eq.~\eqref{eq:sym_A_Phi},
let $\hA^\opt(\omega) \equiv |\mG|^{-1} \sum_{g \in \mG} \hPhi^\g(\omega)$,
where $\hPhi^\g$ is defined by Eq.~\eqref{eq:sym_Phi};
then, from Eq.~\eqref{eq:sym_gA}, the second line of Eq.~\eqref{eq:sym_minimax} holds.
The only thing we have to show now is that $(\mu^\opt, \hA^\opt)$ is also a minimax solution.
From Theorem~\ref{thm:minimax},
it suffices to show $f_k(\hA^\opt) \ge F^\opt(\mu^\opt)$ for any $k \in \mI_K$.
In what follows, we will show $f_k(\hA^\opt) \ge F^\opt(\zeta)$ and $F^\opt(\zeta) \ge F^\opt(\mu^\opt)$.

First, we show $f_k(\hA^\opt) \ge F^\opt(\zeta)$ for any $k \in \mI_K$.
We have that for any $k \in \mI_K$,
\begin{eqnarray}
 f_k(\hA^\opt) &=& \frac{1}{|\mG|} \summ \sum_{g \in \mG}
  \Tr \left[ \hc_{k,m} \hPi^{(\hPhi^\g)}_m \right] + d_k \nonumber \\
 &=& \frac{1}{|\mG|} \summ \sum_{g \in \mG}
  \Tr \left[ (g \c \hc_{k,m}) \hPi^{(\hPhi)}_{g \c m} \right] + d_k \nonumber \\
 &=& \frac{1}{|\mG|} \sum_{m'=0}^{M-1} \sum_{g \in \mG}
  \Tr \left[ \hc_{g \c k,m'} \hPi^{(\hPhi)}_{m'} \right] + d_k \nonumber \\
 &=& \frac{1}{|\mG|} \sum_{g \in \mG} \left[ \sum_{m'=0}^{M-1}
  \Tr \left[ \hc_{g \c k,m'} \hPi^{(\hPhi)}_{m'} \right] + d_{g \c k} \right] \nonumber \\
 &=& \frac{1}{|\mG|} \sum_{g \in \mG} f_{g \c k}(\hPhi) \ge F^\opt(\zeta),
\end{eqnarray}
where $m' = g \c m$.
The second line follows from Eq.~\eqref{eq:sym_g_Pi}.
The inequality follows from the fact that, from Theorem~\ref{thm:minimax},
$f_k(\hPhi) \ge F^\opt(\zeta)$ holds for any $k \in \mI_K$.

Next, we show $F^\opt(\zeta) \ge F^\opt(\mu^\opt)$.
Let $\zeta^\g \equiv \{ \zeta^\g_k \equiv \zeta_{g \c k} \}_{k=0}^{K-1}$;
then, we have that for any $g \in \mG$,
\begin{eqnarray}
 F^\opt[\zeta^\g] &=& \max_{\Phi \in \POVMA^\c} \sum_{k=0}^{K-1} \zeta_{g \c k}
  \left[ \summ \Tr \left[ \hc_{k,m} \hPi^{(\hPhi)}_m \right] + d_k \right] \nonumber \\
 &=& \max_{\Phi \in \POVMA^\c} \sum_{k'=0}^{K-1} \zeta_{k'}
  \left[ \sum_{m'=0}^{M-1} \Tr \left[ \hc_{k',m'} \hPi^{(\hPhi^{(\inv{g})})}_{m'} \right] + d_{k'} \right] \nonumber \\
 &=& F^\opt(\zeta), \label{eq:sym_minimax_Fstar_eta}
\end{eqnarray}
where $k' = g \c k$ and $m' = g \c m$.
From Eq.~\eqref{eq:sym_minimax_Fstar_eta}, we obtain
\begin{eqnarray}
 F^\opt(\mu^\opt) &=& \max_{\Phi \in \POVMA^\c} \frac{1}{|\mG|} \sum_{g \in \mG} \sum_{k=0}^{K-1} \zeta^\g_k f_k(\Phi)
  \nonumber \\
 &\le& \frac{1}{|\mG|} \sum_{g \in \mG} \max_{\Phi \in \POVMA^\c} \sum_{k=0}^{K-1} \zeta^\g_k f_k(\Phi)
  \nonumber \\
 &=& \frac{1}{|\mG|} \sum_{g \in \mG} F^\opt[\zeta^\g] = F^\opt(\zeta);
\end{eqnarray}
thus, $(\mu^\opt, \hA^\opt)$ is a minimax solution.
\QED

\section{Derivation of $(\hX^\opt, \lambda^\opt)$} \label{append:dtrine}

We will obtain an optimal solution $(\hX^\opt, \lambda^\opt)$ to the problem
of Eq.~\eqref{eq:inc_dual}.
This can be derived by extending methods described in Refs.~\cite{Chi-Hsi-2013,Cro-Bar-Wei-2017},
in which a minimum-error sequential measurement for double trine states is obtained.

Now, we consider the problem of Eq.~\eqref{eq:inc_dual} in which $\lambda$ is fixed.
An optimal solution, denoted as $\hX^\opt(\lambda)$, to this problem can be expressed by
$\hX^\opt(\lambda) = \upsilon(\lambda) \identA$,
where $\upsilon(\lambda)$ is a real-valued function of $\lambda$.
It follows that $\upsilon(\lambda)$ is the minimum value satisfying
$\upsilon(\lambda) \identA \ge \hsw(\lambda)$ for any $\omega \in \Omega$,
which means that $\upsilon(\lambda)$ is the maximum value of
the largest eigenvalues of $\hsw(\lambda)$ over all $\omega \in \Omega$.

Substituting Eq.~\eqref{eq:dtrine_phi} into Eq.~\eqref{eq:inc_hsw} gives
\begin{eqnarray}
 \hsw(\lambda) &=& \sum_{m=0}^2 l^\w_m \ket{\phi_m} \bra{\phi_m}, \label{eq:dtrine_hsw}
\end{eqnarray}
where
\begin{eqnarray}
 l^\w_m &\equiv& \frac{1}{3} \braket{\phi_m | \left[ \hBw_m + \lambda \hBw_3 \right] | \phi_m}.
  \label{eq:dtrine_l}
\end{eqnarray}
Let $\upsilon_\omega^+(\lambda)$ and $\upsilon_\omega^-(\lambda)$ be
the eigenvalues of $\hsw(\lambda)$ with $\upsilon_\omega^+(\lambda) \ge \upsilon_\omega^-(\lambda)$.
$\hU_\theta$ is defined as
\begin{eqnarray}
 \hU_\theta &\equiv& (\cos \theta) \ident + \sin \theta (\ket{1}\bra{0} - \ket{0}\bra{1}),
  \label{eq:dtrine_u}
\end{eqnarray}
which is a unitary operator corresponding to a rotation of $\theta$.
There exists $\theta$ such that
\begin{eqnarray}
 \hsw(\lambda) &=&
  \hU_{\frac{\theta}{2}} [ \upsilon_\omega^-(\lambda) \ket{0}\bra{0} +
  \upsilon_\omega^+(\lambda) \ket{1}\bra{1} ] \hU_{\frac{\theta}{2}}^\dagger.
  \label{eq:dtrine_schatten}
\end{eqnarray}
Using Eqs.~\eqref{eq:dtrine_hsw}, \eqref{eq:dtrine_u}, and \eqref{eq:dtrine_schatten}
and doing some algebra gives
\begin{eqnarray}
 \upsilon_\omega^+(\lambda) &=&
  \sum_{m=0}^2 \frac{1}{2} \left[ 1 - \cos \left( \theta - \frac{2\pi m}{3} \right) \right] l^\w_m.
  \label{eq:dtrine_up}
\end{eqnarray}
Substituting Eq.~\eqref{eq:dtrine_l} into Eq.~\eqref{eq:dtrine_up} yields
\begin{eqnarray}
 \upsilon_\omega^+(\lambda) &=& \frac{\lambda + 1}{2} \sum_{m=0}^3 \Tr \left[ \hrho^{(\theta)}_m \hBw_m \right],
  \label{eq:dtrine_up_tr}
\end{eqnarray}
where
\begin{eqnarray}
 \hrho^{(\theta)}_m &\equiv&
  \left\{
   \begin{array}{ll}
    \displaystyle \frac{1 - \cos \left( \theta - \frac{2\pi m}{3} \right)}{3(\lambda + 1)}
     \ket{\phi_m} \bra{\phi_m}, & m < 3, \\
    \displaystyle \lambda \sum_{r=0}^2 \hrho^{(\theta)}_r, & m = 3. \\
   \end{array} \right.
  \label{eq:dtrine_rhotheta}
\end{eqnarray}
We can easily see $\sum_{m=0}^3 \Tr~\hrho^{(\theta)}_m = 1$.

$\Tr[\hrho^{(\theta)}_m \hBw_m]$ in Eq.~\eqref{eq:dtrine_up_tr}
equals the average success probability of the POVM $\{ \hBw_m \}_{m=0}^3$
for the quaternary states $\{ \hrho^{(\theta)}_m \}_{m=0}^3$.
Let $P^\opt_\theta$ be the average success probability of
a minimum-error measurement for $\{ \hrho^{(\theta)}_m \}$;
then, from Eq.~\eqref{eq:dtrine_up_tr}, we have
\begin{eqnarray}
 \upsilon_\omega^+(\lambda) &\le& \frac{\lambda + 1}{2} P^\opt_\theta.
\end{eqnarray}
This gives
\begin{eqnarray}
 \upsilon(\lambda) &=& \max_\omega \upsilon_\omega^+(\lambda)
  \le \frac{\lambda + 1}{2} \max_\theta P^\opt_\theta.
  \label{eq:dtrine_u_P}
\end{eqnarray}
By the symmetry of the problem, we may, without loss of generality,
consider only the case $0 \le \theta \le \pi/3$
(i.e., $\Tr~\hrho^{(\theta)}_0 \le \Tr~\hrho^{(\theta)}_2 \le \Tr~\hrho^{(\theta)}_1$).
Using the method described in Ref.~\cite{Dec-Ter-2010}
(the method of Ref.~\cite{Ros-Mar-Gio-2017-binary} can also be used),
we can obtain an analytical expression of $P^\opt_\theta$ for each $\theta$.
To avoid cumbersome details, we do not give an analysical expression of $P^\opt_\theta$,
but note that $P^\opt_\theta$ achieves its maximum value if and only if $\theta = 0$ holds
and satisfies
\begin{eqnarray}
 P^\opt_\theta &\le& P^\opt_0 = 
  \left\{
   \begin{array}{ll}
    \displaystyle \frac{2 + \sqrt{3}}{4(\lambda+1)}, & \displaystyle \lambda \le \frac{1}{2} + \frac{1}{2\sqrt{3}}, \\
    \displaystyle \frac{\lambda(3\lambda-1)}{2(\lambda+1)(2\lambda-1)}, & {\rm otherwise}, \\
   \end{array} \right. \label{eq:dtrine_Popt}
\end{eqnarray}
where we assume $\lambda \le 1$ to simplify the discussion
(it is sufficient to consider only this case, as will be described in the main text).
From Eqs.~\eqref{eq:dtrine_u_P} and \eqref{eq:dtrine_Popt}, we have
\begin{eqnarray}
 \hspace{-1em}
 \upsilon(\lambda) &\le& \frac{\lambda + 1}{2} P^\opt_0 =
  \left\{
   \begin{array}{ll}
    \displaystyle \frac{2 + \sqrt{3}}{8}, & \displaystyle \lambda \le \frac{1}{2} + \frac{1}{2\sqrt{3}}, \\
    \displaystyle \frac{\lambda(3\lambda-1)}{4(2\lambda-1)}, & {\rm otherwise}. \\
   \end{array} \right. \label{eq:dtrine_upsilon}
\end{eqnarray}
A minimum-error measurement, denoted as $\{ \hB^\bullet_m \}_{m=0}^3$,
for the states $\{ \hrho^{(0)}_m \}_{m=0}^3$ (i.e., in the case of $\theta = 0$)
is given by
\begin{eqnarray}
 \hB^\bullet_m &=& \ket{B_m}\bra{B_m}, \nonumber \\
 \ket{B_0} &=& 0, \nonumber \\
 \ket{B_1} &=& \sqrt{\frac{1}{2}} \ket{0} - \sqrt{\frac{1-\alpha}{2}} \ket{1}, \nonumber \\
 \ket{B_2} &=& \sqrt{\frac{1}{2}} \ket{0} + \sqrt{\frac{1-\alpha}{2}} \ket{1}, \nonumber \\
 \ket{B_3} &=& \sqrt{\alpha} \ket{1}, \label{eq:dtrine_B}
\end{eqnarray}
where
\begin{eqnarray}
 \alpha &=& \frac{2(6\lambda^2 - 6\lambda + 1)}{3(2\lambda - 1)^2}.
\end{eqnarray} 

Let $\omega_0$ be in $\Omega$ such that $\{ \hB^{(\omega_0)}_m \} = \{ \hB^\bullet_m \}$.
In the case of $\omega = \omega_0$,
from Eqs.~\eqref{eq:dtrine_hsw} and \eqref{eq:dtrine_l},
Eq.~\eqref{eq:dtrine_schatten} with $\theta = 0$ holds.
Since, in this case, $\upsilon_{\omega_0}^+(\lambda) = \frac{\lambda + 1}{2} P^\opt_0$ holds,
the equality in Eq.~\eqref{eq:dtrine_upsilon} holds.
By substituting this into Eq.~\eqref{eq:inc_dual} and optimizing $\lambda$,
we obtain Eq.~\eqref{eq:dtrine_lambda_opt}.
From Eq.~\eqref{eq:dtrine_lambda_opt}, in the case of $\lambda = \lambda^\opt$, we have
\begin{eqnarray}
 \alpha &=& \frac{4\pinc}{3}.
\end{eqnarray}


\begin{thebibliography}{64}%
\makeatletter
\providecommand \@ifxundefined [1]{%
 \@ifx{#1\undefined}
}%
\providecommand \@ifnum [1]{%
 \ifnum #1\expandafter \@firstoftwo
 \else \expandafter \@secondoftwo
 \fi
}%
\providecommand \@ifx [1]{%
 \ifx #1\expandafter \@firstoftwo
 \else \expandafter \@secondoftwo
 \fi
}%
\providecommand \natexlab [1]{#1}%
\providecommand \enquote  [1]{``#1''}%
\providecommand \bibnamefont  [1]{#1}%
\providecommand \bibfnamefont [1]{#1}%
\providecommand \citenamefont [1]{#1}%
\providecommand \href@noop [0]{\@secondoftwo}%
\providecommand \href [0]{\begingroup \@sanitize@url \@href}%
\providecommand \@href[1]{\@@startlink{#1}\@@href}%
\providecommand \@@href[1]{\endgroup#1\@@endlink}%
\providecommand \@sanitize@url [0]{\catcode `\\12\catcode `\$12\catcode
  `\&12\catcode `\#12\catcode `\^12\catcode `\_12\catcode `\%12\relax}%
\providecommand \@@startlink[1]{}%
\providecommand \@@endlink[0]{}%
\providecommand \url  [0]{\begingroup\@sanitize@url \@url }%
\providecommand \@url [1]{\endgroup\@href {#1}{\urlprefix }}%
\providecommand \urlprefix  [0]{URL }%
\providecommand \Eprint [0]{\href }%
\providecommand \doibase [0]{http://dx.doi.org/}%
\providecommand \selectlanguage [0]{\@gobble}%
\providecommand \bibinfo  [0]{\@secondoftwo}%
\providecommand \bibfield  [0]{\@secondoftwo}%
\providecommand \translation [1]{[#1]}%
\providecommand \BibitemOpen [0]{}%
\providecommand \bibitemStop [0]{}%
\providecommand \bibitemNoStop [0]{.\EOS\space}%
\providecommand \EOS [0]{\spacefactor3000\relax}%
\providecommand \BibitemShut  [1]{\csname bibitem#1\endcsname}%
\let\auto@bib@innerbib\@empty
\bibitem [{\citenamefont {Walgate}\ \emph {et~al.}(2000)\citenamefont
  {Walgate}, \citenamefont {Short}, \citenamefont {Hardy},\ and\ \citenamefont
  {Vedral}}]{Wal-Sho-Har-Ved-2000}%
  \BibitemOpen
  \bibfield  {author} {\bibinfo {author} {\bibfnamefont {J.}~\bibnamefont
  {Walgate}}, \bibinfo {author} {\bibfnamefont {A.~J.}\ \bibnamefont {Short}},
  \bibinfo {author} {\bibfnamefont {L.}~\bibnamefont {Hardy}}, \ and\ \bibinfo
  {author} {\bibfnamefont {V.}~\bibnamefont {Vedral}},\ }\href@noop {}
  {\bibfield  {journal} {\bibinfo  {journal} {Phys. Rev. Lett.}\ }\textbf
  {\bibinfo {volume} {85}},\ \bibinfo {pages} {4972} (\bibinfo {year}
  {2000})}\BibitemShut {NoStop}%
\bibitem [{\citenamefont {Groisman}\ and\ \citenamefont
  {Vaidman}(2001)}]{Gro-Vai-2001}%
  \BibitemOpen
  \bibfield  {author} {\bibinfo {author} {\bibfnamefont {B.}~\bibnamefont
  {Groisman}}\ and\ \bibinfo {author} {\bibfnamefont {L.}~\bibnamefont
  {Vaidman}},\ }\href@noop {} {\bibfield  {journal} {\bibinfo  {journal} {J.
  Phys. A: Math. Gen.}\ }\textbf {\bibinfo {volume} {34}},\ \bibinfo {pages}
  {6881} (\bibinfo {year} {2001})}\BibitemShut {NoStop}%
\bibitem [{\citenamefont {Walgate}\ and\ \citenamefont
  {Hardy}(2002)}]{Wal-Har-2002}%
  \BibitemOpen
  \bibfield  {author} {\bibinfo {author} {\bibfnamefont {J.}~\bibnamefont
  {Walgate}}\ and\ \bibinfo {author} {\bibfnamefont {L.}~\bibnamefont
  {Hardy}},\ }\href@noop {} {\bibfield  {journal} {\bibinfo  {journal} {Phys.
  Rev. Lett.}\ }\textbf {\bibinfo {volume} {89}},\ \bibinfo {pages} {147901}
  (\bibinfo {year} {2002})}\BibitemShut {NoStop}%
\bibitem [{\citenamefont {Fan}(2004)}]{Fan-2004}%
  \BibitemOpen
  \bibfield  {author} {\bibinfo {author} {\bibfnamefont {H.}~\bibnamefont
  {Fan}},\ }\href@noop {} {\bibfield  {journal} {\bibinfo  {journal} {Phys.
  Rev. Lett.}\ }\textbf {\bibinfo {volume} {92}},\ \bibinfo {pages} {177905}
  (\bibinfo {year} {2004})}\BibitemShut {NoStop}%
\bibitem [{\citenamefont {Nathanson}(2005)}]{Nat-2005}%
  \BibitemOpen
  \bibfield  {author} {\bibinfo {author} {\bibfnamefont {M.}~\bibnamefont
  {Nathanson}},\ }\href@noop {} {\bibfield  {journal} {\bibinfo  {journal} {J.
  Math. Phys.}\ }\textbf {\bibinfo {volume} {46}},\ \bibinfo {pages} {062103}
  (\bibinfo {year} {2005})}\BibitemShut {NoStop}%
\bibitem [{\citenamefont {Bandyopadhyay}\ \emph {et~al.}(2011)\citenamefont
  {Bandyopadhyay}, \citenamefont {Ghosh},\ and\ \citenamefont
  {Kar}}]{Ban-Gho-Kar-2011}%
  \BibitemOpen
  \bibfield  {author} {\bibinfo {author} {\bibfnamefont {S.}~\bibnamefont
  {Bandyopadhyay}}, \bibinfo {author} {\bibfnamefont {S.}~\bibnamefont
  {Ghosh}}, \ and\ \bibinfo {author} {\bibfnamefont {G.}~\bibnamefont {Kar}},\
  }\href@noop {} {\bibfield  {journal} {\bibinfo  {journal} {New J. Phys.}\
  }\textbf {\bibinfo {volume} {13}},\ \bibinfo {pages} {123013} (\bibinfo
  {year} {2011})}\BibitemShut {NoStop}%
\bibitem [{\citenamefont {Zhang}\ \emph {et~al.}(2014)\citenamefont {Zhang},
  \citenamefont {Wen}, \citenamefont {Gao}, \citenamefont {Tian},\ and\
  \citenamefont {Cao}}]{Zha-Wen-Gao-Tia-2014}%
  \BibitemOpen
  \bibfield  {author} {\bibinfo {author} {\bibfnamefont {Z.-C.}\ \bibnamefont
  {Zhang}}, \bibinfo {author} {\bibfnamefont {Q.-Y.}\ \bibnamefont {Wen}},
  \bibinfo {author} {\bibfnamefont {F.}~\bibnamefont {Gao}}, \bibinfo {author}
  {\bibfnamefont {G.-J.}\ \bibnamefont {Tian}}, \ and\ \bibinfo {author}
  {\bibfnamefont {T.-Q.}\ \bibnamefont {Cao}},\ }\href@noop {} {\bibfield
  {journal} {\bibinfo  {journal} {Quant. Inf. Proc.}\ }\textbf {\bibinfo
  {volume} {13}},\ \bibinfo {pages} {795} (\bibinfo {year} {2014})}\BibitemShut
  {NoStop}%
\bibitem [{\citenamefont {Virmani}\ \emph {et~al.}(2001)\citenamefont
  {Virmani}, \citenamefont {Sacchi}, \citenamefont {Plenio},\ and\
  \citenamefont {Markham}}]{Vir-Sac-Ple-Mar-2001}%
  \BibitemOpen
  \bibfield  {author} {\bibinfo {author} {\bibfnamefont {S.}~\bibnamefont
  {Virmani}}, \bibinfo {author} {\bibfnamefont {M.~F.}\ \bibnamefont {Sacchi}},
  \bibinfo {author} {\bibfnamefont {M.~B.}\ \bibnamefont {Plenio}}, \ and\
  \bibinfo {author} {\bibfnamefont {D.}~\bibnamefont {Markham}},\ }\href@noop
  {} {\bibfield  {journal} {\bibinfo  {journal} {Phys. Lett. A}\ }\textbf
  {\bibinfo {volume} {288}},\ \bibinfo {pages} {62} (\bibinfo {year}
  {2001})}\BibitemShut {NoStop}%
\bibitem [{\citenamefont {Brody}\ and\ \citenamefont
  {Meister}(1996)}]{Bro-Mei-1996}%
  \BibitemOpen
  \bibfield  {author} {\bibinfo {author} {\bibfnamefont {D.}~\bibnamefont
  {Brody}}\ and\ \bibinfo {author} {\bibfnamefont {B.}~\bibnamefont
  {Meister}},\ }\href@noop {} {\bibfield  {journal} {\bibinfo  {journal} {Phys.
  Rev. Lett.}\ }\textbf {\bibinfo {volume} {76}},\ \bibinfo {pages} {1}
  (\bibinfo {year} {1996})}\BibitemShut {NoStop}%
\bibitem [{\citenamefont {Ac\'in}\ \emph {et~al.}(2005)\citenamefont {Ac\'in},
  \citenamefont {Bagan}, \citenamefont {Baig}, \citenamefont {Masanes},\ and\
  \citenamefont {Mu{\~n}oz-Tapia}}]{Aci-Bag-Bai-Mas-Mun-2004}%
  \BibitemOpen
  \bibfield  {author} {\bibinfo {author} {\bibfnamefont {A.}~\bibnamefont
  {Ac\'in}}, \bibinfo {author} {\bibfnamefont {E.}~\bibnamefont {Bagan}},
  \bibinfo {author} {\bibfnamefont {M.}~\bibnamefont {Baig}}, \bibinfo {author}
  {\bibfnamefont {L.}~\bibnamefont {Masanes}}, \ and\ \bibinfo {author}
  {\bibfnamefont {R.}~\bibnamefont {Mu{\~n}oz-Tapia}},\ }\href@noop {}
  {\bibfield  {journal} {\bibinfo  {journal} {Phys. Rev. A}\ }\textbf {\bibinfo
  {volume} {71}},\ \bibinfo {pages} {032338} (\bibinfo {year}
  {2005})}\BibitemShut {NoStop}%
\bibitem [{\citenamefont {Assalini}\ \emph {et~al.}(2011)\citenamefont
  {Assalini}, \citenamefont {Pozza},\ and\ \citenamefont
  {Pierobon}}]{Ass-Poz-Pie-2011}%
  \BibitemOpen
  \bibfield  {author} {\bibinfo {author} {\bibfnamefont {A.}~\bibnamefont
  {Assalini}}, \bibinfo {author} {\bibfnamefont {N.~D.}\ \bibnamefont {Pozza}},
  \ and\ \bibinfo {author} {\bibfnamefont {G.}~\bibnamefont {Pierobon}},\
  }\href@noop {} {\bibfield  {journal} {\bibinfo  {journal} {Phys. Rev. A}\
  }\textbf {\bibinfo {volume} {84}},\ \bibinfo {pages} {022342} (\bibinfo
  {year} {2011})}\BibitemShut {NoStop}%
\bibitem [{\citenamefont {Chen}\ and\ \citenamefont
  {Yang}(2001)}]{Che-Yan-2001}%
  \BibitemOpen
  \bibfield  {author} {\bibinfo {author} {\bibfnamefont {Y.-X.}\ \bibnamefont
  {Chen}}\ and\ \bibinfo {author} {\bibfnamefont {D.}~\bibnamefont {Yang}},\
  }\href@noop {} {\bibfield  {journal} {\bibinfo  {journal} {Phys. Rev. A}\
  }\textbf {\bibinfo {volume} {64}},\ \bibinfo {pages} {064303} (\bibinfo
  {year} {2001})}\BibitemShut {NoStop}%
\bibitem [{\citenamefont {Chen}\ and\ \citenamefont
  {Yang}(2002)}]{Che-Yan-2002}%
  \BibitemOpen
  \bibfield  {author} {\bibinfo {author} {\bibfnamefont {Y.-X.}\ \bibnamefont
  {Chen}}\ and\ \bibinfo {author} {\bibfnamefont {D.}~\bibnamefont {Yang}},\
  }\href@noop {} {\bibfield  {journal} {\bibinfo  {journal} {Phys. Rev. A}\
  }\textbf {\bibinfo {volume} {65}},\ \bibinfo {pages} {022320} (\bibinfo
  {year} {2002})}\BibitemShut {NoStop}%
\bibitem [{\citenamefont {Ji}\ \emph {et~al.}(2005)\citenamefont {Ji},
  \citenamefont {Cao},\ and\ \citenamefont {Ying}}]{Ji-Cao-Yin-2005}%
  \BibitemOpen
  \bibfield  {author} {\bibinfo {author} {\bibfnamefont {Z.}~\bibnamefont
  {Ji}}, \bibinfo {author} {\bibfnamefont {H.}~\bibnamefont {Cao}}, \ and\
  \bibinfo {author} {\bibfnamefont {M.}~\bibnamefont {Ying}},\ }\href@noop {}
  {\bibfield  {journal} {\bibinfo  {journal} {Phys. Rev. A}\ }\textbf {\bibinfo
  {volume} {71}},\ \bibinfo {pages} {032323} (\bibinfo {year}
  {2005})}\BibitemShut {NoStop}%
\bibitem [{\citenamefont {Holevo}(1973)}]{Hol-1973}%
  \BibitemOpen
  \bibfield  {author} {\bibinfo {author} {\bibfnamefont {A.~S.}\ \bibnamefont
  {Holevo}},\ }\href@noop {} {\bibfield  {journal} {\bibinfo  {journal} {J.
  Multivar. Anal.}\ }\textbf {\bibinfo {volume} {3}},\ \bibinfo {pages} {337}
  (\bibinfo {year} {1973})}\BibitemShut {NoStop}%
\bibitem [{\citenamefont {Yuen}\ \emph {et~al.}(1975)\citenamefont {Yuen},
  \citenamefont {Kennedy},\ and\ \citenamefont {Lax}}]{Yue-Ken-Lax-1975}%
  \BibitemOpen
  \bibfield  {author} {\bibinfo {author} {\bibfnamefont {H.~P.}\ \bibnamefont
  {Yuen}}, \bibinfo {author} {\bibfnamefont {K.~S.}\ \bibnamefont {Kennedy}}, \
  and\ \bibinfo {author} {\bibfnamefont {M.}~\bibnamefont {Lax}},\ }\href@noop
  {} {\bibfield  {journal} {\bibinfo  {journal} {IEEE Trans. Inf. Theory}\
  }\textbf {\bibinfo {volume} {21}},\ \bibinfo {pages} {125} (\bibinfo {year}
  {1975})}\BibitemShut {NoStop}%
\bibitem [{\citenamefont {Helstrom}(1976)}]{Hel-1976}%
  \BibitemOpen
  \bibfield  {author} {\bibinfo {author} {\bibfnamefont {C.~W.}\ \bibnamefont
  {Helstrom}},\ }\href@noop {} {\emph {\bibinfo {title} {Quantum detection and
  estimation theory}}}\ (\bibinfo  {publisher} {Academic Press},\ \bibinfo
  {year} {1976})\BibitemShut {NoStop}%
\bibitem [{\citenamefont {Hirota}\ and\ \citenamefont
  {Ikehara}(1982)}]{Hir-Ike-1982}%
  \BibitemOpen
  \bibfield  {author} {\bibinfo {author} {\bibfnamefont {O.}~\bibnamefont
  {Hirota}}\ and\ \bibinfo {author} {\bibfnamefont {S.}~\bibnamefont
  {Ikehara}},\ }\href@noop {} {\bibfield  {journal} {\bibinfo  {journal} {The
  Transactions of The IECE of Japan}\ }\textbf {\bibinfo {volume} {E65}},\
  \bibinfo {pages} {627} (\bibinfo {year} {1982})}\BibitemShut {NoStop}%
\bibitem [{\citenamefont {D'Ariano}\ \emph {et~al.}(2005)\citenamefont
  {D'Ariano}, \citenamefont {Sacchi},\ and\ \citenamefont
  {Kahn}}]{Dar-Sac-Kah-2005}%
  \BibitemOpen
  \bibfield  {author} {\bibinfo {author} {\bibfnamefont {G.~M.}\ \bibnamefont
  {D'Ariano}}, \bibinfo {author} {\bibfnamefont {M.~F.}\ \bibnamefont
  {Sacchi}}, \ and\ \bibinfo {author} {\bibfnamefont {J.}~\bibnamefont
  {Kahn}},\ }\href@noop {} {\bibfield  {journal} {\bibinfo  {journal} {Phys.
  Rev. A}\ }\textbf {\bibinfo {volume} {72}},\ \bibinfo {pages} {032310}
  (\bibinfo {year} {2005})}\BibitemShut {NoStop}%
\bibitem [{\citenamefont {Kato}(2012)}]{Kat-2012}%
  \BibitemOpen
  \bibfield  {author} {\bibinfo {author} {\bibfnamefont {K.}~\bibnamefont
  {Kato}},\ }in\ \href@noop {} {\emph {\bibinfo {booktitle} {Proc. IEEE Int.
  Symp. Inf. Theory (ISIT)}}}\ (\bibinfo {organization} {IEEE},\ \bibinfo
  {year} {2012})\ pp.\ \bibinfo {pages} {1077--1081}\BibitemShut {NoStop}%
\bibitem [{\citenamefont {Ivanovic}(1987)}]{Iva-1987}%
  \BibitemOpen
  \bibfield  {author} {\bibinfo {author} {\bibfnamefont {I.~D.}\ \bibnamefont
  {Ivanovic}},\ }\href@noop {} {\bibfield  {journal} {\bibinfo  {journal}
  {Phys. Lett. A}\ }\textbf {\bibinfo {volume} {123}},\ \bibinfo {pages} {257}
  (\bibinfo {year} {1987})}\BibitemShut {NoStop}%
\bibitem [{\citenamefont {Dieks}(1988)}]{Die-1988}%
  \BibitemOpen
  \bibfield  {author} {\bibinfo {author} {\bibfnamefont {D.}~\bibnamefont
  {Dieks}},\ }\href@noop {} {\bibfield  {journal} {\bibinfo  {journal} {Phys.
  Lett. A}\ }\textbf {\bibinfo {volume} {126}},\ \bibinfo {pages} {303}
  (\bibinfo {year} {1988})}\BibitemShut {NoStop}%
\bibitem [{\citenamefont {Peres}(1988)}]{Per-1988}%
  \BibitemOpen
  \bibfield  {author} {\bibinfo {author} {\bibfnamefont {A.}~\bibnamefont
  {Peres}},\ }\href@noop {} {\bibfield  {journal} {\bibinfo  {journal} {Phys.
  Lett. A}\ }\textbf {\bibinfo {volume} {128}},\ \bibinfo {pages} {19}
  (\bibinfo {year} {1988})}\BibitemShut {NoStop}%
\bibitem [{\citenamefont {Chefles}\ and\ \citenamefont
  {Barnett}(1998)}]{Che-Bar-1998-inc}%
  \BibitemOpen
  \bibfield  {author} {\bibinfo {author} {\bibfnamefont {A.}~\bibnamefont
  {Chefles}}\ and\ \bibinfo {author} {\bibfnamefont {S.~M.}\ \bibnamefont
  {Barnett}},\ }\href@noop {} {\bibfield  {journal} {\bibinfo  {journal} {J.
  Mod. Opt.}\ }\textbf {\bibinfo {volume} {45}},\ \bibinfo {pages} {1295}
  (\bibinfo {year} {1998})}\BibitemShut {NoStop}%
\bibitem [{\citenamefont {Eldar}(2003)}]{Eld-2003-inc}%
  \BibitemOpen
  \bibfield  {author} {\bibinfo {author} {\bibfnamefont {Y.~C.}\ \bibnamefont
  {Eldar}},\ }\href@noop {} {\bibfield  {journal} {\bibinfo  {journal} {Phys.
  Rev. A}\ }\textbf {\bibinfo {volume} {67}},\ \bibinfo {pages} {042309}
  (\bibinfo {year} {2003})}\BibitemShut {NoStop}%
\bibitem [{\citenamefont {Fiur\'a\v{s}ek}\ and\ \citenamefont
  {Je\v{z}ek}(2003)}]{Fiu-Jez-2003}%
  \BibitemOpen
  \bibfield  {author} {\bibinfo {author} {\bibfnamefont {J.}~\bibnamefont
  {Fiur\'a\v{s}ek}}\ and\ \bibinfo {author} {\bibfnamefont {M.}~\bibnamefont
  {Je\v{z}ek}},\ }\href@noop {} {\bibfield  {journal} {\bibinfo  {journal}
  {Phys. Rev. A}\ }\textbf {\bibinfo {volume} {67}},\ \bibinfo {pages} {012321}
  (\bibinfo {year} {2003})}\BibitemShut {NoStop}%
\bibitem [{\citenamefont {Touzel}\ \emph {et~al.}(2007)\citenamefont {Touzel},
  \citenamefont {Adamson},\ and\ \citenamefont {Steinberg}}]{Tou-Ada-Ste-2007}%
  \BibitemOpen
  \bibfield  {author} {\bibinfo {author} {\bibfnamefont {M.~A.~P.}\
  \bibnamefont {Touzel}}, \bibinfo {author} {\bibfnamefont {R.~B.~A.}\
  \bibnamefont {Adamson}}, \ and\ \bibinfo {author} {\bibfnamefont {A.~M.}\
  \bibnamefont {Steinberg}},\ }\href@noop {} {\bibfield  {journal} {\bibinfo
  {journal} {Phys. Rev. A}\ }\textbf {\bibinfo {volume} {76}},\ \bibinfo
  {pages} {062314} (\bibinfo {year} {2007})}\BibitemShut {NoStop}%
\bibitem [{\citenamefont {Hayashi}\ \emph {et~al.}(2008)\citenamefont
  {Hayashi}, \citenamefont {Hashimoto},\ and\ \citenamefont
  {Horibe}}]{Hay-Has-Hor-2008}%
  \BibitemOpen
  \bibfield  {author} {\bibinfo {author} {\bibfnamefont {A.}~\bibnamefont
  {Hayashi}}, \bibinfo {author} {\bibfnamefont {T.}~\bibnamefont {Hashimoto}},
  \ and\ \bibinfo {author} {\bibfnamefont {M.}~\bibnamefont {Horibe}},\
  }\href@noop {} {\bibfield  {journal} {\bibinfo  {journal} {Phys. Rev. A}\
  }\textbf {\bibinfo {volume} {78}},\ \bibinfo {pages} {012333} (\bibinfo
  {year} {2008})}\BibitemShut {NoStop}%
\bibitem [{\citenamefont {Sugimoto}\ \emph {et~al.}(2009)\citenamefont
  {Sugimoto}, \citenamefont {Hashimoto}, \citenamefont {Horibe},\ and\
  \citenamefont {Hayashi}}]{Sug-Has-Hor-Hay-2009}%
  \BibitemOpen
  \bibfield  {author} {\bibinfo {author} {\bibfnamefont {H.}~\bibnamefont
  {Sugimoto}}, \bibinfo {author} {\bibfnamefont {T.}~\bibnamefont {Hashimoto}},
  \bibinfo {author} {\bibfnamefont {M.}~\bibnamefont {Horibe}}, \ and\ \bibinfo
  {author} {\bibfnamefont {A.}~\bibnamefont {Hayashi}},\ }\href@noop {}
  {\bibfield  {journal} {\bibinfo  {journal} {Phys. Rev. A}\ }\textbf {\bibinfo
  {volume} {80}},\ \bibinfo {pages} {052322} (\bibinfo {year}
  {2009})}\BibitemShut {NoStop}%
\bibitem [{\citenamefont {Nakahira}\ \emph
  {et~al.}(2015{\natexlab{a}})\citenamefont {Nakahira}, \citenamefont {Kato},\
  and\ \citenamefont {Usuda}}]{Nak-Kat-Usu-2015-general}%
  \BibitemOpen
  \bibfield  {author} {\bibinfo {author} {\bibfnamefont {K.}~\bibnamefont
  {Nakahira}}, \bibinfo {author} {\bibfnamefont {K.}~\bibnamefont {Kato}}, \
  and\ \bibinfo {author} {\bibfnamefont {T.~S.}\ \bibnamefont {Usuda}},\
  }\href@noop {} {\bibfield  {journal} {\bibinfo  {journal} {Phys. Rev. A}\
  }\textbf {\bibinfo {volume} {91}},\ \bibinfo {pages} {052304} (\bibinfo
  {year} {2015}{\natexlab{a}})}\BibitemShut {NoStop}%
\bibitem [{\citenamefont {Ban}\ \emph {et~al.}(1997{\natexlab{a}})\citenamefont
  {Ban}, \citenamefont {Yamazaki},\ and\ \citenamefont
  {Hirota}}]{Ban-Yam-Hir-1997}%
  \BibitemOpen
  \bibfield  {author} {\bibinfo {author} {\bibfnamefont {M.}~\bibnamefont
  {Ban}}, \bibinfo {author} {\bibfnamefont {K.}~\bibnamefont {Yamazaki}}, \
  and\ \bibinfo {author} {\bibfnamefont {O.}~\bibnamefont {Hirota}},\
  }\href@noop {} {\bibfield  {journal} {\bibinfo  {journal} {Phys. Rev. A}\
  }\textbf {\bibinfo {volume} {55}},\ \bibinfo {pages} {22} (\bibinfo {year}
  {1997}{\natexlab{a}})}\BibitemShut {NoStop}%
\bibitem [{\citenamefont {Owari}\ and\ \citenamefont
  {Hayashi}(2008)}]{Owa-Hay-2008}%
  \BibitemOpen
  \bibfield  {author} {\bibinfo {author} {\bibfnamefont {M.}~\bibnamefont
  {Owari}}\ and\ \bibinfo {author} {\bibfnamefont {M.}~\bibnamefont
  {Hayashi}},\ }\href@noop {} {\bibfield  {journal} {\bibinfo  {journal} {New
  J. Phys.}\ }\textbf {\bibinfo {volume} {10}},\ \bibinfo {pages} {013006}
  (\bibinfo {year} {2008})}\BibitemShut {NoStop}%
\bibitem [{\citenamefont {Nakahira}\ and\ \citenamefont
  {Usuda}(2012)}]{Nak-Usu-2012-receiver}%
  \BibitemOpen
  \bibfield  {author} {\bibinfo {author} {\bibfnamefont {K.}~\bibnamefont
  {Nakahira}}\ and\ \bibinfo {author} {\bibfnamefont {T.~S.}\ \bibnamefont
  {Usuda}},\ }\href@noop {} {\bibfield  {journal} {\bibinfo  {journal} {Phys.
  Rev. A}\ }\textbf {\bibinfo {volume} {86}},\ \bibinfo {pages} {052323}
  (\bibinfo {year} {2012})}\BibitemShut {NoStop}%
\bibitem [{\citenamefont {Nakahira}\ and\ \citenamefont
  {Usuda}()}]{Nak-Usu-2016-LOCC}%
  \BibitemOpen
  \bibfield  {author} {\bibinfo {author} {\bibfnamefont {K.}~\bibnamefont
  {Nakahira}}\ and\ \bibinfo {author} {\bibfnamefont {T.}~\bibnamefont
  {Usuda}},\ }\href@noop {} {\bibinfo  {journal} {IEEE Trans. Inf. Theory
  (accepted)}\ }\BibitemShut {NoStop}%
\bibitem [{\citenamefont {Rosati}\ \emph
  {et~al.}(2017{\natexlab{a}})\citenamefont {Rosati}, \citenamefont {Mari},\
  and\ \citenamefont {Giovannetti}}]{Ros-Mar-Gio-2017-capacity}%
  \BibitemOpen
\bibfield  {journal} {  }\bibfield  {author} {\bibinfo {author} {\bibfnamefont
  {M.}~\bibnamefont {Rosati}}, \bibinfo {author} {\bibfnamefont
  {A.}~\bibnamefont {Mari}}, \ and\ \bibinfo {author} {\bibfnamefont
  {V.}~\bibnamefont {Giovannetti}},\ }\href@noop {} {\bibfield  {journal}
  {\bibinfo  {journal} {arXiv preprint arXiv:1703.05701}\ } (\bibinfo {year}
  {2017}{\natexlab{a}})}\BibitemShut {NoStop}%
\bibitem [{\citenamefont {Croke}\ \emph {et~al.}(2017)\citenamefont {Croke},
  \citenamefont {Barnett},\ and\ \citenamefont {Weir}}]{Cro-Bar-Wei-2017}%
  \BibitemOpen
  \bibfield  {author} {\bibinfo {author} {\bibfnamefont {S.}~\bibnamefont
  {Croke}}, \bibinfo {author} {\bibfnamefont {S.~M.}\ \bibnamefont {Barnett}},
  \ and\ \bibinfo {author} {\bibfnamefont {G.}~\bibnamefont {Weir}},\
  }\href@noop {} {\bibfield  {journal} {\bibinfo  {journal} {Phys. Rev. A}\
  }\textbf {\bibinfo {volume} {95}},\ \bibinfo {pages} {052308} (\bibinfo
  {year} {2017})}\BibitemShut {NoStop}%
\bibitem [{\citenamefont {Nakahira}\ \emph {et~al.}(2017)\citenamefont
  {Nakahira}, \citenamefont {Kato},\ and\ \citenamefont
  {Usuda}}]{Nak-Kat-Usu-2017-Dolinar}%
  \BibitemOpen
  \bibfield  {author} {\bibinfo {author} {\bibfnamefont {K.}~\bibnamefont
  {Nakahira}}, \bibinfo {author} {\bibfnamefont {K.}~\bibnamefont {Kato}}, \
  and\ \bibinfo {author} {\bibfnamefont {T.~S.}\ \bibnamefont {Usuda}},\
  }\href@noop {} {\bibfield  {journal} {\bibinfo  {journal} {arXiv preprint
  arXiv:1706.02125}\ } (\bibinfo {year} {2017})}\BibitemShut {NoStop}%
\bibitem [{\citenamefont {Belavkin}(1975)}]{Bel-1975}%
  \BibitemOpen
  \bibfield  {author} {\bibinfo {author} {\bibfnamefont {V.~P.}\ \bibnamefont
  {Belavkin}},\ }\href@noop {} {\bibfield  {journal} {\bibinfo  {journal}
  {Stochastics}\ }\textbf {\bibinfo {volume} {1}},\ \bibinfo {pages} {315}
  (\bibinfo {year} {1975})}\BibitemShut {NoStop}%
\bibitem [{\citenamefont {Ban}\ \emph {et~al.}(1997{\natexlab{b}})\citenamefont
  {Ban}, \citenamefont {Kurokawa}, \citenamefont {Momose},\ and\ \citenamefont
  {Hirota}}]{Ban-Kur-Mom-Hir-1997}%
  \BibitemOpen
  \bibfield  {author} {\bibinfo {author} {\bibfnamefont {M.}~\bibnamefont
  {Ban}}, \bibinfo {author} {\bibfnamefont {K.}~\bibnamefont {Kurokawa}},
  \bibinfo {author} {\bibfnamefont {R.}~\bibnamefont {Momose}}, \ and\ \bibinfo
  {author} {\bibfnamefont {O.}~\bibnamefont {Hirota}},\ }\href@noop {}
  {\bibfield  {journal} {\bibinfo  {journal} {Int. J. Theor. Phys.}\ }\textbf
  {\bibinfo {volume} {36}},\ \bibinfo {pages} {1269} (\bibinfo {year}
  {1997}{\natexlab{b}})}\BibitemShut {NoStop}%
\bibitem [{\citenamefont {Usuda}\ \emph {et~al.}(1999)\citenamefont {Usuda},
  \citenamefont {Takumi}, \citenamefont {Hata},\ and\ \citenamefont
  {Hirota}}]{Usu-Tak-Hat-Hir-1999}%
  \BibitemOpen
  \bibfield  {author} {\bibinfo {author} {\bibfnamefont {T.~S.}\ \bibnamefont
  {Usuda}}, \bibinfo {author} {\bibfnamefont {I.}~\bibnamefont {Takumi}},
  \bibinfo {author} {\bibfnamefont {M.}~\bibnamefont {Hata}}, \ and\ \bibinfo
  {author} {\bibfnamefont {O.}~\bibnamefont {Hirota}},\ }\href@noop {}
  {\bibfield  {journal} {\bibinfo  {journal} {Phys. Lett. A}\ }\textbf
  {\bibinfo {volume} {256}},\ \bibinfo {pages} {104} (\bibinfo {year}
  {1999})}\BibitemShut {NoStop}%
\bibitem [{\citenamefont {Eldar}\ and\ \citenamefont
  {Forney~Jr.}(2001)}]{Eld-For-2001}%
  \BibitemOpen
  \bibfield  {author} {\bibinfo {author} {\bibfnamefont {Y.~C.}\ \bibnamefont
  {Eldar}}\ and\ \bibinfo {author} {\bibfnamefont {G.~D.}\ \bibnamefont
  {Forney~Jr.}},\ }\href@noop {} {\bibfield  {journal} {\bibinfo  {journal}
  {IEEE Trans. Inf. Theory}\ }\textbf {\bibinfo {volume} {47}},\ \bibinfo
  {pages} {858} (\bibinfo {year} {2001})}\BibitemShut {NoStop}%
\bibitem [{\citenamefont {Eldar}\ \emph
  {et~al.}(2004{\natexlab{a}})\citenamefont {Eldar}, \citenamefont
  {Megretski},\ and\ \citenamefont {Verghese}}]{Eld-Meg-Ver-2004}%
  \BibitemOpen
  \bibfield  {author} {\bibinfo {author} {\bibfnamefont {Y.~C.}\ \bibnamefont
  {Eldar}}, \bibinfo {author} {\bibfnamefont {A.}~\bibnamefont {Megretski}}, \
  and\ \bibinfo {author} {\bibfnamefont {G.~C.}\ \bibnamefont {Verghese}},\
  }\href@noop {} {\bibfield  {journal} {\bibinfo  {journal} {IEEE Trans. Inf.
  Theory}\ }\textbf {\bibinfo {volume} {50}},\ \bibinfo {pages} {1198}
  (\bibinfo {year} {2004}{\natexlab{a}})}\BibitemShut {NoStop}%
\bibitem [{\citenamefont {Eldar}\ \emph
  {et~al.}(2004{\natexlab{b}})\citenamefont {Eldar}, \citenamefont {Stojnic},\
  and\ \citenamefont {Hassibi}}]{Eld-Sto-Has-2004}%
  \BibitemOpen
  \bibfield  {author} {\bibinfo {author} {\bibfnamefont {Y.~C.}\ \bibnamefont
  {Eldar}}, \bibinfo {author} {\bibfnamefont {M.}~\bibnamefont {Stojnic}}, \
  and\ \bibinfo {author} {\bibfnamefont {B.}~\bibnamefont {Hassibi}},\
  }\href@noop {} {\bibfield  {journal} {\bibinfo  {journal} {Phys. Rev. A}\
  }\textbf {\bibinfo {volume} {69}},\ \bibinfo {pages} {062318} (\bibinfo
  {year} {2004}{\natexlab{b}})}\BibitemShut {NoStop}%
\bibitem [{\citenamefont {Nakahira}\ and\ \citenamefont
  {Usuda}(2013)}]{Nak-Usu-2013-group}%
  \BibitemOpen
  \bibfield  {author} {\bibinfo {author} {\bibfnamefont {K.}~\bibnamefont
  {Nakahira}}\ and\ \bibinfo {author} {\bibfnamefont {T.~S.}\ \bibnamefont
  {Usuda}},\ }\href@noop {} {\bibfield  {journal} {\bibinfo  {journal} {Phys.
  Rev. A}\ }\textbf {\bibinfo {volume} {87}},\ \bibinfo {pages} {012308}
  (\bibinfo {year} {2013})}\BibitemShut {NoStop}%
\bibitem [{\citenamefont {Nakahira}\ \emph {et~al.}(2013)\citenamefont
  {Nakahira}, \citenamefont {Kato},\ and\ \citenamefont
  {Usuda}}]{Nak-Kat-Usu-2013-minimax}%
  \BibitemOpen
  \bibfield  {author} {\bibinfo {author} {\bibfnamefont {K.}~\bibnamefont
  {Nakahira}}, \bibinfo {author} {\bibfnamefont {K.}~\bibnamefont {Kato}}, \
  and\ \bibinfo {author} {\bibfnamefont {T.~S.}\ \bibnamefont {Usuda}},\
  }\href@noop {} {\bibfield  {journal} {\bibinfo  {journal} {Phys. Rev. A}\
  }\textbf {\bibinfo {volume} {88}},\ \bibinfo {pages} {032314} (\bibinfo
  {year} {2013})}\BibitemShut {NoStop}%
\bibitem [{\citenamefont {Andersson}\ \emph {et~al.}(2002)\citenamefont
  {Andersson}, \citenamefont {Barnett}, \citenamefont {Gilson},\ and\
  \citenamefont {Hunter}}]{And-Bar-Gil-Hun-2002}%
  \BibitemOpen
  \bibfield  {author} {\bibinfo {author} {\bibfnamefont {E.}~\bibnamefont
  {Andersson}}, \bibinfo {author} {\bibfnamefont {S.~M.}\ \bibnamefont
  {Barnett}}, \bibinfo {author} {\bibfnamefont {C.~R.}\ \bibnamefont {Gilson}},
  \ and\ \bibinfo {author} {\bibfnamefont {K.}~\bibnamefont {Hunter}},\
  }\href@noop {} {\bibfield  {journal} {\bibinfo  {journal} {Phys. Rev. A}\
  }\textbf {\bibinfo {volume} {65}},\ \bibinfo {pages} {052308} (\bibinfo
  {year} {2002})}\BibitemShut {NoStop}%
\bibitem [{\citenamefont {Kato}\ and\ \citenamefont
  {Hirota}(2003)}]{Kat-Hir-2003}%
  \BibitemOpen
  \bibfield  {author} {\bibinfo {author} {\bibfnamefont {K.}~\bibnamefont
  {Kato}}\ and\ \bibinfo {author} {\bibfnamefont {O.}~\bibnamefont {Hirota}},\
  }\href@noop {} {\bibfield  {journal} {\bibinfo  {journal} {IEEE Trans. Inf.
  Theory}\ }\textbf {\bibinfo {volume} {49}},\ \bibinfo {pages} {3312}
  (\bibinfo {year} {2003})}\BibitemShut {NoStop}%
\bibitem [{\citenamefont {Qiu}(2008)}]{Qiu-2008}%
  \BibitemOpen
  \bibfield  {author} {\bibinfo {author} {\bibfnamefont {D.}~\bibnamefont
  {Qiu}},\ }\href@noop {} {\bibfield  {journal} {\bibinfo  {journal} {Phys.
  Rev. A}\ }\textbf {\bibinfo {volume} {77}},\ \bibinfo {pages} {012328}
  (\bibinfo {year} {2008})}\BibitemShut {NoStop}%
\bibitem [{\citenamefont {Assalini}\ \emph {et~al.}(2010)\citenamefont
  {Assalini}, \citenamefont {Cariolaro},\ and\ \citenamefont
  {Pierobon}}]{Ass-Car-Pie-2010}%
  \BibitemOpen
  \bibfield  {author} {\bibinfo {author} {\bibfnamefont {A.}~\bibnamefont
  {Assalini}}, \bibinfo {author} {\bibfnamefont {G.}~\bibnamefont {Cariolaro}},
  \ and\ \bibinfo {author} {\bibfnamefont {G.}~\bibnamefont {Pierobon}},\
  }\href@noop {} {\bibfield  {journal} {\bibinfo  {journal} {Phys. Rev. A}\
  }\textbf {\bibinfo {volume} {81}},\ \bibinfo {pages} {012315} (\bibinfo
  {year} {2010})}\BibitemShut {NoStop}%
\bibitem [{\citenamefont {Nakahira}\ \emph
  {et~al.}(2015{\natexlab{b}})\citenamefont {Nakahira}, \citenamefont {Kato},\
  and\ \citenamefont {Usuda}}]{Nak-Kat-Usu-2015-numerical}%
  \BibitemOpen
  \bibfield  {author} {\bibinfo {author} {\bibfnamefont {K.}~\bibnamefont
  {Nakahira}}, \bibinfo {author} {\bibfnamefont {K.}~\bibnamefont {Kato}}, \
  and\ \bibinfo {author} {\bibfnamefont {T.~S.}\ \bibnamefont {Usuda}},\
  }\href@noop {} {\bibfield  {journal} {\bibinfo  {journal} {Phys. Rev. A}\
  }\textbf {\bibinfo {volume} {91}},\ \bibinfo {pages} {012318} (\bibinfo
  {year} {2015}{\natexlab{b}})}\BibitemShut {NoStop}%
\bibitem [{Note1()}]{Note1}%
  \BibitemOpen
  \bibinfo {note} {This assumption always holds if $\protect \mathcal
  {M}_{\protect \rm B}$ is the entire set of POVMs on $\protect \mathcal
  {H}_{\protect \rm B}$; otherwise, it does not hold in general. For example,
  if $\protect \mathcal {H}_{\protect \rm B}$ is a composite system and
  $\protect \mathcal {M}_{\protect \rm B}$ is the entire set of sequential
  measurements on $\protect \mathcal {H}_{\protect \rm B}$, then $\protect \{ g
  \circ \protect \mathaccentV {hat}05E{B}^{(\omega )}_m \protect \}_m$ might
  not be in $\protect \mathcal {M}_{\protect \rm B}$ in spite of $\protect \{
  \protect \mathaccentV {hat}05E{B}^{(\omega )}_m \protect \}_m \in \protect
  \mathcal {M}_{\protect \rm B}$. In such cases, we need to appropriately set
  the action of $\protect \mathcal {G}$ on $\protect \mathcal {S}_{\protect \rm
  B}$.}\BibitemShut {Stop}%
\bibitem [{\citenamefont {Osaki}\ \emph {et~al.}(1996)\citenamefont {Osaki},
  \citenamefont {Ban},\ and\ \citenamefont {Hirota}}]{Osa-Ban-Hir-1996}%
  \BibitemOpen
  \bibfield  {author} {\bibinfo {author} {\bibfnamefont {M.}~\bibnamefont
  {Osaki}}, \bibinfo {author} {\bibfnamefont {M.}~\bibnamefont {Ban}}, \ and\
  \bibinfo {author} {\bibfnamefont {O.}~\bibnamefont {Hirota}},\ }\href@noop {}
  {\bibfield  {journal} {\bibinfo  {journal} {Phys. Rev. A}\ }\textbf {\bibinfo
  {volume} {54}},\ \bibinfo {pages} {1691} (\bibinfo {year}
  {1996})}\BibitemShut {NoStop}%
\bibitem [{\citenamefont {Ekeland}\ and\ \citenamefont
  {Temam}(1999)}]{Eke-Tem-1999}%
  \BibitemOpen
  \bibfield  {author} {\bibinfo {author} {\bibfnamefont {I.}~\bibnamefont
  {Ekeland}}\ and\ \bibinfo {author} {\bibfnamefont {R.}~\bibnamefont
  {Temam}},\ }\href@noop {} {\emph {\bibinfo {title} {Convex analysis and
  variational problems}}},\ Vol.~\bibinfo {volume} {28}\ (\bibinfo  {publisher}
  {Siam},\ \bibinfo {address} {North Holland},\ \bibinfo {year}
  {1999})\BibitemShut {NoStop}%
\bibitem [{\citenamefont {Chitambar}\ and\ \citenamefont
  {Hsieh}(2013)}]{Chi-Hsi-2013}%
  \BibitemOpen
  \bibfield  {author} {\bibinfo {author} {\bibfnamefont {E.}~\bibnamefont
  {Chitambar}}\ and\ \bibinfo {author} {\bibfnamefont {M.-H.}\ \bibnamefont
  {Hsieh}},\ }\href@noop {} {\bibfield  {journal} {\bibinfo  {journal} {Phys.
  Rev. A}\ }\textbf {\bibinfo {volume} {88}},\ \bibinfo {pages} {020302}
  (\bibinfo {year} {2013})}\BibitemShut {NoStop}%
\bibitem [{\citenamefont {Chitambar}\ \emph {et~al.}(2014)\citenamefont
  {Chitambar}, \citenamefont {Hsieh},\ and\ \citenamefont
  {Duan}}]{Chi-Dua-Hsi-2014}%
  \BibitemOpen
  \bibfield  {author} {\bibinfo {author} {\bibfnamefont {E.}~\bibnamefont
  {Chitambar}}, \bibinfo {author} {\bibfnamefont {M.}~\bibnamefont {Hsieh}}, \
  and\ \bibinfo {author} {\bibfnamefont {R.}~\bibnamefont {Duan}},\ }\href@noop
  {} {\bibfield  {journal} {\bibinfo  {journal} {IEEE Trans. Inf. Theory}\
  }\textbf {\bibinfo {volume} {60}},\ \bibinfo {pages} {1549} (\bibinfo {year}
  {2014})}\BibitemShut {NoStop}%
\bibitem [{Note2()}]{Note2}%
  \BibitemOpen
  \bibinfo {note} {Our discussion in Sec.~\ref {sec:sym} can be used when
  considering a dihedral group with order 6, $\protect \mathcal {G}= \protect
  \{ p^k, p^k q \protect \}_{k \in \protect \mathcal {I}_3}$, which is
  generated by a rotation $p$ and a reflection $q$ with $pqp = q$. To be
  concrete, let $\protect \mathaccentV {hat}05E{V}_{p^kq^l} = \protect
  \mathaccentV {hat}05E{V}_{\protect \rm rot}^k \protect \mathaccentV
  {hat}05E{V}_{\protect \rm conj}^l$ for any $k \in \protect \mathcal {I}_3$
  and $l \in \protect \mathcal {I}_2$, and let $\protect \mathaccentV
  {hat}05E{U}_g = \protect \mathaccentV {hat}05E{V}_g \otimes \protect
  \mathaccentV {hat}05E{V}_g$; then, we can consider group actions of $\protect
  \mathcal {G}$. Note that double trine states also have the symmetry of
  $(\mathinner {|{0}\delimiter "526930B }\mathinner {\delimiter "426830A {0}|}
  - \mathinner {|{1}\delimiter "526930B }\mathinner {\delimiter "426830A
  {1}|})\mathinner {|{\phi _m}\delimiter "526930B } = \mathinner {|{\phi
  _{\kappa (m)}}\delimiter "526930B }$ ($\kappa (0) = 0$, $\kappa (1) = 2$, and
  $\kappa (2) = 1$); however, we do not need this symmetry to obtain their
  optimal sequential measurement.}\BibitemShut {Stop}%
\bibitem [{\citenamefont {Nakahira}\ \emph {et~al.}(2012)\citenamefont
  {Nakahira}, \citenamefont {Usuda},\ and\ \citenamefont
  {Kato}}]{Nak-Usu-Kat-2012-GUInc}%
  \BibitemOpen
  \bibfield  {author} {\bibinfo {author} {\bibfnamefont {K.}~\bibnamefont
  {Nakahira}}, \bibinfo {author} {\bibfnamefont {T.~S.}\ \bibnamefont {Usuda}},
  \ and\ \bibinfo {author} {\bibfnamefont {K.}~\bibnamefont {Kato}},\
  }\href@noop {} {\bibfield  {journal} {\bibinfo  {journal} {Phys. Rev. A}\
  }\textbf {\bibinfo {volume} {86}},\ \bibinfo {pages} {032316} (\bibinfo
  {year} {2012})}\BibitemShut {NoStop}%
\bibitem [{\citenamefont {Luenberger}(1969)}]{Lue-1969}%
  \BibitemOpen
  \bibfield  {author} {\bibinfo {author} {\bibfnamefont {D.~G.}\ \bibnamefont
  {Luenberger}},\ }\href@noop {} {\emph {\bibinfo {title} {Optimization by
  vector space methods}}}\ (\bibinfo  {publisher} {John Wiley \& Sons},\
  \bibinfo {year} {1969})\BibitemShut {NoStop}%
\bibitem [{\citenamefont {Dhara}\ and\ \citenamefont
  {Dutta}(2011)}]{Dha-Dut-2011}%
  \BibitemOpen
  \bibfield  {author} {\bibinfo {author} {\bibfnamefont {A.}~\bibnamefont
  {Dhara}}\ and\ \bibinfo {author} {\bibfnamefont {J.}~\bibnamefont {Dutta}},\
  }\href@noop {} {\emph {\bibinfo {title} {Optimality conditions in convex
  optimization: a finite-dimensional view}}}\ (\bibinfo  {publisher} {CRC
  Press},\ \bibinfo {year} {2011})\BibitemShut {NoStop}%
\bibitem [{\citenamefont {Chiribella}\ \emph {et~al.}(2007)\citenamefont
  {Chiribella}, \citenamefont {DfAriano},\ and\ \citenamefont
  {Schlingemann}}]{Chi-Dar-Sch-2007}%
  \BibitemOpen
  \bibfield  {author} {\bibinfo {author} {\bibfnamefont {G.}~\bibnamefont
  {Chiribella}}, \bibinfo {author} {\bibfnamefont {G.~M.}\ \bibnamefont
  {DfAriano}}, \ and\ \bibinfo {author} {\bibfnamefont {D.}~\bibnamefont
  {Schlingemann}},\ }\href@noop {} {\bibfield  {journal} {\bibinfo  {journal}
  {Phys. Rev. Lett.}\ }\textbf {\bibinfo {volume} {98}},\ \bibinfo {pages}
  {190403} (\bibinfo {year} {2007})}\BibitemShut {NoStop}%
\bibitem [{\citenamefont {Boyd}\ and\ \citenamefont
  {Vandenberghe}(2009)}]{Boy-2009}%
  \BibitemOpen
  \bibfield  {author} {\bibinfo {author} {\bibfnamefont {S.}~\bibnamefont
  {Boyd}}\ and\ \bibinfo {author} {\bibfnamefont {L.}~\bibnamefont
  {Vandenberghe}},\ }\href@noop {} {\emph {\bibinfo {title} {Convex
  optimization}}}\ (\bibinfo  {publisher} {Cambridge university press},\
  \bibinfo {address} {Cambridge},\ \bibinfo {year} {2009})\BibitemShut
  {NoStop}%
\bibitem [{\citenamefont {Krein}\ and\ \citenamefont
  {Milman}(1940)}]{Kre-Mil-1940}%
  \BibitemOpen
  \bibfield  {author} {\bibinfo {author} {\bibfnamefont {M.}~\bibnamefont
  {Krein}}\ and\ \bibinfo {author} {\bibfnamefont {D.}~\bibnamefont {Milman}},\
  }\href@noop {} {\bibfield  {journal} {\bibinfo  {journal} {Studia Math.}\
  }\textbf {\bibinfo {volume} {19}},\ \bibinfo {pages} {133} (\bibinfo {year}
  {1940})}\BibitemShut {NoStop}%
\bibitem [{\citenamefont {Deconinck}\ and\ \citenamefont
  {Terhal}(2010)}]{Dec-Ter-2010}%
  \BibitemOpen
  \bibfield  {author} {\bibinfo {author} {\bibfnamefont {M.~E.}\ \bibnamefont
  {Deconinck}}\ and\ \bibinfo {author} {\bibfnamefont {B.~M.}\ \bibnamefont
  {Terhal}},\ }\href@noop {} {\bibfield  {journal} {\bibinfo  {journal} {Phys.
  Rev. A}\ }\textbf {\bibinfo {volume} {81}},\ \bibinfo {pages} {062304}
  (\bibinfo {year} {2010})}\BibitemShut {NoStop}%
\bibitem [{\citenamefont {Rosati}\ \emph
  {et~al.}(2017{\natexlab{b}})\citenamefont {Rosati}, \citenamefont {De~Palma},
  \citenamefont {Mari},\ and\ \citenamefont
  {Giovannetti}}]{Ros-Mar-Gio-2017-binary}%
  \BibitemOpen
  \bibfield  {author} {\bibinfo {author} {\bibfnamefont {M.}~\bibnamefont
  {Rosati}}, \bibinfo {author} {\bibfnamefont {G.}~\bibnamefont {De~Palma}},
  \bibinfo {author} {\bibfnamefont {A.}~\bibnamefont {Mari}}, \ and\ \bibinfo
  {author} {\bibfnamefont {V.}~\bibnamefont {Giovannetti}},\ }\href@noop {}
  {\bibfield  {journal} {\bibinfo  {journal} {Phys. Rev. A}\ }\textbf {\bibinfo
  {volume} {95}},\ \bibinfo {pages} {042307} (\bibinfo {year}
  {2017}{\natexlab{b}})}\BibitemShut {NoStop}%
\end{thebibliography}
%

\end{document}